\newcommand*{\nom}[2]{#1\nomenclature{#1}{#2}}
\newcommand{\pvp}{\vec{p}{\kern 0.45mm}'}
\DeclarePairedDelimiter\bra{\langle}{\rvert}
\DeclarePairedDelimiter\ket{\lvert}{\rangle}
\DeclarePairedDelimiterX\braket[2]{\langle}{\rangle}{#1 \delimsize\vert #2}
\newcommand{\underflow}[2]{\underset{\kern-60mm \overbrace{#1} \kern-60mm}{#2}}
\def\Ind(#1){{{\tt Ind}({#1})}}
\def\Id{\mathrm{Id}}
\def\Tr{\mathrm{Tr}}
\def\im{\mathrm{im}}
\newcommand{\QMAo}{\textsf{QMA$_1$}}
\newcommand{\BQP}{\textsf{BQP}}
\newcommand{\NP}{\textsf{NP}}
\newcommand{\SharpP}{\textsf{\# P}}
\long\def\ignore#1{}
\newtheorem{theorem}{Theorem}
\newtheorem{corollary}[theorem]{Corollary}%[theorem]
\newtheorem{lemma}[theorem]{Lemma}
\newtheorem{prop}[theorem]{Proposition}
\newtheorem{definition}[theorem]{Definition}
\newtheorem{claim}[theorem]{Claim}
\newtheorem{remark}[theorem]{Remark}
\newenvironment{proof}
{\noindent {\bf Proof. }}
{{\hfill $\Box$}\\	\smallskip}
\newcommand{\onote}[1]{\authnote{Or}{#1}{green}}
\newcommand{\anote}[1]{\authnote{Andras}{#1}{red}}
\def\blx@maxline{77}
\title{
	On preparing ground states of gapped Hamiltonians:\\ 
	An efficient Quantum Lovász Local Lemma}
\author{András Gilyén\thanks{QuSoft, CWI and University of Amsterdam, the Netherlands. \texttt{gilyen@cwi.nl} }
	\and
	Or Sattath\thanks{The Hebrew University and MIT. \texttt{sattath@cs.huji.ac.il}}%
}
\date{\vspace{-12mm}}
\begin{document}
	
	\maketitle
	
	\providecommand{\trnorm}[1]{\left\lVert#1\right\rVert_1}
	\providecommand{\infnorm}[1]{\left\lVert#1\right\rVert_{\infty}}
	\providecommand{\spnorm}[1]{\left\lVert#1\right\rVert}
	\providecommand{\tr}[1]{\Tr\left[#1\right]}
	\providecommand{\sgn}[1]{\mathrm{sgn}\left(#1\right)}
	\providecommand{\rank}[1]{\mathrm{rank}\!\left(#1\right)}

	\begin{abstract}
		A frustration-free local Hamiltonian has the property that its ground state minimises the energy of all local terms simultaneously. In general, even deciding whether a Hamiltonian is frustration-free is a hard task, as it is closely related to the $\QMAo$-complete quantum satisfiability problem (QSAT) -- the quantum analogue of SAT, which is the archetypal $\NP$-complete problem in classical computer science. This connection shows that the frustration-free property is not only relevant to physics but also to computer science. 

		The Quantum Lovász Local Lemma (QLLL) provides a sufficient condition for frustration-freeness. A natural question is whether there is an efficient way to prepare a frustration-free state under the conditions of the QLLL. Previous results showed that the answer is positive if all local terms commute.    
		
		In this work we improve on the previous constructive results by designing an algorithm that works efficiently for non-commuting terms as well, assuming that the system is ``uniformly" gapped, by which we mean that the system \emph{and all its subsystems} have an inverse polynomial energy gap. Also, our analysis works under the most general condition for the QLLL, known as Shearer's bound. Similarly to the previous results, our algorithm has the charming feature that it uses only local measurement operations corresponding to the local Hamiltonian terms.
	\end{abstract}
	\onote{To do:		
		-disipation - verstrate 
		
		For next version:
        - Add a simple proof using the compression argument.
        - Marriott-Watrous kind of proof for the exact unitary version, instead of the dimension counting argument.

        - Or: I would still like to understand better the algorithm which picks a flaw (either randomly, or by the natural order), measure it, and if it is unhappy, resample it, and repeat this process indefinitely. Isn't it true that the probability of seeing a log with m failures is $p^m$? It's unclear what would be the generalisation of the key lemma: what could we say about $\rho_L$? Andras: Do you have any insight?
        
 - Is there an "adiabatic algorithm" variant? Perhaps it could explain the running time: that the running time of the adiabatic algorithm is indeed a function of the uniform gap. 
	}
	
	\section{Introduction}
	
	\paragraph{Frustration-free Hamiltonians and quantum satisfiability.} 
    Most physical systems and models are described by a local Hamiltonian $H=\sum_i{H_i}$ where each k-local term $H_i$ acts non-trivially only on at most $k$ of its subsystems. 
    %Many systems of interest are frustration-free: their ground state is also the ground state of every $H_i$.
    Such a Hamiltonian is called \emph{\nom{frustration-free}{A local Hamiltonian in which the ground state is also the ground state of each of the local terms}} if its ground state is also the ground state of each of the local terms $H_i$. Frustration-free Hamiltonians appear in various areas, for example: quantum error correcting codes~\cite{Gottesman96class}, parent Hamitlonians for PEPS (a 2-D generalisation of matrix-product-states)~\cite{perez08PEPS}, and various models in many-body quantum physics.  
    
    An equivalent way to ask whether a Hamiltonian $H$ is frustration-free is whether $H'=\sum_i \Pi_i$ is frustration-free, where $\Pi_i$ is the projector on the excited states of $H_i$. The quantum satisfiability problem (QSAT)~\footnote{For technical  reasons, that would not be relevant for this work, there is a promise that if $H'$ is not frustration-free, the minimal energy of $H'$ is at least inverse polynomial in the number of qubits.} is to determine whether $H'$ in the above form is frustration-free. QSAT is $\QMAo$-Complete~\cite{bravyi2011efficient}, and therefore intractable in general even for quantum computers (unless $\BQP = \QMAo$). Finding the ground state of frustration-free Hamiltonians -- the challenge we tackle in this work -- is, in general, an even harder task. 
    
    %We call a Hamiltonian acting on $n$ qubits {\em local}, if $H=\sum_i H_i$, where each term $H_i$ acts non-trivially only on a few qubits. We say that $H$ is {\em frustration-free} if there is a ground state of $H$ that is also a ground state of all $H_i$. $H$ is frustration-free if and only if $H'=\sum_i \Pi_i$ is, where $\Pi_i$ is an orthogonal projection to the subspace of excited states of $H_i$, i.e., $\Id-\Pi_i$ is the orthogonal projection to the ground space of $H_i$. The quantum satisfiability problem (QSAT) is to decide whether $H'$ is frustration-free.\footnote{For technical  reasons, that would not be relevant for this work, there is a promise that if $H'$ is not frustration-free, the minimal energy of $H'$ is at least inverse polynomial in the system size.} This problem is known to be $\QMAo$-Complete~\cite{bravyi2011efficient}, and therefore in general it cannot be solved efficiently even on a quantum computer (unless $\BQP = \QMAo$). Finding the ground state is an even harder task in general.
	
	%The (spectral) gap of a Hamiltonian $H$, denoted by $\Delta(H)$ is the energy difference between its two lowest energy levels. The {\em uniform gap} of a local Hamiltonian $H=\sum_{i=1}^m H_i$ is defined by
	%\begin{equation}
	%\gamma(H) := \min_{S \subseteq [m]} \Delta\left(\sum_{i \in S} H_i\right).
	%\label{eq:def_gamma}
	%\end{equation}
	
	\paragraph{The Classical and Quantum Lovász Local Lemma.} 
	We would like to understand the QSAT problem, so it is natural to first look at the classical SAT and the techniques that were useful in studying it. A ``local" version of SAT is called $k$-SAT which asks whether a Boolean formula of the following form can be satisfied: $\bigwedge_{i\in [m]}C_i$, where each $C_i$ is a {\em clause } containing the or ($\bigvee$) function of $k$ Boolean variables or their negation. 
	
	A natural question is, when can we be sure that a satisfying assignment exists?
	Since each $k$-SAT constraint excludes a $p=2^{-k}$ fraction of assignments, $pm<1$ is a sufficient condition (by the union bound). If we have the additional information that none of the constraints share variables, then it is clearly satisfiable. What can we say in the intermediate regime, where each constraint shares variables with at most $d$ constraints (including itself)? The (symmetric) Lovász Local Lemma~\cite{Erdos73,AlonS92,Szegedy13}, applied to this setting, implies that $p d e \leq 1$ is a sufficient condition for satisfiability. Shearer generalised the Lovász Local Lemma and showed the weakest possible sufficient condition in this framework~\cite{Shearer}. 
	
	How hard is it to find such a satisfying assignment? A series of works~\cite{beck91a,MoserOrig,MoserTardos,KolipakaSzegedy} have culminated in an efficient constructive algorithm, even under Shearer's condition.  
	
	It is natural to ask the analogous questions in the quantum setting, where the Boolean variables are replaced by qubits and the clauses by rank-1 $k$-local projectors. The resemblance between k-SAT clauses and rank-1 projector is the following: a k-SAT clause excludes one out of the $2^k$ possible configurations of the relevant variables, whereas a rank-1 k-local projector excludes one dimension out of the $2^k$ relevant dimensions. In the quantum setting it makes sense to generalise and consider rank-$r$ projectors. So given a set of $k$-local rank-$r$ projectors acting on $n$ qubits, under what conditions can we guarantee that the system is frustration-free? A ``dimension-counting" argument can be used to show that the Lovász condition ($2^{-k}r d e \leq 1$)~\cite{LovAmb} is indeed sufficient, as is Shearer's condition~\cite{SattathLatice}.
	
	Is there an algorithm which efficiently prepares a ground state under these conditions? In the past, such constructions have been achieved only for commuting Hamiltonians, i.e. $[\Pi_i,\Pi_j]=0$ for all $i,j$. Commuting Hamiltonians are somewhat ``half-way" between classical and quantum. For example, the commuting 2-local Hamiltonian problem is in (the purely classical class) $\NP$ for qudits of all dimensions~\cite{bravyi05commutative}, whereas $2$-local QSAT is $\QMAo$-Complete if the dimension of the qudits is large enough~\cite{aharonov09power}. Yet commuting Hamiltonians, such as the toric code, can have the striking quantum property of topological order~\cite{KitaevToric}. Also, the commuting existential QLLL is a direct corollary of the (classical) existential LLL.
	
	The analysis of the previous algorithms~\cite{SchwarzInfo,SattathSymm}, that worked only in the commuting case, used a compression argument, while requiring the symmetric Lovász condition. There was another attempt~\cite{CubittBackwards} to prove a stronger version inspired by the classical ``backward-looking" analysis of Moser and Tardos~\cite{MoserTardos,KolipakaSzegedy}. However, as noted on the arXiv~\cite{CubittBackwards}, there is an unresolved gap in the proof of the main result, due to an issue of non-commutativity of subsequent resamplings. (The resampling operation is to replace some (qu)bits with uniformly random (qu)bits.) In fact, on the classical side, Kolmogorov~\cite{KolmogorovComm} argued that this kind of ``backward-looking" analysis requires a sort of commutation property for resamplings, which, as we show in  Appendix~\ref{apx:commExampla}, fails in general for the quantum case, even when the projectors commute. 
	
	Recently Harvey and Vondrák~\cite{HarveyVondrak15} introduced a classical ``forward-looking" analysis technique which gives slightly worse bounds on the expected number of resamplings, but works in a more general framework and requires only Shearer's condition. This framework is quite flexible and allowed us to transform the results to the quantum setting as well as addressing the non-commuting case.
	
	\paragraph{The gap constraint.}	
	The gap of a Hamiltonian -- the energy difference between its (distinct) lowest energy levels --  denoted $\Delta(H)$, plays an important role both in physics and computer science, particularly in Hamiltonian complexity theory, see, e.g.,  ~\cite{hastings07area,farhi00quantum,aharonov08adiabatic,cubitt15undecidability}. 
	Suppose $H=\sum_{i\in[m]}\Pi_i$, then we call the uniform gap of $H$: (For $H'=0$ we define $\Delta(H')=\infty$.)
	\begin{equation} \label{eq:def_gamma}
	\gamma(H) := \min_{S \subseteq [m]} \Delta\left(\sum_{i \in S}  \Pi_i\right).
	\end{equation} 
	This notion of uniform gap plays an important role in another recent state preparation algorithm \cite{GeMolnar}.
	The running time of our algorithm has inverse quadratic dependence on the uniform gap. 

	\paragraph{Our contribution.}
	In this work we prove a constructive Quantum Lovász Local Lemma for \emph{non-commuting} projectors.
	We will only consider systems of $n$ qubits, but all the results generalise trivially to qudits. 	
	Our improvements are due to two main new ingredients. 
	%\anote{Improve} 
	
	The first ingredient is the adaptation of the ``forward-looking" analysis technique~\cite{HarveyVondrak15} to the quantum setting.
	This analysis technique enables us to go beyond the symmetric Lovász condition, and prove efficiency under Shearer's (weaker) condition. 
	Technically, for this to work, we need to bound not only the probabilities of some bad events, but show that the quantum state is bounded above by a uniformly mixed state in the satisfying subspace (see Definition~\ref{def:progressive_channel},  and Lemma~\ref{lemma:NonComm}). 
       
    The second technique is the use of weak measurements coupled with a quantum Zeno effect. We consider this to be the main contribution, and devote the next two subsections to explain its purpose and behaviour in our context. 
    
    In the next arXiv version of this manuscript, we will also present an alternative algorithm for the non-commuting case working under the symmetric Lovász condition. The analysis of that algorithm is somewhat simpler and also gives improved bounds on the  running time, as it is based on the entropy compression argument of~\cite{SchwarzInfo,SattathSymm}. 
	
	\paragraph{Loop invariant.} The classical analysis techniques we mentioned differ in how they prove bounds on the runtime of the constructive algorithms. However the basic idea for proving correctness is the same for the corresponding algorithms. The initial state is a uniformly random state. The algorithm starts without knowing which constraints are satisfied, and its goal is to enlarge that list of satisfied constraints. It checks whether a constraint is satisfied: if it is satisfied, it is added to the list. Otherwise, it uses a recovery procedure: it resamples all the variables involved in that constraint, and removes all the constraints from the list that may have been affected by the resampling. The algorithm terminates when the list is complete, and thus a satisfying assignment is found.
	
	This correctness proof works in the classical case and carries through for the commuting quantum case, but fails in the non-commuting case. The main problem is with the loop invariant: if a set of constraints $S$ was satisfied, and then another constraint projector $\Pi_f$ was checked (i.e., measured) and found to be satisfied, the post-measurement state does not necessarily satisfy all the constraints in $S$ (to be more specific, the constraints in $S$ that share qubits with $\Pi_f$), because of the collapse caused by the measurement.  
	%One strategy that could be applied is to check after a successful measurement that the post-measurement state satisfies $S \cup \{\Pi_f\}$ simultaneously, using a non-local measurement. But what if it does not? Due to the non-locality, there is no obvious recovery procedure other than resampling all qubits.
	
	\paragraph{Weak measurements.} We get around the  difficulty of maintaining the loop invariant by using a kind of \emph{weak measurement} and the quantum Zeno effect.
	This approach is somewhat similar to the ideas described in \cite{PlatoCave}.
	%This technique is known as a protective measurement~\cite{Aharonov93}. 
	Instead of measuring whether $\Pi_f$ is satisfied, we repeat the following many times: we perform a weak measurement (as explained below) to find whether $\Pi_f$ is satisfied. If it is not, we just apply the usual resampling step. If it is satisfied, we (strongly) measure  whether all constraints in $S$ are simultaneously satisfied. If they are not then we abort, and repeat otherwise. When the loop ends, we measure whether $S \cup \{\Pi_f\}$ is simultaneously satisfied, and abort if they are not. 
	
	By tuning the weak measurement parameter and the number of repetitions, we can control and reduce the probability of aborting in this procedure. Therefore, the two probable outcomes are that we either end up with adding $\Pi_f$ to the set of satisfied clauses, or  we use the same recovery procedure, that worked in previous cases, and still works in the non-commuting case.
	
	One may wonder: if the probability of abort is kept small, are these measurements really necessary? Yes -- similar to the ``hot pot never boils" phenomenon, and the quantum Zeno effect, even though the outcome of the measurement is known with very high probability in advance, the measurement changes the overall state dramatically when applied frequently. 
	
	Now we explain what we mean by a weak measurement, and how it can be combined with the quantum Zeno effect. Consider the two-outcome measurement $\{\Pi_f, \Id-\Pi_f\}$.	We can implement a weak measurement on $\ket{\psi}$ with intensity parameter $\theta$ using an ancilla qubit and a $\Pi_f$-controlled rotation 
	
	\begin{equation}\label{eq:cRot}
	\Pi_f^\theta=\Pi_f\otimes R^\theta+(\Id-\Pi_f)\otimes \Id \text{ , where } 
	R^\theta=\left(\begin{array}{cc}
	\sqrt{1-\theta} & -\sqrt{\theta}  \\ 
	\sqrt{\theta} & \sqrt{1-\theta}
	\end{array}\right) .
	\end{equation}
	We simply apply $\Pi_f^\theta$ on $\ket{\psi}\otimes\ket{0}$ and do a projective measurement on the ancilla qubit.
	Let us denote by $\ket{\psi_1}=\sqrt{\theta}\Pi_f\ket{\psi}$ the (unnormalised) state corresponding to outcome $1$. So $\ket{\psi_1}\propto\Pi_f\ket{\psi}$ just as we expect from a projective (strong) measurement.
	Similarly let $\ket{\psi_0}=(\Id-\Pi_f)\ket{\psi}+ \sqrt{1-\theta}\Pi_f\ket{\psi}\approx\ket{\psi}-(\theta/2)\Pi_f\ket{\psi}$ denote the (unnormalised) state corresponding to outcome $0$.
	
	Suppose $\ket{\psi}=\Pi_V \ket{\psi}$ for some other orthogonal projector $\Pi_V$. The probability of measuring $0$ on the ancilla qubit and finding the state outside the support of $\Pi_V$ has probability 
	$\spnorm{\left(\Id-\Pi_V\right)\ket{\psi_0}}^2\approx \spnorm{\left(\Id-\Pi_V\right)\left(\ket{\psi}\!-\!(\theta/2)\Pi_f\ket{\psi}\right)}^2
	=\spnorm{\left(\Id-\Pi_V\right)(\theta/2)\Pi_f\ket{\psi}}^2\leq \theta^2\spnorm{\Pi_f\ket{\psi}}^2$. 
	
	On the other hand we can argue, that conditioned on never seeing a $1$ outcome, $\sim 1/\theta$ repetitions of the weak and strong measurements dissipate the overlap of $\ket{\psi}$ with $\Pi_f$ almost completely, while leaving the part lying in $\im\left(\Pi_V\right)\cap\ker\left(\Pi_f\right)$ undisturbed.	
	Observe that the overall probability of error, i.e., moving out of the support of $\Pi_V$, is at most $\sim \theta$.
	Analogously to the quantum Zeno effect, setting $\theta$ small enough we can go below any desired error probability. This argument lies at the heart of the proof. 
	
	In some sense our error bound is even stronger than in the usual quantum Zeno effect: the probability of moving out of the support of $\Pi_V$ is proportional to $\left\lVert\Pi_f\ket{\psi}\right\rVert^2$, so the smaller the overlap with $\Pi_f$ gets, the smaller the error probability becomes. Because of this we can show that the overall probability of this error is bounded by $\theta$ independent of the number of repetitions.
	
	\paragraph{The algorithm.} The above argument shows that by using weak enough measurements in the following algorithm, we can ensure a high probability of termination with ``SUCCESS'' if we can bound the number of repetitions of the main while loop. This bound is obtained by using the techniques of \cite{HarveyVondrak15}. The following algorithm maintains a list of already checked projectors $C$. We denote by $\Pi^C$ the orthogonal projection having kernel equal to the intersection of the kernels of all the projectors in $C$.  (Note that $C$ and thus $\Pi^C$ changes during the algorithm.)
	\begin{algorithm}[H]
		\caption{High-level description of the main algorithm}\label{alg:symm}
		\begin{algorithmic}[1]
			\STATE {\bf input} constraints $\{\Pi_f\}_{f\in F}$
			\STATE set all qubits to the maximally mixed state, and mark all constraints as unchecked.
			\STATE {\bf while} there is a $\Pi_f$ which is unchecked {\bf do}
			\STATE ~~~ {\bf repeat} $T$ times {\bf do}
			\STATE ~~~~~~ measure $\Pi_f$ weakly
			\STATE ~~~~~~  {\bf if} the measurement found $\Pi_f$ violated {\bf then}
			\STATE ~~~~~~~~~  resample all qubits of $\Pi_f$, i.e., replace them by uniformly random qubits
			\STATE ~~~~~~~~~  mark all constraints $\Pi_{f'}$ that share qubits with $\Pi_{f}$ as unchecked
			\STATE ~~~~~~~~~  go back to the beginning of the main while loop
			\STATE ~~~~~~  {\bf end if}
			\STATE ~~~~~~ measure $\Pi^C$ {\bf if} it is violated 
			{\bf then terminate with } 	$\!\!$``ERR: MEASURE WEAKER"\kern-10mm
			\STATE ~~~ {\bf end repeat}				
			\STATE ~~~ mark $\Pi_f$ as checked
			\STATE ~~~ measure $\Pi^C$ {\bf if} it is violated 
			{\bf then terminate with } 	$\!\!$``ERR: USE LARGER T"				
			\STATE {\bf end while}
			\STATE {\bf terminate with} ``SUCCESS" 		
		\end{algorithmic}
	\end{algorithm}
	
	We develop an approximate version of the above algorithm which overcomes the need for using non-local operators $\Pi^C$.
	The only quantum operations that our approximate algorithm uses are (weak and strong) measurements of the projectors and resampling of qubits.
    As we show in Corollary~\ref{cor:SLCRuntime}, under the symmetric Lovász condition, the runtime of our quantum algorithm is
	$\tilde{\mathcal{O}}\left(\frac{m^3\cdot n^2}{\gamma^2}\cdot
	\ln\left(\frac{1}{\delta}\right)\cdot\ln\left(\frac{1}{\epsilon}\right)\right)$,\footnote{By $\tilde{\mathcal{O}}(t)$ we mean $\mathcal{O}(t\cdot \text{poly(log}(t)))$, moreover here the poly is actually quadratic. } where $n$ is the number of qubits, $m$ is the number of projectors, $\gamma$ is the uniform gap (see \eqref{eq:def_gamma}),
	$\delta$ is the desired maximum trace distance from a density operator which is supported on the ground space,
	and $\epsilon$ is the desired upper bound on the probability of termination with ``ERROR".
    The exact formula for the runtime bound we prove in the general Shearer case is more complicated, but it is easy to compare to the classical case. 
	Let $R_c$ be the upper bound of \cite{KolipakaSzegedy} on the expected number of resamplings of the classical Moser-Tardos algorithm. 
	%with respect to these probabilities For a rank-$r$ $k$-local projector $\Pi_f$ let us define its ``probabilty" as $p_f=r\cdot 2^{-k}$, and 
	Then, our quantum algorithm has runtime 
	$\tilde{\mathcal{O}}\left(\frac{R_c^2n^2m^2}{\gamma^2}\log\left(\frac{1}{\delta}\right)\cdot\log\left(\frac{1}{\epsilon}\right)\right)$, see Corollary~\ref{cor:boosted}.  \anote{It can be improved by $n^2$ using the compression argument, and I thought it can be improved by $m$ using phase-estimation, but now I do not think so. For phase estimation Hamiltonian simulation is needed. But our Hamiltonian has sparsity $\sim m$ and also norm $\sim m$, yielding an $\sim m^2$ cost. The corresponding cost coming from weak measurements is $\sim m$ because the strong measurement requiring a lot of local measurements, and there is a $\sim R$ factor because of the weakness parameter. This is actually smaller for the SLC case, but might be a bit higher in other cases.}
	
	\paragraph{New existential proof} Our work does not require any of the previous existential proofs, and therefore provides an alternative proof for the results in~\cite{LovAmb} and~\cite{SattathLatice}, see Corrolary~\ref{cor:existential}.
    
    \paragraph{Structure of the paper}
	In Section~\ref{sec:Definitions} we list the most important definitions and the notations that we use throughout the paper.

	In Section~\ref{sec:forward} we describe our generalisation of the projective measurement step from the commuting \cite{SchwarzInfo,SattathSymm} to the non-commuting case in terms of quantum channels.
	In \ref{subsec:progressMeas} we describe our loop-invariants which define progress in terms of subspaces. 
	In \ref{subsec:idealChannel} we define our ideal quantum channel introduced for the non-commuting case.
	Since we cannot implement this ideal operation efficiently we describe more realistic requirements for a progressive 
	quantum channel~\ref{subsec:progressChannel} which are sufficient for the ``forward-looking" analysis technique to work. 
	In \ref{subsec:keyLemma} we prove the key Lemma for progressive quantum channels and in \ref{subsec:upperBounds} we use the key Lemma to prove that the resampling algorithm (Alg.~\ref{alg:flawSelect}) works efficiently
	under Shearer's condition. In \ref{subsec:diffVersions} we draw the conclusions for various scenarios, while in \ref{subsec:runtimeCompare} we compare the results with classical algorithms.
	
	In Section~\ref{sec:efficient} we show how to efficiently implement a progressive quantum channel using only weak and strong measurements, and how to implement approximate versions of the required measurement operators. Finally we put all pieces together to prove the main theorem on the runtime of our quantum algorithm, 
	while in Section~\ref{sec:discussion} we consider a possible generalisation.
	\section{Definitions and notation}\label{sec:Definitions}
	
	In this work, for simplicity, we focus on the case of $2$-level systems (qubits), but all the results in this work apply equally well for $d$-level systems (qudits).
	Also, in order to maintain convenient formulation, pure states such as $\ket{\psi}$ and mixed states such as $\rho$ will not necessarily be normalised.
	
	\begin{definition} (Hilbert space of the qubits)
		Let \nom{$n$}{The number of qubits in the system} denote the number of qubits and let $N=2^n$, so that the \textbf{Hilbert space} of the quantum system is \nomenclature{$\mathcal{H}$}{The Hilbert space of the system}$\mathcal{H}=\mathbb{C}^N$. The qubits are labelled with elements from $[n]$. For $A\subseteq [n]$ let $\mathcal{H}_A$ denote the Hilbert space of the qubits in $A$ and $\Id_A$ denote the identity operator on this space.
	\end{definition}

	For the rest of the paper we are going to refer to the generalised ``quantum clauses", i.e., our projectors and their image, as flaws that we want to avoid. 
	
	\begin{definition} (Flaws as local projectors and the assigned probabilities)
		Let $F$ be a \textbf{set of }(labels of)\textbf{ flaws} and for each $f\in F$ let $\Pi_f$ be an orthogonal projector on $n$ qubits representing a $\{0,1\}$-valued \textbf{binary measurement operator} which indicates presence of flaw $f$ with measurement outcome~$1$. Later we also use labels $\{G,B\}$ for the measurement outcomes corresponding to ``Good" (i.e., $0$) and ``Bad" (i.e., $1$) outcomes respectively. 
		
		%We further assume that these projectors all commute with each other.
		For all $f\in F$ there is a given subset of the qubits $b(f)\subseteq[n]$, such that the projector $\Pi_f$ acts trivially on $[n]\setminus b(f)$. 
		For $S\subseteq F$ we extend this notation by defining $b(S):=\bigcup_{f\in S} b(f)$. 
		
		Let $\Pi^{loc}_f$ denote $\Pi_f$ restricted to $b(f)$, so that we can write $\Pi_f=\Pi^{loc}_f\otimes \Id_{[n]\setminus b(f)}$.
		Let $p_f=\Tr(\Pi_f)/2^n=\Tr(\Pi_f^{loc})/2^{|b(f)|}$, and for $S\subseteq F$ let $p_S=\prod_{f\in S}p_f$. Note that $p_f$ is the \textbf{probability of measurement outcome} $1$ for $\Pi_f$ on a maximally mixed state.
	\end{definition}
	
	\begin{definition} (Dependency graph)
		Let us define the dependency graph $G=(F,E)$, where $E=\{\{f,f'\}:f\neq f' \text{ and }b(f)\cap b(f')\neq \varnothing \}$.
		For $f\in F$ let $\Gamma(f)=\{f'\in F: \{f,f'\}\in E\}$ denote the set of other flaws that overlap with $f$ and $\Gamma^+(f):=\Gamma(f)\cup \{f\}$. 	
		Similarly for $S\subseteq F$ let $\Gamma(S)=\bigcup_{f\in S} \Gamma(f)$ and $\Gamma^+(S)=\bigcup_{f\in S} \Gamma^+(f)$. Finally let $\Ind(F)=\{I\subseteq F: \Gamma(I)\cap I = \varnothing \}$ denote the set of independent sets of $G$.
	\end{definition}

	\begin{definition}\label{def:IndepPoly} (Independent set polynomial)
		Consider a vector of numbers $(x_f)_{f\in F}$. For every $I \in \Ind(F)$ we define the polynomial $q_I$ in the variables $(x_f)$  as follows:
		\begin{equation}
		q_I\left(x_f\right)=\sum_{S\in\Ind(F):\, I\subseteq S} (-1)^{|S|-|I|}\prod_{f\in S}x_f .
		\end{equation}
	\end{definition}

	\noindent Let $e=2.718\ldots$ denote the base of the natural logarithm in the following definition:
	\begin{definition}\label{def:Conditions} (Conditions)
		The vector of probabilities $(p_f)_{f\in F}$ is said to satisfy the 
		
		\noindent\textbf{Symmetric Lovász condition (SLC)}  if
		\begin{equation}\label{eq:SLC}
		\exists d\in\mathbb{N} \text{ s.t. } \forall f\in F\!:\, |\Gamma^+(f)|\leq d \text{ and } p_f\leq \frac{1}{d\cdot e} \tag{SLC}
		\end{equation}    
		
		\noindent\textbf{General Lovász condition (GLC)} if
		\begin{equation}\label{eq:ALC}
		\exists (x_f)_{f\in F} \text{ s.t. } \forall f\in F\!:\, x_f\in (0,1) \text{ and } \frac{p_f}{x_f}\leq \prod_{f'\in \Gamma(f)}(1-x_{f'}) \tag{GLC}
		\end{equation}
		
		\noindent\textbf{Cluster expansion condition (CEC)} if
		\begin{equation}\label{eq:CEC}
		\exists (y_f)_{f\in F} \text{ s.t. } \forall f\in F\!:\, y_f> 0 \text{ and } \frac{y_f}{p_f}\geq 
		\sum_{\underset{J \in \Ind(F)}{J\subseteq \Gamma^+(f)}}\prod_{f'\in J}y_{f'} \tag{CEC}
		\end{equation}
		
		\noindent\textbf{Shearer's condition (SHC)} if
		\begin{equation}\label{eq:SHC}
		q_\varnothing(p_f)>0 \text{ and }\, \forall I\in \Ind(F):\, q_I(p_f)\geq 0 \tag{SHC}
		\end{equation}
	\end{definition}
    
    Evaluating the Independent set polynomial is $\SharpP$-hard~\cite{Hoffmann10}. Nevertheless, due to the importance of the condition~\eqref{eq:SHC} to  repulsive lattice gas models in many-body physics~\cite{Scott2005}, it is well understood for many lattices~\cite{heilmann1972,Baxter1980,Gaunt1965,Gaunt1967,todo1999transfer}. The existential proof showing that the condition~\eqref{eq:SHC} implies frustration-freeness has been used to prove frustration-freeness of QSAT instances with various lattice topologies, and to derive new numerical lower-bounds on the SAT/UNSAT transition of quantum satisfiability on random Erdős–Rényi models~\cite{SattathLatice}.
	
	\begin{prop}\label{prop:ShearerImplied} %(Implications) 
		\eqref{eq:SLC} implies \eqref{eq:ALC}. Also
		each of \eqref{eq:ALC} and \eqref{eq:CEC} implies \eqref{eq:SHC}. 
	\end{prop}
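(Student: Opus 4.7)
The plan is to take the uniform choice $x_f := 1/d$ for every $f \in F$, which clearly lies in $(0,1)$. The GLC inequality then reads $d \cdot p_f \leq (1-1/d)^{|\Gamma(f)|}$. Since $|\Gamma(f)| \leq |\Gamma^+(f)|-1 \leq d-1$, the right-hand side is at least $(1-1/d)^{d-1} \geq 1/e$ by the standard inequality, while the SLC assumption $p_f \leq 1/(d\, e)$ makes the left-hand side at most $1/e$. So GLC holds with the constant weight vector.

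\textbf{GLC $\Rightarrow$ CEC (and hence GLC $\Rightarrow$ SHC via the next step).} I would use the change of variables $y_f := x_f/(1-x_f) > 0$, so that $1 - x_f = 1/(1 + y_f)$ and $x_f = y_f/(1+y_f)$. The GLC inequality then becomes $p_f \leq y_f \big/ \prod_{f' \in \Gamma^+(f)}(1 + y_{f'})$, equivalently $y_f/p_f \geq \prod_{f' \in \Gamma^+(f)}(1 + y_{f'}) = \sum_{S \subseteq \Gamma^+(f)} \prod_{f' \in S} y_{f'}$. Discarding the non-independent subsets from this sum only makes it smaller (since all $y_{f'} > 0$), so $y_f/p_f \geq \sum_{J \subseteq \Gamma^+(f),\, J \in \Ind(F)} \prod_{f' \in J} y_{f'}$, which is exactly CEC.

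\textbf{CEC $\Rightarrow$ SHC.} This is the substantive step. I would first translate SHC into positivity of the independence polynomial $Z_A(y) := \sum_{I \subseteq A,\, I \in \Ind(F)} \prod_{f \in I} y_f$ by checking the identity $q_I(p_f) = p_I \cdot Z_{F \setminus \Gamma^+(I)}(-p_f)$ for every $I \in \Ind(F)$: split each $S \supseteq I$ in the definition of $q_I$ as the disjoint union $I \sqcup T$ with $T \subseteq F \setminus \Gamma^+(I)$ independent. Thus SHC reduces to showing $Z_A(-p_f) > 0$ for every $A \subseteq F$. The plan is induction on $|A|$ via the deletion--contraction recursion $Z_A(-p_f) = Z_{A \setminus \{f\}}(-p_f) - p_f \cdot Z_{A \setminus \Gamma^+(f)}(-p_f)$ for any $f \in A$; positivity of $Z_A$ then follows from a sufficiently tight upper bound on the ratio $Z_{A \setminus \Gamma^+(f)}(-p_f) / Z_{A \setminus \{f\}}(-p_f)$.

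The hard part will be strengthening the induction hypothesis enough for CEC to close it. Under GLC the naive invariant $Z_{A \setminus \{g\}}(-p_f)/Z_A(-p_f) \leq 1 + y_g$ would already suffice, since iterating it along a peeling order of $\Gamma(f) \cap A$ telescopes the ratio into $\prod_{g \in \Gamma(f) \cap A}(1+y_g)$, which GLC controls by $y_f/p_f$. Under the weaker CEC this telescoped product is too crude, and I would instead carry the refined invariant in which the product is replaced by a sum restricted to independent subsets, namely $Z_{A \setminus \Gamma^+(f)}(-p_f)/Z_{A \setminus \{f\}}(-p_f) \leq \sum_{J \subseteq \Gamma(f) \cap A,\, J \in \Ind(F)} \prod_{f' \in J} y_{f'}$, which is precisely the quantity that CEC bounds by $y_f/p_f - y_f$. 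Establishing this refined invariant is the technical crux and requires the cluster-expansion bookkeeping of Bissacot--Fern\'andez--Procacci--Scoppola; once it is in place, substituting it into the recursion gives $Z_A(-p_f) > 0$, closing the induction and yielding SHC.
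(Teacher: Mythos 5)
Your SLC $\Rightarrow$ GLC step is identical to the paper's: set $x_f=1/d$ and use $1/e\leq(1-1/d)^{d-1}$. For the second half you take a route the paper does not spell out: the paper simply cites Harvey--Vondr\'ak (Cor.~5.37 and~5.42 of the arXiv version) for GLC $\Rightarrow$ SHC and CEC $\Rightarrow$ SHC separately, whereas you prove GLC $\Rightarrow$ CEC directly by the change of variables $y_f=x_f/(1-x_f)$ and then discarding non-independent subsets from the binomial expansion of $\prod_{f'\in\Gamma^+(f)}(1+y_{f'})$. That reduction is correct, clean, and a genuine addition; it collapses the two citations into a single outstanding implication CEC $\Rightarrow$ SHC.

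For CEC $\Rightarrow$ SHC, your setup is right: the identity $q_I(p_f)=p_I\cdot Z_{F\setminus\Gamma^+(I)}(-p_f)$ and the deletion--contraction recursion both check out. But there is a genuine gap in the closing step. Suppose you had the ratio invariant $Z_{A\setminus\Gamma^+(f)}(-p)/Z_{A\setminus\{f\}}(-p)\leq\sum_{J\subseteq\Gamma(f)\cap A,\,J\in\Ind(F)}\prod_{f'\in J}y_{f'}$. Combined with CEC, this gives $Z_A(-p)\geq Z_{A\setminus\{f\}}(-p)\bigl(1-p_f(y_f/p_f-y_f)\bigr)=Z_{A\setminus\{f\}}(-p)\bigl(1-y_f(1-p_f)\bigr)$, and the factor $1-y_f(1-p_f)$ is \emph{not} guaranteed to be positive: CEC only forces $p_f\leq y_f/(1+y_f)$, which gives $y_f(1-p_f)\geq p_f$, a lower bound rather than the upper bound $y_f(1-p_f)<1$ you would need. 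So ``substituting the invariant into the recursion'' does not close the induction as stated; the actual cluster-expansion argument must carry a stronger multiplicative invariant of the form $Z_A/Z_{A\setminus\{f\}}\geq 1/(1+y_f)$ alongside (or instead of) the ratio bound, and proving that simultaneously is exactly the nontrivial bookkeeping you defer to Bissacot--Fern\'andez--Procacci--Scoppola. Since you explicitly defer this, the situation parallels the paper's own reliance on citations --- but you should be aware that the invariant you named is not, by itself, enough to finish, so your ``once it is in place, substituting it into the recursion gives $Z_A(-p_f)>0$'' overstates what that invariant buys you.
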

	\begin{proof}
		Let $d$ be as in \eqref{eq:SLC} and let $x_f=1/d$ for all $f\in F$, then \eqref{eq:ALC} holds: 
		\begin{equation} \label{eq:SLCImplied}
		\forall f\in F\!:\, \frac{p_f}{x_f}=p_f\cdot d\leq \frac{1}{e} \leq \left(1-\frac{1}{d}\right)^{d-1}
		\leq\prod_{f'\in \Gamma(f)}(1-x_{f'}).
		\end{equation}	
		%Since for a given dependency graph \eqref{eq:SHC} is sufficient
		It is well known in the literature of the classical Lovász Local Lemma, that \eqref{eq:SHC} is a weaker condition than \eqref{eq:ALC} or \eqref{eq:CEC}.
		For direct proofs of these implications see, e.g., \cite[Corollary 5.37]{HarveyVondrak15Ar} and \cite[Corollary 5.42]{HarveyVondrak15Ar} respectively.
	\end{proof}
	
	%The following existential result is proven in \cite[Theorem 1]{SattathLatice}, and this work povides an alternative proof.
	%\begin{theorem}
%		Suppose the projectors $\left(\Pi_f\right)_{f\in F}$ satisfy \eqref{eq:SHC}. Then the local Hamiltonian $\sum_{f\in F}\Pi_f$ is frustration-free, or in other words $\bigcap_{f\in F}\ker\left(\Pi_f\right)$ has dimension at least $1$.
%	\end{theorem}
	
	\begin{definition} (Subspaces and projectors) 
		Whenever we refer to subspaces we always refer to subspaces of $\mathcal{H}_{[n]}$, in particular we define $\bigcap_{i\in\varnothing}V_i=\mathcal{H}_{[n]}$. For a subspace $V\subseteq\mathcal{H}_{[n]}$ let $\Pi_V$ denote the \textbf{orthogonal projector} to $V$.
	\end{definition}
	
	\begin{definition} (Semidefinite ordering) 
		Suppose $A,B$ are hermitian operators on $\mathcal{H}$. Then we write $A\preceq B$ if and only if $0\preceq B-A$ with the latter meaning by definition that $B-A$ is positive semidefinite.
	\end{definition}
	
	\begin{definition} (Sign function)
		For $x\in\mathbb{R}$ let us denote the \textbf{sign function} by
		\begin{equation*}
		\sgn{x}=
		\begin{cases}
		-1, & \text{for } x<0\\
		\phantom{-}0, & \text{for } x=0\\
		\phantom{-}1, & \text{for } x>0 .
		\end{cases}
		\end{equation*}
		For a diagonal matrix $\Sigma$ we define $\sgn{\Sigma}$ element-wise.
	\end{definition}
	
	\begin{definition} \label{def:qCStates}(Quantum-classical states)
		For the description of quantum-classical states consisting of an $N$ dimensional quantum system and a $k$ dimensional classical system we are going to use elements of $\mathbb{C}^{N\times N}\otimes \mathbb{R}^k$. We can interpret these as quantum states of restricted form via defining an embedding of $\mathbb{R}^k$ to $\mathbb{C}^{k\times k}$ using diagonal matrices.
	\end{definition}	
	
	\begin{definition} (Trace norm and distance)
		The trace norm of a matrix is the sum of its singular values: $\trnorm{M} = \Tr(\sqrt{M M^\dagger})$. The trace distance between compatible matrices $A,B$ is $\trnorm{A-B}$.
	\end{definition}
	For elements of $\mathbb{C}^{N\times N}\otimes \mathbb{R}^k$ we define the trace norm via the embedding of Definition~\ref{def:qCStates}. Moreover, if $\{e_i:i\in[k]\}$ is an orthonormal basis of $\mathbb{R}^k$, and $M=\sum_{i\in[k]}M_i\otimes e_i$, then $\trnorm{M}=\sum_{i\in[k]}\trnorm{M_i}$.
	
	\begin{definition} (Approximate quantum channels)
		Let $\mathcal{Q}, \tilde{\mathcal{Q}}$ be quantum channels between the spaces $:\mathbb{C}^{N\times N}\otimes \mathbb{R}^k\rightarrow \mathbb{C}^{M\times M}\otimes \mathbb{R}^\ell$. We say that $\tilde{\mathcal{Q}}$ $\delta$-approximates $\mathcal{Q}$, if 
		\begin{equation}
		\sup\left\{\trnorm{\mathcal{Q}(A)-\tilde{\mathcal{Q}}(A)}
		: A\in \mathbb{C}^{N\times N}\otimes \mathbb{R}^k, A=A^\dagger, \trnorm{A}\leq 1
		\right\}\leq \delta.
		\end{equation}
        \label{def:approximate_channel}
	\end{definition}	
   
	\section{The algorithm and the key lemma for the analysis}\label{sec:forward}
	
	The following algorithm is inspired by the classical MaximalSetResample algorithm from Harvey and Vondrák~\cite{HarveyVondrak15}.
	To adapt the algorithm to the quantum setting we introduce a quantum channel $\mathcal{Q}_f^C$, which performs some quantum operation on the $n$-qubit quantum register determined by the classical input $(C,f)$, where $C$ is the set of already ``checked" flaws, and $f$ is the next flaw to address.
	In the case of commuting projectors $\mathcal{Q}_f^C$ will be simply the application of a projective measurement $(\Pi_f,\Id-\Pi_f)$ where the measurement outcomes are labelled with $(B,G)$ standing for (``Bad",``Good") respectively.
	\begin{algorithm}[H]
		\caption{Maximal independent set resampling algorithm
		}\label{alg:flawSelect}
		\begin{algorithmic}[1]
			\STATE set $C\leftarrow\varnothing$ \kern10mm($\star\,\, C\subseteq F$ is going to store the checked flaws which were found ``good" $\,\,\star$)
			\STATE start with the maximally mixed state $\rho_0=\Id/2^n$
			\STATE {\bf while} $C\neq F$ {\bf do} \label{line:mainLoopStart}
			\STATE ~~~~ set $I\leftarrow\varnothing$ \kern22mm($\star\,\, I\subseteq F$ is going to store the flaws resampled in this round $\,\,\star$)
			\STATE ~~~~ {\bf while} $C\cup\Gamma^+(I)\neq F$ {\bf do} \label{line:indepCondition}	
			\STATE ~~~~~~~~ pick the minimal $f\in F\setminus (C\cup\Gamma^+(I))$ according to some fixed ordering $(\leq,F)$ \label{line:deterministicSelection}	
			\STATE ~~~~~~~~ apply quantum channel $\mathcal{Q}_f^C$ \label{line:applyQ}
			\kern1mm($\star\,\, $ Use $\mathcal{Q}_f^C=$``measure $\Pi_f$" in the commuting case $\,\,\star$) $\phantom{-}$
			\kern1mm($\star\,\, $ For non-commuting operators $\mathcal{Q}_f^C$ is Algorithm~\ref{alg:measurement} with a small error parameter $\theta$ $\,\,\star$)\kern0mm
			\STATE ~~~~~~~~ {\bf if} outcome label is $E$ {\bf then terminate with} ``ERROR" \label{line:outcomeIf}	
			\STATE ~~~~~~~~ {\bf else if} outcome label is $G$ {\bf then} set
			$C \leftarrow C \cup \{f\}$				
			\STATE ~~~~~~~~ {\bf else if} outcome label is $B$ {\bf then}
			\STATE ~~~~~~~~~~~~ set $I\leftarrow I\cup \{f\}$
			\STATE ~~~~~~~~~~~~ set $C\leftarrow C\setminus \Gamma(f)$	
			\STATE ~~~~~~~~~~~~ resample $f$ \kern12mm($\star\,\, $ Replace qubits $b(f)$ with maximally mixed qubits $\,\,\star$) \label{line:resample}	
			\STATE ~~~~~~~~ {\bf end if	} \label{line:outcomeIfEnd}	
			\STATE ~~~~ {\bf end while}		
			\STATE {\bf end while} \label{line:mainLoopEnd}
			\STATE {\bf terminate with} ``SUCCESS" 		
		\end{algorithmic}
	\end{algorithm}
	
	\begin{definition} %(Resampling operation)
		For $f\in F$ the \textbf{resampling operation} on $\rho$ in line~\ref{line:resample} can be formally described as $R_f(\rho)=\Tr_{b(f)}[\rho]\otimes\Id_{b(f)}\cdot 2^{-|b(f)|}$.
	\end{definition}
	
	\begin{definition} %(Rounds and the corresponding independent sets)
		Lines~\ref{line:mainLoopStart}-\ref{line:mainLoopEnd} will be called a \textbf{round}. Due to the selection rule of $f$ in line~\ref{line:deterministicSelection}, $I$ will always be an independent set. We will denote the independent set at the end of the $i$-th round by $I_i$. 
	\end{definition}	
	
	\subsection{Required properties of the quantum channel \texorpdfstring{$\mathcal{Q}$}{Q}}\label{subsec:progressMeas}
	In this subsection we define some properties of $\mathcal{Q}$ under which we can analyse Algorithm~\ref{alg:flawSelect} nicely.
	Later we show that Algorithm~\ref{alg:measurement} satisfies these requirements.
	
	In order to prove that Algorithm~\ref{alg:flawSelect} converges to a good quantum state, we would like $\mathcal{Q}$ not to disturb the ``good" part of the quantum state, but efficiently project out its ``bad" part. 
	When $\mathcal{Q}$ fails to project out the ``bad" part, it should transform the ``bad" part of the state to the image of the failed  projector, for reasons that will be explained later.
	
	We define progress in terms of ``good" subspaces, because we want to ensure that the flaws that are marked as checked (denoted by $C\subseteq F$) are indeed satisfied.
	We hope for an algorithm that works for low-energy subspaces as well, not just for zero-energy ones, and the high-level analysis works in this case as well -- see Appendix~\ref{apx:fpras}. This is why we define the somewhat abstract concept of a subspace progress measure. However, we will be mostly concerned about zero-energy subspaces, and thus use the exact progress measure as defined below.
	
	\begin{definition} %(Subspace progress measure)
		We call $\{V^C:C\subseteq F\}$ a \textbf{subspace progress measure} if for all $C\subseteq F:$  $V^C$ is a subspace of $\mathcal{H}_{[n]}$, and $V^C$ is only dependent on qubits $b(C)$, i.e., there is 
		$\tilde{V}^C\subseteq \mathcal{H}_{b(C)}$ such that $V^C=\tilde{V}^C\otimes \mathcal{H}_{[n]\setminus b(C)}$.
	\end{definition}
	
	We will mostly be concerned with the following natural subspace progress measure, aiming at frustration-free states.
	In the following definition $V^F$ is the kernel of $H$, the subspace to which we would like to gradually converge.
	\begin{definition} \label{def:exactMeasure}
		We call $\{V^C=\bigcap_{f'\in C}\ker(\Pi_{f'}):C\subseteq F\}$ the \textbf{exact progress measure}.
	\end{definition}
	
	When we are only concerned with termination of Algorithm~\ref{alg:flawSelect} we will consider the following trivial subspace progress measure:
	\begin{definition} \label{def:trivialMeasure}
		We call $\{V^C=\mathcal{H}_{[n]}:C\subseteq F\}$ the \textbf{trivial progress measure}.
	\end{definition}
	
	\subsubsection{The exact quantum channel -- ideal non-commuting generalisation}\label{subsec:idealChannel}
	
	Now we introduce our generalisation of the measurement procedure for the non-commuting setting.
	We argue that this is probably the most faithful generalisation of the commuting algorithm for the non-commuting case.
	The proposed quantum operation applies a measurement conditionally followed by a unitary operation. The combined procedure both respects the loop-invariant of the exact progress measure, and handles new flaws in a way which seems essential for the resampling algorithm. (In the following definition we use notation $V^C=\bigcap_{f'\in C}\ker(\Pi_{f'})$ corresponding to the exact progress measure.)
	
	\begin{definition} \label{def:exactChannel}
		We define the \textbf{exact quantum channel}, denoted here by $\mathcal{Q}$, in  the following way: conditional on receiving classical information $C\subseteq F$ and $f\in F$, the quantum channel $\mathcal{Q}^C_f:\mathbb{C}^{N\times N}\rightarrow\mathbb{C}^{N\times N}\otimes \mathbb{R}^2$ performs the projective measurement $\left(\Pi_{V^{C\cup \{f\}}}, \Id-\Pi_{V^{C\cup \{f\}}}\right)$. If the outcome is $\Pi_{V^{C\cup \{f\}}}$ it labels its output with $G$ standing for ``Good". If the outcome is $\Id-\Pi_{V^{C\cup \{f\}}}$ it labels its output with $B$ standing for ``Bad", then it applies the unitary operation $\mathrm{Rot}=WU^\dagger$, where $W\Sigma U^\dagger$ is a singular value decomposition of $\Pi_f \Pi_{V^C}$.
		For the output state corresponding to pure input state $\ket{\psi}$ we use notation $\mathcal{Q}^C_f\left(\ket{\psi}\right)=\ket{\psi_G}\otimes G+\ket{\psi_B}\otimes B$, where $\ket{\psi_G}=\Pi_{V^{C\cup \{f\}}}\ket{\psi}$ and $\ket{\psi_B}=W U^\dagger \left(\Id-\Pi_{V^{C\cup \{f\}}}\right)\ket{\psi}$.
	\end{definition}
	\begin{remark}
		In the above definition we have some ambiguity about the map $WU^\dagger$, since the singular value decomposition is not unique.
		However one can show that the map $W\sgn{\Sigma}U^\dagger$ is well defined (see Appendix~\ref{apx:uniqueMap}), and this is enough in our case,
		since we are always acting on input states that lie in $\Pi_{V^C}$.
		If $\ket{\psi}=\Pi_{V^C}\ket{\psi}$, then 
		$\left(\Id-\Pi_{V^{C\cup \{f\}}}\right)\ket{\psi}
		=\left(\Id-\Pi_{V^{C\cup \{f\}}}\right)\Pi_{V^C}\ket{\psi}
		=\left(\Pi_{V^C}-\Pi_{V^{C\cup \{f\}}}\right)\ket{\psi}$.
		But then the action of $WU^\dagger$ only depend on $W\sgn{\Sigma}U^\dagger$, as we can show using \eqref{eq:badNonOrth} from the next Proposition:  $WU^\dagger\left(\Pi_{V^C}-\Pi_{V^{C\cup \{f\}}}\right)=WU^\dagger\left(U\sgn{\Sigma} U^\dagger\right)=W\sgn{\Sigma}U^\dagger U\sgn{\Sigma} U^\dagger=W\sgn{\Sigma}U^\dagger\left(\Pi_{V^C}-\Pi_{V^{C\cup \{f\}}}\right)$.
	\end{remark}
	
	To justify our generalisation and explain better its main purpose we show that this quantum channel	preserves two important properties of the commuting quantum case. Before we prove the corresponding Lemma~\ref{lemma:exactChannel}, we need some identities of the relevant subspaces. In the following we use concise rank arguments, but the reader may get more insight on the structure of the examined subspaces by looking at Jordan's Theorem in Appendix~\ref{apx:Jordan}.
	
	\begin{prop} \label{prop:subspaceIdentities}
		Let us fix some $f\in F$ and $C\subseteq F$. Let us use notation $V=\bigcap_{f'\in C}\ker(\Pi_{f'})$, $V_f=V\cap\ker(\Pi_f)$ and $V_{\overline{f}}=\bigcap_{f'\in C\setminus \Gamma^+(f)}\ker(\Pi_{f'})$. Suppose $W\Sigma U^\dagger$ is a singular value decomposition of $\Pi_f \Pi_V$ (i.e., $\Pi_f \Pi_V=W\Sigma U^\dagger$ with $W^\dagger=W^{-1}$, $U^\dagger=U^{-1}$ and $\Sigma$ diagonal), then the following identities hold:
		\begin{align}
		\Pi_{\im(\Pi_f\Pi_V)}=& \ W\sgn{\Sigma} W^\dagger \label{eq:imAdj}\\
		\Pi_V-\Pi_{V_f}=& U\sgn{\Sigma} U^\dagger \label{eq:badNonOrth}\\
		\Pi_{\im(\Pi_f\Pi_V)}\preceq&  \Pi_f \Pi_{V_{\overline{f}}} \label{eq:imUpBnd}
		\end{align}
	\end{prop}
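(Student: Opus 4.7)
The plan is to handle the three identities in order, exploiting the SVD $\Pi_f\Pi_V=W\Sigma U^\dagger$ and the fact that $\Pi_f$ and $\Pi_{V_{\overline{f}}}$ act on disjoint sets of qubits.

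For \eqref{eq:imAdj} I would simply unpack the SVD. The columns of $W$ corresponding to nonzero singular values span $\im(\Pi_f\Pi_V)$, so the projector onto this image equals $W\sgn{\Sigma}W^\dagger$, with $\sgn{\Sigma}$ being the diagonal indicator of the nonzero singular values. This is essentially a textbook identity and I would just state it, perhaps after noting that $W\sgn{\Sigma}W^\dagger$ is visibly self-adjoint, idempotent, and has the same range as $W\Sigma U^\dagger$.

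For \eqref{eq:badNonOrth} the first step is to identify $\im(\Pi_V-\Pi_{V_f})$ with the coimage of $\Pi_f\Pi_V$. Because $V_f\subseteq V$, $\Pi_V-\Pi_{V_f}$ is the orthogonal projector onto $V\ominus V_f$. Next I would compute $\ker(\Pi_f\Pi_V)=V^\perp\oplus (V\cap\ker\Pi_f)=V^\perp\oplus V_f$, so the coimage is exactly $V\ominus V_f$. Applying \eqref{eq:imAdj} to the adjoint $\Pi_V\Pi_f=U\Sigma W^\dagger$ (whose image is the coimage of $\Pi_f\Pi_V$) then gives $\Pi_V-\Pi_{V_f}=U\sgn{\Sigma}U^\dagger$. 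The main subtlety here is just being careful about the decomposition of the kernel; the SVD does the rest.

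For \eqref{eq:imUpBnd} the key observation is the locality of $\Pi_f$: since $\Pi_{V_{\overline{f}}}$ depends only on qubits outside $b(f)$ while $\Pi_f=\Pi_f^{loc}\otimes \Id_{[n]\setminus b(f)}$ acts only on $b(f)$, the two projectors commute. Hence $\Pi_f\Pi_{V_{\overline{f}}}$ is itself an orthogonal projector, onto $\im(\Pi_f)\cap V_{\overline{f}}$. It therefore suffices to show $\im(\Pi_f\Pi_V)\subseteq \im(\Pi_f)\cap V_{\overline{f}}$, because for orthogonal projectors $P\preceq Q$ is equivalent to $\im(P)\subseteq \im(Q)$. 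The inclusion into $\im(\Pi_f)$ is immediate, and inclusion into $V_{\overline{f}}$ follows from $V\subseteq V_{\overline{f}}$ combined with the fact that $V_{\overline{f}}$ is invariant under $\Pi_f$ (again by the same commutation).

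I do not foresee a genuine obstacle: the three claims reduce to standard SVD bookkeeping plus the locality of $\Pi_f$ relative to $V_{\overline{f}}$. The one spot where I would slow down is in \eqref{eq:badNonOrth}, to make sure I correctly identify the coimage of $\Pi_f\Pi_V$ with $V\ominus V_f$ rather than with some subspace that also picks up pieces of $V^\perp$; isolating the $V^\perp$ summand in $\ker(\Pi_f\Pi_V)$ at the start cleanly avoids that pitfall.
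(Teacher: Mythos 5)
Your proposal is correct and matches the paper's overall strategy: SVD bookkeeping for \eqref{eq:imAdj} and \eqref{eq:badNonOrth}, and locality of $\Pi_f$ relative to $V_{\overline{f}}$ for \eqref{eq:imUpBnd}. The one place you take a mildly different route is \eqref{eq:badNonOrth}: the paper first establishes the one-sided inequality $\Pi_{\im(\Pi_V\Pi_f)}\preceq\Pi_V-\Pi_{V_f}$ (via $(\Pi_V-\Pi_{V_f})\Pi_V\Pi_f=\Pi_V\Pi_f$) and then upgrades it to equality by invoking the rank identity $\rank{AB}=\rank{B}-\dim(\ker(A)\cap\im(B))$, whereas you compute $\ker(\Pi_f\Pi_V)=V^\perp\oplus V_f$ directly and read off the coimage as $V\ominus V_f$. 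Your variant is a bit more self-contained (no external rank lemma), and the kernel decomposition you flag as the subtle step is indeed the right thing to verify: $\Pi_f\Pi_V\ket{\psi}=0$ iff $\Pi_V\ket{\psi}\in V_f$, and splitting $\ket{\psi}=\Pi_V\ket{\psi}+(\Id-\Pi_V)\ket{\psi}$ gives the orthogonal direct sum. For \eqref{eq:imUpBnd} your phrasing (show $\Pi_f\Pi_{V_{\overline{f}}}$ is itself a projector and that $\im(\Pi_f\Pi_V)$ is contained in its range) is equivalent to the paper's chain $\Pi_{\im(\Pi_f\Pi_V)}=\Pi_f\Pi_{\im(\Pi_f\Pi_V)}\Pi_f\preceq\Pi_f\Pi_{V_{\overline{f}}}\Pi_f=\Pi_f\Pi_{V_{\overline{f}}}$; both rest on the same two facts ($\im(\Pi_f\Pi_V)\subseteq\im\Pi_f$ and $\Pi_f$, $\Pi_{V_{\overline{f}}}$ commute by disjoint supports).
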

	\begin{proof}
		\eqref{eq:imAdj}: $\left(W\sgn{\Sigma} W^\dagger\right)\Pi_f \Pi_V
		=\left(W\sgn{\Sigma} W^\dagger\right)W\Sigma U^\dagger
		=W\Sigma U^\dagger=\Pi_f \Pi_V$, since $W\sgn{\Sigma} W^\dagger$ is an orthogonal projector it implies $\Pi_{\im(\Pi_f\Pi_V)}\preceq \ W\sgn{\Sigma} W^\dagger$.
		But also $\rank{\Pi_{\im(\Pi_f\Pi_V)}}=\rank{\Pi_f\Pi_V}=\rank{W\sgn{\Sigma} W^\dagger}$, 
		thus $\Pi_{\im(\Pi_f\Pi_V)}= \ W\sgn{\Sigma} W^\dagger$.
		
		\eqref{eq:badNonOrth}: Similarly to \eqref{eq:imAdj} 
		$\Pi_{\im(\Pi_V\Pi_f)}= \ U\sgn{\Sigma} U^\dagger$, so it is enough to show that
		$\Pi_V-\Pi_{V_f}=\Pi_{\im(\Pi_V\Pi_f)}$. 
		Again $\Pi_{\im(\Pi_V\Pi_f)}\preceq\Pi_V-\Pi_{V_f}$, since
		$\left(\Pi_V-\Pi_{V_f}\right)\Pi_V\Pi_f=\Pi_V\Pi_f-\Pi_{V_f}\Pi_f=\Pi_V\Pi_f$.
		%But $\rank{\Pi_{\im(\Pi_V\Pi_f)}}
		%=\dim\left(\im(\Pi_V\Pi_f)\right)
		%=2^n-\dim\left(\text{ker}(\Pi_V\Pi_f)\right)
		%=2^n-\dim\left(\text{ker}(\Pi_f\Pi_V)\right)
		%=2^n-\dim\left(\text{ker}(\Pi_V)\right)-
		%\dim\left(\text{ker}(\Pi_f)\cap\im(\Pi_V)\right)
		%=\dim\left(V\right)-\dim\left(V_f\right)
		%=\rank{\Pi_V-\Pi_{V_f}}$
		
		But 
        \begin{align*}
        \rank{\Pi_{\im(\Pi_V\Pi_f)}}
	&	=\rank{\Pi_V\Pi_f}\\
	&	=\rank{\Pi_f\Pi_V}\\
	&	=\rank{\Pi_V} - \dim(\ker(\Pi_f) \cap \im(\Pi_V)\\
	&	=\dim(V)-\dim(V_f)\\
	&	=\rank{\Pi_V-\Pi_{V_f}}.
        \end{align*}
		Here, the second equality is justified by $\rank{A}=\rank{A^\dagger}$, and the third equality by $\rank{AB}=\rank{B}-\dim(\ker(A) \cap \im(B) )$ (see, e.g. \cite[p. 210]{meyer00matrix}). 
		
		So $\Pi_V-\Pi_{V_f}=\Pi_{\im(\Pi_V\Pi_f)}$ and thus $\Pi_V-\Pi_{V_f}= U\sgn{\Sigma} U^\dagger$.
		
		\eqref{eq:imUpBnd}: The proof follows form the following line of (in)equalities which we justify below:
		$$\Pi_{\im(\Pi_f\Pi_V)}
		=\Pi_f \Pi_{\im(\Pi_f\Pi_V)} \Pi_f
		\preceq\Pi_f \Pi_{V_{\overline{f}}} \Pi_f
		= \Pi^2_f \Pi_{V_{\overline{f}}}
		= \Pi_f \Pi_{V_{\overline{f}}}.$$
		First observe that $\Pi_f\left(\Pi_f\Pi_V\right)=\Pi_f\Pi_V$ so $\Pi_f\Pi_{\im(\Pi_f\Pi_V)}=\Pi_{\im(\Pi_f\Pi_V)}$, implying the first equality. 
		The penultimate equality is due to $\Pi_f\Pi_{V_{\overline{f}}}=\Pi_{V_{\overline{f}}}\Pi_f$, which follows from the fact that these operators act on disjoint qubits. 
		Finally note that $\Pi_{V_{\overline{f}}}\Pi_V=\Pi_V$, since $V\subseteq V_{\overline{f}}\,$. Therefore, 		$\Pi_{V_{\overline{f}}}\left(\Pi_f\Pi_V\right)
		=\Pi_f\Pi_{V_{\overline{f}}}\Pi_V
		=\Pi_f\Pi_V$ so $\Pi_{\im(\Pi_f\Pi_V)}\preceq\Pi_{V_{\overline{f}}}$, which justifies the inequality.
   	\end{proof}

	\noindent Using the above proposition we can easily show in the following lemma that the properties of our interest hold. However one might be puzzled why is it important to transform states to the ``Bad" image of $\Pi_f$. The reason is that it ensures that the resampling operation uniformly mixes quantum states, for more details see the proof of Lemma~\ref{lemma:NonComm}; Appendix~\ref{apx:WhyRotate} presents a simple example showing how things can go wrong when this transformation step is skipped.

	\begin{lemma} \label{lemma:exactChannel}
		Suppose $\ket{\psi}$ respects the exact progress measure with respect to checked flaws $C\subseteq F$, i.e., $\ket{\psi}=\Pi_{V^C}\ket{\psi}$. 
		If we apply the exact quantum channel $\mathcal{Q}^C_f$ on $\ket{\psi}$, then 
		\begin{enumerate}[label=(\textbf{\roman*})]
			\item[$\ket{\psi_G}$\kern-0.5mm] lies in $V^{C\cup\{f\}}$, i.e., $\ket{\psi_G}=\Pi_{V^{C\cup\{f\}}}\ket{\psi_G}$, and \label{it:ExactGood}
			\item[$\ket{\psi_B}$\kern-0.5mm] lies in $\im(\Pi_f)\cap V^{C\setminus\Gamma^+(f)}$, i.e., $\ket{\psi_B}=\Pi^{loc}_f\otimes \Pi'_{V^{C\setminus\Gamma^+(f)}}\ket{\psi_B}$\\
			(where $\Pi_{V^{C\setminus\Gamma^+(f)}}=\Id_{b(f)}\otimes \Pi'_{V^{C\setminus\Gamma^+(f)}} $).
			\label{it:ExactBad}
		\end{enumerate}
	\end{lemma}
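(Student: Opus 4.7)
The plan is to dispatch the two items separately, with item (i) immediate from the definition and item (ii) assembled from the three identities of Proposition~\ref{prop:subspaceIdentities}. Throughout I will use the matching notation $V=V^C$, $V_f=V^{C\cup\{f\}}=V^C\cap\ker(\Pi_f)$, and $V_{\overline{f}}=V^{C\setminus\Gamma^+(f)}$, so that $W\Sigma U^\dagger$ is a singular value decomposition of $\Pi_f\Pi_V$.

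For item (i), $\ket{\psi_G}=\Pi_{V^{C\cup\{f\}}}\ket{\psi}$ holds by the definition of $\mathcal{Q}^C_f$, and since $\Pi_{V^{C\cup\{f\}}}$ is an orthogonal projector, applying it again gives $\Pi_{V^{C\cup\{f\}}}\ket{\psi_G}=\Pi^2_{V^{C\cup\{f\}}}\ket{\psi}=\Pi_{V^{C\cup\{f\}}}\ket{\psi}=\ket{\psi_G}$; the progress-measure hypothesis is not even needed here.

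For item (ii), the first step is to use $\ket{\psi}=\Pi_V\ket{\psi}$ together with the containment $V_f\subseteq V$ to rewrite $(\Id-\Pi_{V_f})\ket{\psi}=(\Pi_V-\Pi_{V_f})\ket{\psi}$. Identity~(\ref{eq:badNonOrth}) turns this into $U\sgn{\Sigma}U^\dagger\ket{\psi}$, so by definition $\ket{\psi_B}=WU^\dagger\cdot U\sgn{\Sigma}U^\dagger\ket{\psi}=W\sgn{\Sigma}U^\dagger\ket{\psi}$. To verify $\ket{\psi_B}\in\im(\Pi_f\Pi_V)$, note that identity~(\ref{eq:imAdj}) gives $\Pi_{\im(\Pi_f\Pi_V)}=W\sgn{\Sigma}W^\dagger$, and a direct computation $W\sgn{\Sigma}W^\dagger\cdot W\sgn{\Sigma}U^\dagger\ket{\psi}=W\sgn{\Sigma}^2U^\dagger\ket{\psi}=\ket{\psi_B}$ (using $\sgn{\Sigma}^2=\sgn{\Sigma}$, which holds because $\Sigma$ has non-negative entries) confirms that this projector fixes $\ket{\psi_B}$.

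To finish, I invoke identity~(\ref{eq:imUpBnd}): $\Pi_{\im(\Pi_f\Pi_V)}\preceq\Pi_f\Pi_{V_{\overline{f}}}$. Because every $f'\in C\setminus\Gamma^+(f)$ satisfies $b(f')\cap b(f)=\varnothing$, the operators $\Pi_f$ and $\Pi_{V_{\overline{f}}}$ commute and their product factors as $\Pi^{loc}_f\otimes\Pi'_{V_{\overline{f}}}$, which is itself an orthogonal projector (onto $\im(\Pi_f)\cap V_{\overline{f}}$). A semidefinite inequality between two orthogonal projectors is equivalent to the containment of their images, so $\ket{\psi_B}$ is fixed by $\Pi^{loc}_f\otimes\Pi'_{V_{\overline{f}}}$, which is exactly the statement of item (ii). The main conceptual point — not a hard step once the three identities are in hand — is that the rotation $\mathrm{Rot}=WU^\dagger$ is designed precisely to map the ``input frame'' $U$ of $\im(\Pi_V\Pi_f)$, where $(\Id-\Pi_{V_f})\ket{\psi}$ natively lives, to the ``output frame'' $W$ of $\im(\Pi_f\Pi_V)\subseteq\im(\Pi_f)$; without this rotation one would stay inside $V_{\overline{f}}$ but not land inside $\im(\Pi_f)$, as the example in Appendix~\ref{apx:WhyRotate} shows.
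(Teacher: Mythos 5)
Your proof is correct and follows essentially the same route as the paper: rewrite $(\Id-\Pi_{V_f})\ket{\psi}$ as $(\Pi_V-\Pi_{V_f})\ket{\psi}$ using the hypothesis, invoke~\eqref{eq:badNonOrth} to get $\ket{\psi_B}=W\sgn{\Sigma}U^\dagger\ket{\psi}$, show this is fixed by $\Pi_{\im(\Pi_f\Pi_V)}=W\sgn{\Sigma}W^\dagger$ using~\eqref{eq:imAdj}, and pass to the bigger projector via~\eqref{eq:imUpBnd}. Your computation $W\sgn{\Sigma}W^\dagger\cdot W\sgn{\Sigma}U^\dagger=W\sgn{\Sigma}U^\dagger$ is a slightly cleaner way to land in $\im(\Pi_f\Pi_V)$ than the paper's chain of operator identities, and you make explicit the small fact (left implicit in the paper) that a $\preceq$ inequality between orthogonal projectors is equivalent to image containment.
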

	\begin{proof}
		By Definition~\ref{def:exactChannel}  $\ket{\psi_G}=\Pi_{V^{C\cup\{f\}}}\ket{\psi_G}$, so the first property is trivial.
		
		By Definition~\ref{def:exactChannel}  $\ket{\psi_B}=W U^\dagger \left(\Id-\Pi_{V^{C\cup \{f\}}}\right)\ket{\psi}$. Note that by $\ket{\psi}=\Pi_{V^C}\ket{\psi}$ we have 
		\begin{equation}\label{eq:subProjRespect}
			\left(\Id-\Pi_{V^{C\cup\{f\}}}\right)\ket{\psi}
			=\left(\Id-\Pi_{V^{C\cup\{f\}}}\right)\Pi_{V^{C}}\ket{\psi}
			=\left(\Pi_{V^{C}}-\Pi_{V^{C\cup\{f\}}}\right)\ket{\psi}.
		\end{equation}
		Using \eqref{eq:badNonOrth} we can see
		$WU^\dagger \left(\Pi_{V^{C}}-\Pi_{V^{C\cup\{f\}}}\right)
		= WU^\dagger U\sgn{\Sigma} U^\dagger
		= W \sgn{\Sigma} U^\dagger$. 
		Considering $\left(\sgn{\Sigma}\right)^2=\sgn{\Sigma}$ and $U^\dagger U=\Id$ we get
		$W\sgn{\Sigma}U^\dagger=W\sgn{\Sigma}W^\dagger WU^\dagger U\sgn{\Sigma}U^\dagger$
		and by \eqref{eq:imAdj}-\eqref{eq:badNonOrth} we get 
		$W\sgn{\Sigma}W^\dagger WU^\dagger U\sgn{\Sigma}U^\dagger = \Pi_{\im(\Pi_f\Pi_{V^C})}WU^\dagger\left(\Pi_{V^{C}}-\Pi_{V^{C\cup\{f\}}}\right)$.
		Therefore, we proved $WU^\dagger \left(\Pi_{V^{C}}-\Pi_{V^{C\cup\{f\}}}\right) 
		= \Pi_{\im(\Pi_f\Pi_{V^C})}WU^\dagger\left(\Pi_{V^{C}}-\Pi_{V^{C\cup\{f\}}}\right)$. 
		By~\eqref{eq:imUpBnd} we have $\Pi_{\im(\Pi_f\Pi_{V^C})}\preceq  \Pi_f \Pi_{V^{C\setminus\Gamma^+(f)}}=\Pi^{loc}_f\otimes \Pi'_{V^{C\setminus\Gamma^+(f)}}$ which implies
		$W U^\dagger \left(\Pi_{V^C}-\Pi_{V^{C\cup \{f\}}}\right)
		= \left(\Pi^{loc}_f\otimes \Pi'_{V^{C\setminus\Gamma^+(f)}}\right)
		WU^\dagger\left(\Pi_{V^C}-\Pi_{V^{C\cup \{f\}}}\right)$ proving	$\ket{\psi_B}=\Pi^{loc}_f\otimes \Pi'_{V^{C\setminus\Gamma^+(f)}}\ket{\psi_B}$ via \eqref{eq:subProjRespect}.
	\end{proof}
	
	\noindent For completeness we show that Definition~\ref{def:exactChannel} is indeed a generalisation of the commuting case.
	\begin{prop} \label{prop:indeedGeneralisation}
		Suppose that all local projectors commute, and that the input state $\ket{\psi}$ is such that $\ket{\psi}=\Pi_{V^C}\ket{\psi}$, then the output of the exact quantum channel $\mathcal{Q}^C_f$ coincides with the output of the projective measurement $(\Id-\Pi_f,\Pi_f)$
		, i.e., $\ket{\psi_G}=\left(\Id-\Pi_f\right)\ket{\psi}$ and  $\ket{\psi_B}=\Pi_f\ket{\psi}$.
	\end{prop}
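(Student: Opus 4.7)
My plan is to reduce everything to the observation that in the commuting case $\Pi_f \Pi_{V^C}$ is itself an orthogonal projector, and then read off both outcomes from the formulas already established in Proposition~\ref{prop:subspaceIdentities} and the remark following Definition~\ref{def:exactChannel}.

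First I would handle the ``Good'' branch. Since all local projectors commute, so do $\Pi_f$ and $\Pi_{V^C}=\prod_{f'\in C}(\Id-\Pi_{f'})$. Hence $(\Id-\Pi_f)\Pi_{V^C}$ is a commuting product of orthogonal projectors, which is itself the orthogonal projector onto the intersection of their ranges, namely $V^{C\cup\{f\}}=V^C\cap\ker(\Pi_f)$. Therefore $\Pi_{V^{C\cup\{f\}}}=(\Id-\Pi_f)\Pi_{V^C}$, and applying this to $\ket\psi=\Pi_{V^C}\ket\psi$ yields $\ket{\psi_G}=\Pi_{V^{C\cup\{f\}}}\ket\psi=(\Id-\Pi_f)\ket\psi$, as desired.

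For the ``Bad'' branch I would exploit the same commutation to note that $\Pi_f\Pi_{V^C}$ is also a commuting product of orthogonal projectors and hence an orthogonal projector. Therefore all its singular values lie in $\{0,1\}$, so in any SVD $\Pi_f\Pi_{V^C}=W\Sigma U^\dagger$ we have $\Sigma=\sgn{\Sigma}$, which gives $W\sgn{\Sigma}U^\dagger = W\Sigma U^\dagger = \Pi_f\Pi_{V^C}$. Next, by the remark following Definition~\ref{def:exactChannel}, for any state satisfying $\ket\psi=\Pi_{V^C}\ket\psi$ the action of $\mathrm{Rot}$ on the ``bad'' component agrees with that of $W\sgn{\Sigma}U^\dagger$; combining this with identity~\eqref{eq:badNonOrth}, which says $\Pi_{V^C}-\Pi_{V^{C\cup\{f\}}}=U\sgn{\Sigma}U^\dagger$, I obtain
\[
\ket{\psi_B}=WU^\dagger\bigl(\Id-\Pi_{V^{C\cup\{f\}}}\bigr)\ket\psi
= W\sgn{\Sigma}U^\dagger\bigl(\Pi_{V^C}-\Pi_{V^{C\cup\{f\}}}\bigr)\ket\psi
= (\Pi_f\Pi_{V^C})^2\ket\psi.
\]
Using idempotence of $\Pi_f\Pi_{V^C}$ (it is an orthogonal projector) and then $\Pi_{V^C}\ket\psi=\ket\psi$, this collapses to $\Pi_f\Pi_{V^C}\ket\psi=\Pi_f\ket\psi$, which is exactly what the projective measurement $(\Id-\Pi_f,\Pi_f)$ would output on the ``Bad'' outcome.

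I expect no real obstacle: the only subtle point is the non-uniqueness of the SVD (since all nonzero singular values are equal to $1$, hence highly degenerate), but the remark already packages this into the well-defined operator $W\sgn{\Sigma}U^\dagger$, so by reducing both the unitary $\mathrm{Rot}$ and the subspace difference $\Pi_{V^C}-\Pi_{V^{C\cup\{f\}}}$ to $\Pi_f\Pi_{V^C}$, the SVD ambiguity never intrudes on the final expression.
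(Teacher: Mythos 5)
Your proof is correct and takes essentially the same approach as the paper: both arguments hinge on the observation that commutation makes $\Pi_f\Pi_{V^C}$ an orthogonal projector, hence $\Sigma=\sgn{\Sigma}$, so the rotation $WU^\dagger$ acts as $\Pi_f\Pi_{V^C}$ on the relevant subspace. The only cosmetic difference is that you pass through $(\Pi_f\Pi_{V^C})^2$ and invoke idempotence, where the paper simplifies $W\sgn{\Sigma}U^\dagger\ket\psi$ directly to $\Pi_f\Pi_{V^C}\ket\psi$; the underlying identities used are identical.
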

	\begin{proof}
		Since all local projectors commute we have $\Pi_{V^C}=\prod_{f'\in C}\left(\Id-\Pi_{f'}\right)$. 
		By Definition~\ref{def:exactChannel} $\ket{\psi_G}=\Pi_{V^{C\cup \{f\}}}\ket{\psi}$ and due to commutation we have $\Pi_{V^{C\cup \{f\}}}=(\Id-\Pi_f)\Pi_{V^C}$, so $\ket{\psi_G}=(\Id-\Pi_f)\Pi_{V^C}\ket{\psi}=(\Id-\Pi_f)\ket{\psi}$.
		
		By Definition~\ref{def:exactChannel} 
		$\ket{\psi_B}=W U^\dagger \left(\Id-\Pi_{V^{C\cup \{f\}}}\right)\ket{\psi}$, furthermore similarly to the proof of Lemma~\ref{lemma:exactChannel}
		$\left(\Id-\Pi_{V^{C\cup \{f\}}}\right)\ket{\psi}
		=\left(\Id-\Pi_{V^{C\cup \{f\}}}\right)\Pi_{V^{C}}\ket{\psi}
		=\left(\Pi_{V^{C}}-\Pi_{V^{C\cup \{f\}}}\right)\ket{\psi}$ by our assumption on $\ket{\psi}$. Using \eqref{eq:badNonOrth} we get that $\ket{\psi_B}=W U^\dagger U\sgn{\Sigma} U^\dagger \ket{\psi}=W\sgn{\Sigma} U^\dagger \ket{\psi}$.
		By commutation we have that $\Pi_f\Pi_{V^C}=\Pi_{V^C}\Pi_f$ is an orthogonal projector and thus $\Sigma=\sgn{\Sigma}$. Therefore, $W\sgn{\Sigma} U^\dagger=W\Sigma U^\dagger=\Pi_f\Pi_{V^C}$ and thus $\ket{\psi_B}=\Pi_f\Pi_{V^C}\ket{\psi}=\Pi_f\ket{\psi}$.
	\end{proof}		
	
	\begin{remark} \label{rem:couldCompress}
		The properties proven in Lemma~\ref{lemma:exactChannel} enable one to use the exact quantum channel of Definition~\ref{def:exactChannel} in the non-commuting setting together with the compression argument of \cite{SchwarzInfo,SattathSymm} to show that, under the condition \eqref{eq:SLC}, the algorithms in \cite{SchwarzInfo,SattathSymm}  find a ground state quickly.
	\end{remark}

	For now we do not continue in the direction of Remark~\ref{rem:couldCompress} for two reasons. The first reason is, that we do not know how to prove efficiency up Shearer's bound \eqref{eq:SHC} using a compression argument.
	The second reason is that we cannot implement the exact quantum channel efficiently. 
	In Section~\ref{sec:efficient} we show how to efficiently implement a closely related quantum channel.
	That quantum channel fits the proof techniques of the ``forward-looking" analysis, 
	however, it has the drawback that even if the input is a pure quantum sate its outputs can in general only be described by a probabilistic mixture of pure states. Although this feature can also be handled using the entropy compression argument of \cite{SattathSymm}, it requires some additional analysis of the procedure. The main advantage of using the compression argument is, that it improves the runtime bound in the \eqref{eq:SLC} case, therefore we plan to work out the details in the next arXiv version of this paper.

	%We de not go in the direction of Remark~\ref{rem:couldCompress} mainly because we do not know how to implement the exact quantum channel efficiently. 
	%In Section~\ref{sec:efficient} we show how to efficiently implement a closely related quantum channel.
	%That quantum channel fits the proof techniques of the ``forward-looking" analysis, 
	%however, it has the drawback that even if the input is a pure quantum sate its outputs can in general only be described by a probabilistic mixture of pure states. This feature seems to prohibit the use of the compression argument.
	
	\subsubsection{Progressive quantum channels -- an efficient non-commuting version}\label{subsec:progressChannel}
	
	Because of the probabilistic mixtures appearing we can no longer work with the convenient pure state formalism,
	so from now on, we will use density operators instead. The following definition formulates the requirements of 
	the ``forward-looking" analysis technique in terms of density operators. The most important criterion is that if the input is a (mixed) quantum state~$\rho$, which is supported on some subspace respecting the loop-invariant and is upper bounded by a uniform mixture, then the output state should also lie in a nice subspace and should also be upper bounded by a corresponding uniform mixture. Since we can only implement approximate versions of our channels, there is some other criterion concerning error probabilities.

	%Using the previous definitions we can state which properties of $\mathcal{Q}$ are required for the high-level analysis.
	\begin{definition} %(Progressive quantum channel)
		\label{def:progressive_channel}
		We say that $\mathcal{Q}$ is a \textbf{progressive quantum channel} with respect
		to the subspace progress measure $\{V^C:C\subseteq F\}$ with error parameter $\theta\in[0,1]$ if the following holds:
		Conditional on receiving classical information $C\subseteq F$ and $f\in F$, the quantum channel $\mathcal{Q}^C_f$ performs the quantum operation $\mathcal{Q}^C_f:\mathbb{C}^{N\times N}\rightarrow\mathbb{C}^{N\times N}\otimes \mathbb{R}^3$, satisfying the following properties: 
		
		(In the following definition think about $\Pi_{V}$ as an unnormalised quantum state representing uniform distribution on $V$.)
		
		\begin{enumerate}[label=(\textbf{\roman*})]
			\item The quantum channel labels its output with the classical labels $(G,B,E)$ corresponding to $(\text{``Good",``Bad",``Error"})$ outcomes, so that for input $\rho$ the output state is written as:\\
			$\mathcal{Q}^C_f(\rho)=\mathcal{Q}^C_{f,G}(\rho)\otimes G+
			\mathcal{Q}^C_{f,B}(\rho)\otimes B+\mathcal{Q}^C_{f,E}(\rho)\otimes E$. \label{it:Aoutcomes}
			\item For a uniformly mixed input state on $\Pi_{V^C}$, the output state labelled as ``Good" is upper bounded by a uniform mixture on the better subspace $\Pi_{V^{C\cup\{f\}}}$: \\
			$\mathcal{Q}^C_{f,G}(\Pi_{V^C})\preceq\Pi_{V^{C\cup\{f\}}}$. \label{it:Agood}
			\item For a uniformly mixed input state on $\Pi_{V^C}$, the output state labelled as ``Bad" is upper bounded by a uniform mixture on a subspace of tensor product form:%(suitable for resampling)
			 \\
			$\mathcal{Q}^C_{f,B}(\Pi_{V^C})\preceq \Pi^{loc}_f\otimes \Pi'_{V^{C\setminus\Gamma^+(f)}}$, 
			where $\Pi_{V^{C\setminus\Gamma^+(f)}}=\Id_{b(f)}\otimes \Pi'_{V^{C\setminus\Gamma^+(f)}} $.
			\label{it:Abad}
			%($\text{rank}(\Pi_{V}-\Pi_{V_f})=\text{rank}(\Pi_{\\im(\Pi_f\Pi_V)})$)
			\item For $\rho$ supported on $V^C$ (i.e., $\rho=\Pi_{V^C}\rho\Pi_{V^C}$), this has small ``Error": \\
			$\,\tr{\mathcal{Q}^C_{f,E}(\rho)}\leq 2\theta\cdot \tr{\mathcal{Q}^C_{f,B}(\rho)}$. \label{it:Aerror}
		\end{enumerate}
	\end{definition}
	At this point the requirement $\,\tr{\mathcal{Q}^C_{f,E}(\rho)}\leq 2\theta\cdot \tr{\mathcal{Q}^C_{f,B}(\rho)}$ may feel weird,	in fact it would be enough to require $\,\tr{\mathcal{Q}^C_{f,E}(\rho)}\leq 2\theta\cdot \tr{\mathcal{Q}^C_f(\rho)}$, 
	but our channels happen to satisfy the stronger condition, which results in slightly better runtime bounds. The scaling of the error parameter $\theta$ by $2$ is somewhat arbitrary, but it fits nicely with the analysis of Algorithm~\ref{alg:measurement}.
	
	\begin{prop} \label{prop:exactProgressive}
		The exact quantum channel is a progressive quantum channel with respect to the exact progress measure with error parameter $\theta=0$.
	\end{prop}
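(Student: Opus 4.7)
The plan is to verify each of the four conditions in Definition~\ref{def:progressive_channel} directly by computing the outputs of the exact channel $\mathcal{Q}^C_f$ on the uniform input $\Pi_{V^C}$ and exploiting the identities already established in Proposition~\ref{prop:subspaceIdentities}. Since Definition~\ref{def:exactChannel} produces only two labels $G$ and $B$, I would first reinterpret $\mathcal{Q}^C_f$ as a map into $\mathbb{C}^{N\times N}\otimes\mathbb{R}^3$ by taking $\mathcal{Q}^C_{f,E}\equiv 0$; this settles condition~\ref{it:Aoutcomes} and also makes condition~\ref{it:Aerror} trivial for $\theta=0$, since both sides of the required inequality vanish.

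For condition~\ref{it:Agood}, I would use that $V^{C\cup\{f\}}\subseteq V^C$, so $\Pi_{V^C}\Pi_{V^{C\cup\{f\}}}=\Pi_{V^{C\cup\{f\}}}$, and therefore
\begin{equation*}
\mathcal{Q}^C_{f,G}\bigl(\Pi_{V^C}\bigr)=\Pi_{V^{C\cup\{f\}}}\Pi_{V^C}\Pi_{V^{C\cup\{f\}}}=\Pi_{V^{C\cup\{f\}}}\preceq\Pi_{V^{C\cup\{f\}}},
\end{equation*}
with equality, as desired.

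For condition~\ref{it:Abad}, I would expand the ``Bad'' branch of the channel. Writing $P:=\Id-\Pi_{V^{C\cup\{f\}}}$, the inclusion $V^{C\cup\{f\}}\subseteq V^C$ gives $P\,\Pi_{V^C} P=\Pi_{V^C}-\Pi_{V^{C\cup\{f\}}}$, and applying \eqref{eq:badNonOrth} I would get $P\,\Pi_{V^C}P=U\sgn{\Sigma}U^\dagger$. Then, since $U$ is unitary and $(\sgn{\Sigma})^2=\sgn{\Sigma}$ (because singular values are non-negative, so $\sgn{\Sigma}$ has $\{0,1\}$ diagonal entries),
\begin{equation*}
\mathcal{Q}^C_{f,B}\bigl(\Pi_{V^C}\bigr)=W U^\dagger\bigl(U\sgn{\Sigma}U^\dagger\bigr)U W^\dagger=W\sgn{\Sigma}W^\dagger=\Pi_{\im(\Pi_f\Pi_{V^C})},
\end{equation*}
where the last equality is \eqref{eq:imAdj}. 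Finally \eqref{eq:imUpBnd} yields $\Pi_{\im(\Pi_f\Pi_{V^C})}\preceq\Pi_f\Pi_{V^{C\setminus\Gamma^+(f)}}=\Pi^{loc}_f\otimes\Pi'_{V^{C\setminus\Gamma^+(f)}}$, establishing~\ref{it:Abad}.

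There is no real obstacle: essentially all the work has already been done in Proposition~\ref{prop:subspaceIdentities}, and the proposition amounts to assembling those three identities with the trivial observations that the E-branch is empty and that $V^{C\cup\{f\}}\subseteq V^C$. The only minor care needed is to verify that $(\sgn{\Sigma})^2=\sgn{\Sigma}$ so that the Bad-branch computation collapses cleanly to $\Pi_{\im(\Pi_f\Pi_{V^C})}$, and to remark that the stronger version of \ref{it:Aerror}, with $\tr{\mathcal{Q}^C_f(\rho)}$ on the right-hand side, is also satisfied and therefore so is the version involving $\tr{\mathcal{Q}^C_{f,B}(\rho)}$.
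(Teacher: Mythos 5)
Your proof is correct and, at bottom, rests on the same Proposition~\ref{prop:subspaceIdentities} identities that the paper's argument does, but you take a somewhat different route: the paper gives a one-sentence proof citing Lemma~\ref{lemma:exactChannel}, which is phrased for pure states and establishes only \emph{support containment} of $\ket{\psi_G}$ and $\ket{\psi_B}$ in the relevant subspaces, whereas conditions~\ref{it:Agood} and~\ref{it:Abad} of Definition~\ref{def:progressive_channel} are semidefinite inequalities $\mathcal{Q}^C_{f,G}(\Pi_{V^C})\preceq\Pi_{V^{C\cup\{f\}}}$ and $\mathcal{Q}^C_{f,B}(\Pi_{V^C})\preceq\Pi^{loc}_f\otimes\Pi'_{V^{C\setminus\Gamma^+(f)}}$. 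Passing from the lemma to the proposition therefore needs the extra observation that the channel branches are CP trace-non-increasing (so that a $\preceq$-bound on the output operator norm can be extracted), a step the paper does not spell out. Your direct computation on the uniform input $\Pi_{V^C}$ yields the exact outputs $\mathcal{Q}^C_{f,G}(\Pi_{V^C})=\Pi_{V^{C\cup\{f\}}}$ and $\mathcal{Q}^C_{f,B}(\Pi_{V^C})=\Pi_{\im(\Pi_f\Pi_{V^C})}$, making the semidefinite inequalities immediate; it is therefore a self-contained and arguably cleaner verification of the proposition. (Incidentally, for the ``Bad'' branch the idempotence of $\sgn{\Sigma}$ is not needed if you cancel $U^\dagger U=\Id$ directly, but it is needed if one uses the ambiguity-free rotation $W\sgn{\Sigma}U^\dagger$ from the paper's remark after Definition~\ref{def:exactChannel}, so it is reasonable to record it.)

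One small slip in your closing aside: the variant of~\ref{it:Aerror} with $\tr{\mathcal{Q}^C_f(\rho)}$ on the right-hand side is the \emph{weaker} requirement (since $\tr{\mathcal{Q}^C_f(\rho)}\geq\tr{\mathcal{Q}^C_{f,B}(\rho)}$), so satisfying it does not logically imply the version with $\tr{\mathcal{Q}^C_{f,B}(\rho)}$; you have the implication reversed. This is harmless here because $\theta=0$ and $\mathcal{Q}^C_{f,E}\equiv 0$ make both sides vanish identically, but the phrasing should be corrected.
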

	
	This proposition is a direct corollary of Lemma~\ref{lemma:exactChannel}, and it shows that progressive quantum channels are indeed generalisations of the ideal exact channel we described before. %We are going to refer to this result in our existential proof.
		
%	\begin{proof}
%		By Definition~\ref{def:exactChannel} $\mathcal{Q}^C_{f,G}(\Pi_{V^C})=\Pi_{V^{C\cup\{f\}}}(\Pi_{V^C})\Pi_{V^{C\cup\{f\}}}$, which is equal to  $\Pi_{V^{C\cup\{f\}}}$.
		
%		According to Definition~\ref{def:exactChannel}, 
%		$\mathcal{Q}^C_{f,B}(\Pi_{V^C})
%		=W U^\dagger \left(\Id-\Pi_{V^{C\cup \{f\}}}\right)(\Pi_{V^C})\left(\Id-\Pi_{V^{C\cup \{f\}}}\right)U W^\dagger$. 
%		Since $\Pi_{V^C}=\Pi_{V^C}^2$ and $\left(\Id-\Pi_{V^{C\cup \{f\}}}\right)\Pi_{V^C}=\left(\Pi_{V^C}-\Pi_{V^{C\cup \{f\}}}\right)=\left(\Pi_{V^C}-\Pi_{V^{C\cup \{f\}}}\right)^2$ we can see that 
%		$\mathcal{Q}^C_{f,B}(\Pi_{V^C})=W U^\dagger\left(\Pi_{V^C}-\Pi_{V^{C\cup \{f\}}}\right)U W^\dagger$. 
%		Using \eqref{eq:imAdj}-\eqref{eq:imUpBnd} we can conclude by
%		$W U^\dagger\left(\Pi_{V^C}-\Pi_{V^{C\cup \{f\}}}\right)U W^\dagger=W \sgn{\Sigma} W^\dagger=\Pi_{\im(\Pi_f\Pi_{V^C})}\preceq \Pi_f \Pi_{V^{C\setminus\Gamma^+(f)}}\!=\Pi^{loc}_f\!\otimes\! \Pi'_{V^{C\setminus\Gamma^+(f)}}.\!$
%	\end{proof}

	\subsection{The key lemma for the high-level analysis}\label{subsec:keyLemma}
	In this subsection we prove the key lemma. To do so we need to define some more concepts.
	
	\begin{definition}
		Let $L\in\{G,B,E\}^T$ be the \textbf{measurement log} of Algorithm~\ref{alg:flawSelect} after $T$ applications of the quantum channel $\mathcal{Q}$, where $(G,B,E)$ stands for $(\text{``Good",``Bad",``Error"})$ respectively.
		Let $\mathcal{L}_T\subseteq\{G,B,E\}^T$ denote the set of length-$T$ \textbf{valid logs}, containing all the measurement logs that Algorithm~\ref{alg:flawSelect} may see, if we allow all $3$ possible outcomes to happen with non-zero probability, and let $\mathcal{L}=\cup_{T=0}^{\infty}\mathcal{L}_T$. 
		For $X\in \{B,E\}$ let $\mathcal{L}^X\!=\!\{L\in\mathcal{L},\text{ such that } L \text{ ends with ``}X\text{"}\}$.
		
		Finally let $\mathcal{L}^{B(k)}=\{L\in\mathcal{L}^B: L \text{ contains } $k$ \text{ occurences of } B\}$.
	\end{definition}
	Note that there are logs that the algorithm can never encounter. 
	For example, a log may contain only one $E$, and it should be the last letter, 
	since Algorithm~\ref{alg:flawSelect} immediately terminates after measurement outcome $E$.
	Also we cannot see $|F|+1$ consecutive ``$G$" because seeing $|F|$ ``$G$" in a row would necessarily set $C=F$, and thus would terminate the algorithm.
	
	%\begin{definition}
	Since Algorithm~\ref{alg:flawSelect} is deterministic apart from the labels obtained via the application of $\mathcal{Q}$, we can completely reconstruct the inner variables $C$ and $I$ of the algorithm for any given log. For $L\in \mathcal{L}$ let $C_L$ and $I_L$ denote the inner variables $C$ and $I$ of Algorithm~\ref{alg:flawSelect} after it has seen and processed the measurement results described by $L$, i.e., including the changes made to $C$ and $I$ in lines~\ref{line:outcomeIf}-\ref{line:outcomeIfEnd} of Algorithm~\ref{alg:flawSelect}.
	%doing the necessary changes to $C$ and $I$.) 
	Also let $\rho_L$ denote the unnormalised quantum state after having seen and processed all measurement results in $L$, i.e., including the resampling step in line~\ref{line:resample} if the last result was ``$B$".
	%\end{definition}
	
	Let $p_L=\prod_{f\in F}p_f^{L_f}$, where $L_f$ denotes the number of resamplings of $f$ given that we see $L\in \mathcal{L}$.
	Also for $X \in \{G,B,E\}$ let $(L,X) \in \{G,B,E\}^{T+1}$ be the log obtained by adding $X$ to the end of log $L$. If the algorithm did not terminate after $L$, i.e. $C_L\neq F$, $E\notin L$ then let $f_L$ denote the next flaw Algorithm~\ref{alg:flawSelect} will address.
	
	\begin{lemma} \label{lemma:NonComm} (Key lemma)
		If $\mathcal{Q}$ is a progressive quantum channel 
		with respect to the subspace progress measure $\{V^C:C\subseteq F\}$ with error parameter $\theta \in [0,1]$,
		then for every $L\in \mathcal{L}\setminus \mathcal{L}^E$ we have
		\begin{equation}\label{eq:rhoSub}
		\rho_L \preceq p_L\cdot\frac{\Pi_{V^{C_L}}}{N}.
		\end{equation}
		Moreover if $(L,E)\in \mathcal{L}$, then
		\begin{equation}\label{eq:lowErr}
		\Tr[\rho_{(L,E)}]\leq 2\theta \cdot\Tr[\rho_{(L,B)}].
		\end{equation}
	\end{lemma}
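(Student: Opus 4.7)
The plan is to proceed by induction on the length $T$ of the log $L$. The base case $L=\varnothing$ is immediate: by construction $\rho_\varnothing=\Id/N$, $C_\varnothing=\varnothing$, so $V^{C_\varnothing}=\mathcal{H}_{[n]}$, $\Pi_{V^{C_\varnothing}}=\Id$, and $p_\varnothing=1$, giving equality in \eqref{eq:rhoSub}. For the inductive step I assume \eqref{eq:rhoSub} holds for $L\in\mathcal{L}\setminus\mathcal{L}^E$ and treat the three possible extensions $(L,G)$, $(L,B)$, $(L,E)$ separately. The crucial general observation is that the (sub-)quantum operations $\mathcal{Q}^C_{f,G}$, $\mathcal{Q}^C_{f,B}$, $\mathcal{Q}^C_{f,E}$ from part \ref{it:Aoutcomes} of Definition~\ref{def:progressive_channel} are completely positive, and hence monotone with respect to the semidefinite order $\preceq$; the resampling map $R_f$ is likewise a quantum channel and therefore monotone.

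For the ``Good'' case I would combine the inductive hypothesis $\rho_L\preceq (p_L/N)\,\Pi_{V^{C_L}}$ with property \ref{it:Agood} to obtain
\begin{equation*}
\rho_{(L,G)}=\mathcal{Q}^{C_L}_{f_L,G}(\rho_L)\,\preceq\,\frac{p_L}{N}\,\mathcal{Q}^{C_L}_{f_L,G}\!\left(\Pi_{V^{C_L}}\right)\preceq\frac{p_L}{N}\,\Pi_{V^{C_L\cup\{f_L\}}},
\end{equation*}
which is \eqref{eq:rhoSub} for $(L,G)$, since $C_{(L,G)}=C_L\cup\{f_L\}$ and $p_{(L,G)}=p_L$ (no resampling occurred). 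For the ``Bad'' case, the pre-resampling state satisfies, by property \ref{it:Abad},
\begin{equation*}
\mathcal{Q}^{C_L}_{f_L,B}(\rho_L)\,\preceq\,\frac{p_L}{N}\,\Pi^{loc}_{f_L}\otimes \Pi'_{V^{C_L\setminus\Gamma^+(f_L)}}.
\end{equation*}
Applying $R_{f_L}$ traces out $b(f_L)$ and replaces it by the maximally mixed state on those qubits; a direct computation gives $R_{f_L}\!\bigl(\Pi^{loc}_{f_L}\otimes\Pi'_{V^{C_L\setminus\Gamma^+(f_L)}}\bigr)=p_{f_L}\,\Pi_{V^{C_L\setminus\Gamma^+(f_L)}}$, using that $\Tr(\Pi^{loc}_{f_L})=2^{|b(f_L)|}p_{f_L}$. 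Since $C_{(L,B)}=C_L\setminus\Gamma^+(f_L)$ and $p_{(L,B)}=p_L\cdot p_{f_L}$, this yields \eqref{eq:rhoSub} for $(L,B)$.

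For the ``Error'' bound I would use that the inductive hypothesis implies $\rho_L$ is supported on $V^{C_L}$ (because $\rho_L\preceq (p_L/N)\Pi_{V^{C_L}}$), so property \ref{it:Aerror} applies and gives $\Tr[\mathcal{Q}^{C_L}_{f_L,E}(\rho_L)]\leq 2\theta\,\Tr[\mathcal{Q}^{C_L}_{f_L,B}(\rho_L)]$. Since the resampling map $R_{f_L}$ is trace preserving, $\Tr[\rho_{(L,B)}]=\Tr[\mathcal{Q}^{C_L}_{f_L,B}(\rho_L)]$, and therefore \eqref{eq:lowErr} follows immediately.

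The main obstacle I expect is getting the bookkeeping right in the ``Bad'' case: one must carefully verify the tensor-product identity for $\Pi_{V^{C_L\setminus\Gamma^+(f_L)}}$ (which uses the definition of a subspace progress measure, so that the projector genuinely splits between $b(f_L)$ and its complement), and then track that the combined pre-factor produced by the resampling is precisely $p_{f_L}$, matching the update $p_{(L,B)}=p_L\cdot p_{f_L}$. Everything else is a routine application of CP-monotonicity and the defining properties of a progressive channel.
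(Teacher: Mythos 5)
Your proposal is correct and follows essentially the same route as the paper: induction on the log length, with the Good/Bad/Error cases handled by properties \ref{it:Agood}, \ref{it:Abad}, and \ref{it:Aerror} of Definition~\ref{def:progressive_channel} respectively, and with the observation that $R_{f}$ sends $\Pi^{loc}_{f}\otimes\Pi'_{V^{C\setminus\Gamma^+(f)}}$ to $p_f\cdot\Pi_{V^{C\setminus\Gamma^+(f)}}$. You make explicit two points the paper leaves implicit — that CP monotonicity of the sub-maps $\mathcal{Q}^C_{f,\cdot}$ preserves $\preceq$, and that trace preservation of $R_f$ gives $\Tr[\rho_{(L,B)}]=\Tr[\mathcal{Q}^{C_L}_{f_L,B}(\rho_L)]$ — which is a welcome clarification but not a different argument.
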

	\begin{proof}
		We prove \eqref{eq:rhoSub} and \eqref{eq:lowErr} for all $L\in \mathcal{L}_T$ by induction on $T$. For $T=0$ we have $\mathcal{L}_0=\{\varnothing\}$. $\rho_\varnothing=\rho_0=\Id/N$, $\Pi_{V^\varnothing}=\Id$ and $p_\varnothing=1$ so the relation holds with equality. Now suppose that \eqref{eq:rhoSub} holds for all $L\in\mathcal{L}_T\setminus \mathcal{L}^E$. For the induction step it is enough to show that \eqref{eq:rhoSub} also holds for $(L,G)$ and 
		$(L,B)$, whenever $(L,G)$ and $(L,B)$ are in $\mathcal{L}_{T+1}$. Let us denote $C=C_L$, $C_G=C_L\cup\{f_L\}$, $C_B=C_L\setminus\Gamma(f_L)$ and $f=f_L$. Observe $C_{(L,G)}=C_G$ and $C_{(L,B)}=C_B$.	
		First we show the inductive step for $(L,G)$:
		\begin{align*}
		\rho_{(L,G)}=&\mathcal{Q}^C_{f,G}(\rho_L) & & (\text{by definition})\\
		\preceq& \mathcal{Q}^C_{f,G}\left(p_L\cdot\frac{\Pi_{V^{C}}}{N} \right)  & & (\text{by the inductive hypothesis})\\
		\preceq& p_L\cdot\frac{\Pi_{V^{C_G}}}{N}   & & (\text{by property }\ref{it:Agood})\\
		=& p_{(L,G)}\cdot\frac{\Pi_{V^{C_{(L,G)}}}}{N} 
		& & (p_{(L,G)}=p_L;C_{(L,G)}=C_L )
		\end{align*}
		Now we show the inductive step for $(L,B)$:
		\begin{align*}
		\rho_{(L,B)}=&R_{f}\left(\mathcal{Q}^C_{f,B}(\rho_L)\right) & & (\text{by definition})\\
		\preceq& R_{f}\left(\mathcal{Q}^C_{f,B}\left(p_L\cdot\frac{\Pi_{V^{C}}}{N} \right)\right)  & & (\text{by the inductive hypothesis})\\
		\preceq& \frac{p_L}{2^n}\cdot
		R_{f}\left(\Pi^{loc}_{f}\otimes \Pi'_{V^{C_B}}\right) & & (\text{by property }\ref{it:Abad})\\
		=& \frac{p_L}{N}\cdot\frac{\Tr[\Pi^{loc}_{f}]}{2^{|b(f)|}}\cdot 
		\Id_{b(f)}\otimes \Pi'_{V^{C_B}} & & (\text{because }R_f\text{ acts locally} )\\
		=& \frac{p_L}{N}\cdot p_f\cdot 
		\Id_{b(f)}\otimes \Pi'_{V^{C_B}} & & (\text{by the definition of }p_f )\\
		=& \frac{p_L}{N}\cdot p_f\cdot 	\Pi_{V^{C_B}} & & (\text{by property }\ref{it:Abad})\\
		=& \frac{p_{(L,B)}}{N}\cdot \Pi_{V^{C_{(L,B)}}} & & (p_{(L,B)}=p_L\cdot p_f)	
		\end{align*}	
		For the proof of \eqref{eq:lowErr}, first note that $(L,E)\in \mathcal{L}$ implies that Algorithm~\ref{alg:flawSelect} does not terminate after seeing $L$ and thus also $(L,B)\in \mathcal{L}$.
		Finally due to property \ref{it:Aerror} of $\mathcal{Q}$, we have
		\begin{equation*}
		\tr{\rho_{(L,E)}}=\tr{\mathcal{Q}^C_{f,E}(\rho_L)}
		\leq 2\theta\cdot \tr{\mathcal{Q}^C_{f,B}(\rho_L)}
		=2\theta \cdot \tr{\rho_{(L,B)}}.^{}
		\end{equation*}
	\end{proof}	
	
	\subsection{Upper bounds by stable set sequences}\label{subsec:upperBounds}
	In Appendix~\ref{apx:SSSBounds} we present some upper bounds on weighted sums of stable set sequences developed by Harvey and Vondrák~\cite{HarveyVondrak15}.
	In this subsection we show how to use these bounds for deducing upper bounds on the expected number of resamplings performed by Algorithm~\ref{alg:flawSelect}.
	
	\begin{definition}\label{def:StableSequence} (Stable set sequences)
		A sequence of sets ${\cal I}=(I_1,\ldots,I_s)$, such that $s\geq 0$, $\forall i \in [s]:\, I_i\in \Ind(F)\!\setminus\!\{\varnothing\}$, 
		and  $\forall r\in[s-1]:\, I_{r+1}\subseteq\Gamma^+(I_r)$, is called a {\em\bf stable set sequence}. For a stable set sequence ${\cal I}=(I_1,\ldots,I_s)$ we introduce notation $p_{\cal I}=\prod_{i=1}^{s}p_{I_i}$ and $\lVert {\cal I}\rVert=\sum_{i=1}^s|I_i|$.
		Also let us introduce ${\cal IS}:=\{{\cal I}: {\cal I} \text{ is a stable set sequence in } F\}$ for the {\em\bf set of stable set sequences}. 
		Finally let ${\cal IS}_T:=\{{\cal I}\in {\cal IS}: \lVert {\cal I}\rVert=T \}$,
		and ${\cal IS}_{\geq T}:=\{{\cal I}\in {\cal IS}: \lVert {\cal I}\rVert\geq T \}$.
	\end{definition}
	
	\begin{prop}
		The sequence of sets of flaws produced by a run of Algorithm~\ref{alg:flawSelect} is always a stable set sequence. (Note that we ignore the empty set produced by the final round.)
	\end{prop}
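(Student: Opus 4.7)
My plan is to verify the three defining properties of a stable set sequence: each $I_r$ (excluding the final possibly empty one) is (a) an independent set, (b) non-empty, and (c) satisfies the containment $I_{r+1} \subseteq \Gamma^+(I_r)$. For (a) I would simply unpack the selection rule in line~\ref{line:deterministicSelection}: the chosen $f$ lies in $F \setminus (C \cup \Gamma^+(I))$, so in particular $f \notin \Gamma^+(I)$, i.e.\ $f$ is non-adjacent to every element of $I$ in the dependency graph $G$. Thus $I \cup \{f\}$ remains in $\Ind(F)$, and since only such additions grow $I$ during a round, the resulting $I_r$ is independent.

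For (b) I would observe that a round terminates precisely when the inner-loop condition $C \cup \Gamma^+(I) \neq F$ fails. If at this point $I = \varnothing$, then $\Gamma^+(I) = \varnothing$ and the exit condition reduces to $C = F$, which immediately ends the outer while loop and terminates the algorithm. Hence any non-final round exits with a non-empty $I$, as required.

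The heart of the proof is (c), which I would establish by showing the inductive invariant
\[
C \cup \Gamma^+(I) \supseteq C_r \quad \text{throughout round } r+1,
\]
where $C_r$ denotes the value of $C$ at the end of round $r$. Since round $r$ terminates with $C_r \cup \Gamma^+(I_r) = F$, this invariant will give $F \setminus (C \cup \Gamma^+(I)) \subseteq F \setminus C_r \subseteq \Gamma^+(I_r)$, so every flaw $f$ selected in round $r+1$ — and hence every element added to $I_{r+1}$ — lies in $\Gamma^+(I_r)$. The invariant holds at initialisation because $C = C_r$ and $I = \varnothing$. For the inductive step I would check the two branches of lines~\ref{line:outcomeIf}--\ref{line:outcomeIfEnd}: after a ``Good'' outcome only $C$ grows, so the left side only increases; after a ``Bad'' outcome on flaw $f$ we replace $C$ by $C \setminus \Gamma(f)$ and $I$ by $I \cup \{f\}$, and the identity $(C \setminus \Gamma(f)) \cup \Gamma^+(f) \supseteq C$ shows that the decrease in $C$ is fully compensated by the addition of $\Gamma^+(f)$ to $\Gamma^+(I)$.

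The main subtlety I expect is this Bad-outcome case, since it is the only point where $C$ can lose elements; the key insight is that the removed set $\Gamma(f)$ is exactly covered by the newly enlarged $\Gamma^+(I)$, so the monotone invariant persists. Once that is verified, the three properties combine to confirm that the sequence $(I_1, \ldots, I_{s})$ with the last empty $I$ discarded is a stable set sequence in the sense of Definition~\ref{def:StableSequence}.
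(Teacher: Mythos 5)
Your proposal is correct and follows essentially the same approach as the paper: both establish non-emptiness by noting that a resample-free round forces termination, and both derive $I_{r+1}\subseteq \Gamma^+(I_r)$ from the inner-loop exit condition $C_r\cup\Gamma^+(I_r)=F$ together with the observation that flaws removed from $C$ mid-round cannot be reselected within that round. Your explicit inductive invariant $C\cup\Gamma^+(I)\supseteq C_r$ is simply a more formal restatement of the paper's informal remark that flaws which become unchecked during round $r+1$ stay blocked by $\Gamma^+(I)$, so the two arguments coincide in substance.
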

	\begin{proof}
		If no resamplings happen in a round, then Algorithm~\ref{alg:flawSelect} terminates and thus, apart from the final round, $I_i$ is non-empty. Let $U_i=F\setminus C$ denote the unchecked flaws at the end of the $i$-th round. Due to the condition in line~\ref{line:indepCondition}, $U_i\subseteq \Gamma^+(I_i)$. But observe that the flaws that become unchecked by a resampling during the $(i+1)$-st round cannot be addressed in this round, again by the condition in line~\ref{line:indepCondition}. So $I_{i+1}\subseteq U_i\subseteq \Gamma^+(I_i)$ as required.
	\end{proof}	
	
	\begin{prop} \label{prop:IndepAndLog}
		For every $L\in \mathcal{L}^B$ we can uniquely determine the stable-set sequence ${\cal I}_L\in{\cal IS}$ that is produced  by Algorithm~\ref{alg:flawSelect} after all measurement results in $L$ were processed. Moreover, this mapping is injective.
	\end{prop}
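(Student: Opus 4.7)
My plan is to handle the two assertions of the proposition separately. For the first, I would exploit the fact that Algorithm~\ref{alg:flawSelect} is fully deterministic apart from the outcome labels returned by $\mathcal{Q}$: given $L$ one simply replays the algorithm, feeding the outcomes from $L$ in order, and reads off the sequence $I_1,\ldots,I_s$ that successive rounds place into the variable $I$. I would set ${\cal I}_L:=(I_1,\ldots,I_s)$, where $I_s$ is the value of $I$ immediately after the final $B$ of $L$ is processed (and may correspond to an only partially completed round). To check that ${\cal I}_L\in{\cal IS}$, the previous proposition already handles the completed rounds $I_1,\ldots,I_{s-1}$; for $I_s$ I would argue that $C\cup\Gamma^+(I)$ can only grow during a round --- if some $f$ is ever removed from $C$ by a $B$-outcome on a neighbour $f'$, then $f\in\Gamma(f')\subseteq\Gamma^+(I)$ immediately afterwards --- so throughout round $s$ we have $F\setminus(C\cup\Gamma^+(I))\subseteq F\setminus C_{s-1}\subseteq\Gamma^+(I_{s-1})$, where the last inclusion is the round $s-1$ termination condition. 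Hence $I_s\subseteq\Gamma^+(I_{s-1})$, and $I_s\neq\varnothing$ since $L$ ends with $B$.

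For injectivity I would exhibit an explicit inverse. Given any stable-set sequence ${\cal I}=(I_1,\ldots,I_s)$, I would run a ``ghost'' simulation of Algorithm~\ref{alg:flawSelect} which, instead of consulting a log, uses ${\cal I}$ to decide outcomes: during round $i$, whenever the minimal available flaw $f$ is picked by line~\ref{line:deterministicSelection}, output $B$ if $f\in I_i\setminus I$ and $G$ otherwise, letting rounds $i<s$ terminate normally when $C\cup\Gamma^+(I)=F$ and letting the last round terminate as soon as $I=I_s$. An induction on the step index shows that, when ${\cal I}={\cal I}_L$, the ghost simulation picks the same flaw and outputs the same label as the actual execution on $L$: at each step both simulations agree on the current $(C,I)$ pair and therefore on the selected $f$, and by definition of ${\cal I}_L$ the real outcome for $f$ in round $i$ is $B$ precisely when $f\in I_i$ (equivalently $f\in I_i\setminus I$, since $f$ has not been addressed earlier in the round). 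The ghost therefore reconstructs $L$ exactly, giving injectivity.

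The only small subtlety I anticipate needing to justify carefully is the claim ``no flaw is addressed twice in a round'' --- needed both to read off ${\cal I}_L$ unambiguously and to make the ghost well-defined. I would dispatch this by observing that any flaw entering $C$ via a $G$ stays in $C\cup\Gamma^+(I)$ for the rest of the round (even if later removed from $C$ by a $B$ on some neighbour $f'$, it then lies in $\Gamma^+(f')\subseteq\Gamma^+(I)$), and any flaw added to $I$ via a $B$ obviously stays in $\Gamma^+(I)$. Hence the rule ``minimal $f\in F\setminus(C\cup\Gamma^+(I))$'' visits each flaw at most once per round, and the rest is routine bookkeeping.
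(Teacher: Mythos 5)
Your proof is correct, but the injectivity argument follows a genuinely different route from the paper's. The paper does not build an explicit inverse; instead it observes that $\lVert\mathcal{I}_L\rVert=k$ for $L\in\mathcal{L}^{B(k)}$ (each $B$ in the log contributes exactly one element to one of the $I_i$), so it suffices to show $\mathcal{I}_L\neq\mathcal{I}_{L'}$ for two distinct logs with the same number of $B$'s. Since neither can be a prefix of the other, there is a first position $i+1$ where they differ, with (WLOG) $L$ continuing by $B$ and $L'$ by $G$; letting $j$ be the current round and $f_{L_i}$ the flaw selected there, the paper concludes $f_{L_i}\in I_j$ in $\mathcal{I}_L$ but $f_{L_i}\notin I'_j$ in $\mathcal{I}_{L'}$, so the sequences differ. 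This is a short ``first divergence'' argument; your ``ghost simulation'' instead reconstructs the full log $L$ from $\mathcal{I}_L$, which proves a slightly stronger statement (an explicit left inverse on the image) at the cost of needing the step-by-step induction on the state pair $(C,I)$ and a careful termination rule for the final, possibly partial, round. Both arguments ultimately rest on the same invariant you spell out --- that once a flaw is addressed in a round it stays in $C\cup\Gamma^+(I)$ and hence is never re-selected --- which the paper leaves implicit but which your write-up usefully makes explicit. Your additional verification that $\mathcal{I}_L\in\mathcal{IS}$, including the partial last round, is also a reasonable supplement: the paper's preceding proposition only addresses completed rounds, so strictly speaking your extra check closes a small gap the paper glosses over.
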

	\begin{proof}
		The unique determination of ${\cal I}_L$ follows from the deterministic nature of the classical part of Algorithm~\ref{alg:flawSelect}; the selection rule in line~\ref{line:deterministicSelection} plays a crucial role for establishing determinism.
		For injectivity note that for $L\in\mathcal{L}^{B(k)}$ we get $\left\lVert{\cal I}_L\right\rVert=k$. So it is enough to consider
		$L\neq L'\in\mathcal{L}^{B(k)}$ and show that ${\cal I}_L$ and ${\cal I}_{L'}$ differ.
		Since both $L$ and $L'$ are in $\mathcal{L}^{B(k)}$ it cannot happen that one is the continuation of the other.
		Let $i+1$ be the first position where they differ. Let $L_{i}\in\mathcal{L}_i$  denote the common part of $L$ and $L'$ before the $i+1$-st position.
		Without loss of generality, assume $L$ is a continuation of $(L_i,B)$ and $L'$ is the continuation of $(L_i,G)$. 
		Let us assume that after seeing $L_i$, the algorithm is in its $j$-th round and let $f_{L_i}$ be the next flaw it will address. Then $I_j$ (the $j$-th independent set in ${\cal I}_L$) contains $f_{L_i}$,
		whereas $I'_j$ (the $j$-th independent set in ${\cal I}_{L'}$) does not contain $f_{L_i}$, so ${\cal I}_L$ and ${\cal I}_{L'}$ are clearly different.
	\end{proof}

	Now we have all the tools for proving upper bounds on the expected number of resamplings that Algorithm~\ref{alg:flawSelect} perform.
	\begin{theorem} \label{thm:generalR}
		Let $\mathcal{Q}$ be a progressive quantum channel with respect to some subspace progress measure $\{V^C:C\subseteq F\}$ with error parameter $\theta \in [0,1]$. Let $\mathbb{E}[\#\mathrm{Resamplings}]$ denote the expected number of resamplings that Algorithm~\ref{alg:flawSelect} performs while using $\mathcal{Q}$ in line~\ref{line:applyQ}.
		If $(p_f)_{f\in F}\in(0,1)^{|F|}$ satisfies the condition
		\begin{align*}
		&\text{\textbullet } \text{ \eqref{eq:SLC}, then } % & &
		\,\mathbb{E}[\#\mathrm{Resamplings}] = \mathcal{O}\left(n\cdot\frac{|F|}{d}\right).  \\	
		&\text{\textbullet } \text{ \eqref{eq:ALC}, then } % & &
		\,\mathbb{E}[\#\mathrm{Resamplings}] = \mathcal{O}\left(n\cdot\sum_{f\in F} \frac{x_f}{1-x_f}\right).  \\
		&\text{\textbullet } \text{ \eqref{eq:CEC}, then } % & &
		\,\mathbb{E}[\#\mathrm{Resamplings}]  = \mathcal{O}\left(n\cdot\sum_{f\in F} y_f\right).  \\
		&\text{\textbullet } \text{ \eqref{eq:SHC}, then } % & &
		\,\mathbb{E}[\#\mathrm{Resamplings}]  = \mathcal{O}\left(n\cdot\sum_{f\in F} \frac{q_{\{f\}}}{q_{\varnothing}}\right). 	  
		\end{align*}
		Moreover if Algorithm~\ref{alg:flawSelect} terminates with ``SUCCESS", then the resulting quantum state lies in~$V^F$,
		and the probability of terminating with ``ERROR" is less than $2\theta \cdot R$.
	\end{theorem}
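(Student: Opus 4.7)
The plan is to combine the key Lemma~\ref{lemma:NonComm} with the Harvey--Vondr\'ak-style bounds on weighted sums over stable set sequences collected in Appendix~\ref{apx:SSSBounds}. Almost all of the conceptual work has already been done in the lemma; the rest is a short counting argument followed by an invocation of the appropriate classical bound.

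For termination in $V^F$, if Algorithm~\ref{alg:flawSelect} reaches ``SUCCESS'' along a log $L$, then the outer while loop exited with $C_L=F$, hence $V^{C_L}=V^F$. Inequality \eqref{eq:rhoSub} immediately gives $\rho_L\preceq (p_L/N)\Pi_{V^F}$, so $\rho_L$ is supported on $V^F$ and its conditional normalisation lies in $V^F$.

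For the expected number of resamplings I would use the standard tail-sum identity. Since the event $\#\mathrm{Resamplings}\ge k$ corresponds exactly to the run passing, at the moment of its $k$-th resampling, through a unique element of $\mathcal{L}^{B(k)}$, and these elements are mutually exclusive as instantaneous ``states'' of the execution, we obtain
\begin{equation*}
\mathbb{E}[\#\mathrm{Resamplings}]=\sum_{k\ge 1}\sum_{L\in\mathcal{L}^{B(k)}}\Tr[\rho_L]\le\sum_{k\ge 1}\sum_{L\in\mathcal{L}^{B(k)}}p_L\le\sum_{{\cal I}\in{\cal IS}}p_{\cal I},
\end{equation*}
where the first inequality uses \eqref{eq:rhoSub} together with $\Tr[\Pi_{V^{C_L}}]\le N$, and the second uses the injection $L\mapsto{\cal I}_L$ from Proposition~\ref{prop:IndepAndLog} together with the weight-preservation identity $p_L=p_{{\cal I}_L}$; the latter holds because each flaw $f$ is resampled in exactly those rounds $i$ with $f\in I_i$, so $\prod_f p_f^{L_f}=\prod_i p_{I_i}$. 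The four explicit bounds now follow by plugging in the corresponding upper bounds on $\sum_{{\cal I}\in{\cal IS}}p_{\cal I}$ under \eqref{eq:SLC}, \eqref{eq:ALC}, \eqref{eq:CEC}, and \eqref{eq:SHC} proved in Appendix~\ref{apx:SSSBounds}.

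Finally, for the error probability, applying \eqref{eq:lowErr} term-by-term and then recognising the resulting sum as precisely the tail sum already computed yields
\begin{equation*}
\Pr[\text{ERROR}]=\sum_{L:(L,E)\in\mathcal{L}}\Tr[\rho_{(L,E)}]\le 2\theta\sum_{L\in\mathcal{L}^B}\Tr[\rho_{L}]=2\theta\cdot\mathbb{E}[\#\mathrm{Resamplings}]\le 2\theta R.
\end{equation*}
The conceptual heart of the argument is Lemma~\ref{lemma:NonComm}; what remains here is essentially bookkeeping, and the main subtlety is avoiding double-counting in the tail sum, which is exactly what the injectivity and weight-preservation of $L\mapsto{\cal I}_L$ guarantee. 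The non-trivial combinatorial step of converting $\sum_{\cal I}p_{\cal I}$ into the four explicit expressions is purely classical and is imported unchanged from Appendix~\ref{apx:SSSBounds}.
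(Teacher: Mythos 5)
Your argument is correct up to the displayed chain of inequalities, but the final step is where things go wrong, and the error produces an exponentially weaker bound than the theorem claims.

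You conclude $\mathbb{E}[\#\mathrm{Resamplings}]\le\sum_{{\cal I}\in{\cal IS}}p_{\cal I}$ and then say the four bounds ``follow by plugging in the corresponding upper bounds on $\sum_{{\cal I}\in{\cal IS}}p_{\cal I}$ \ldots proved in Appendix~\ref{apx:SSSBounds}.'' But look at what those upper bounds actually are. Lemma~\ref{lemma:pathEsts} gives $\sum_{{\cal I}\in{\cal IS}}p_{\cal I}\leq\prod_{f\in F}\frac{1}{1-x_f}$ under \eqref{eq:ALC}, $\leq\prod_{f\in F}(1+y_f)$ under \eqref{eq:CEC}, and $\leq 1/q_\varnothing$ under \eqref{eq:SHC}. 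These are multiplicative over $F$ and hence generically \emph{exponential} in $|F|$, whereas the theorem claims $\mathcal{O}\bigl(n\cdot\sum_f x_f/(1-x_f)\bigr)$ etc., which is polynomial. (Indeed the proof of Theorem~\ref{thm:generalR} itself uses $q_\varnothing\geq 2^{-n}$, i.e.\ $1/q_\varnothing$ can be as large as $2^n$.) So ``plugging in'' does not give the stated result; the chain of reasoning is sound but delivers a bound too weak by an exponential factor.

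The missing idea is the truncation $P(\text{at least }k\text{ resamplings})\leq 1$: you need to insert $\min(1,\cdot)$ before summing over $k$, arriving at
\begin{equation*}
\mathbb{E}[\#\mathrm{Resamplings}]\leq\sum_{k\geq 1}\min\Bigl(1,\sum_{{\cal I}\in{\cal IS}_k}p_{\cal I}\Bigr),
\end{equation*}
and then invoke Corollary~\ref{cor:expected}, not Lemma~\ref{lemma:pathEsts}. The quantitative gain comes from the tail estimate in Theorem~\ref{thm:TailWithoutSlack} (via the slack argument of Theorem~\ref{thm:slack}): after roughly $T=\mathcal{O}\bigl(\log(1/q_\varnothing)\cdot\sum_f q_{\{f\}}/q_\varnothing\bigr)$ terms, the contribution $\sum_{{\cal I}\in{\cal IS}_k}p_{\cal I}$ decays geometrically, so the truncated sum is $\mathcal{O}(T)$, and $\log(1/q_\varnothing)=\mathcal{O}(n)$ supplies the extra factor $n$ in the theorem. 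Your reduction to the stable-set-sequence sum, the claim $p_L=p_{{\cal I}_L}$, the support argument for the ``SUCCESS'' state, and the error-probability bound are all fine as written; only the invocation of the Appendix~\ref{apx:SSSBounds} bounds needs to be replaced by the $\min(1,\cdot)$-truncated version and Corollary~\ref{cor:expected}.
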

	\begin{proof}
		For the upper bound on $\mathbb{E}[\#\mathrm{Resamplings}]$ we invoke the results of Harvey and Vondrák~\cite{HarveyVondrak15} as presented in Corollary~\ref{cor:expected} in our Appendix~\ref{apx:SSSBounds}, using the following argument:
		
		\begin{align*}
		\mathbb{E}[\#\mathrm{Resamplings}]
		=& \sum_{k=1}^{\infty}k\cdot P\left(\text{doing exactly } k \text{ resamplings}\right) \\		
		=& \sum_{k=1}^{\infty}P\left(\text{doing at least } k \text{ resamplings}\right) & &(\text{see, e.g., \cite[Lemma 2.9]{mitzenmacher05probability} })\\	
		=&\sum_{k=1}^{\infty}\min\left(1,P\left(\text{doing at least } k \text{ resamplings}\right)\right)\\
		=&\sum_{k=1}^{\infty}\min\left(1,\sum_{L\in \mathcal{L}^{B(k)}}P\left(\text{seeing outcomes } L\right)\right)\\
		=& \sum_{k=1}^{\infty}\min\left(1,\sum_{L\in \mathcal{L}^{B(k)}}\tr{\rho_L}\right)\\
		\leq&\sum_{k=1}^{\infty}\min\left(1,\sum_{L\in \mathcal{L}^{B(k)}}p_L\right) & & (\text{by Lemma~\ref{lemma:NonComm}})\\
		\leq&\sum_{k=1}^{\infty}\min\left(1,\sum_{{\cal I}\in{\cal IS}_k} p_{\cal I}\right) & & (\text{by Proposition~\ref{prop:IndepAndLog}})\\	 
		\end{align*}	
		
		Now we apply Corollary~\ref{cor:expected} (from Appendix~\ref{apx:SSSBounds}), using the additional observation that $\log\left(\frac{1}{q_\varnothing}\right)=\mathcal{O}(n)$.
		To see this bound, first note, that since we work with qubits we have $\left(p_f\cdot 2^{|b(f)|}\right)\in \mathbb{N}$, and therefore by Definition~\ref{def:IndepPoly} $\left(q_\varnothing\cdot 2^{n}\right)\in \mathbb{Z}$.
		Due to Proposition~\ref{prop:ShearerImplied}, \eqref{eq:SHC} holds, but \eqref{eq:SHC} requires $q_\varnothing>0$. This implies $q_\varnothing\geq 2^{-n}$, and thus $\log\left(\frac{1}{q_\varnothing}\right)=\mathcal{O}(n)$. Finally in the case of \eqref{eq:SLC} we use the reduction of Proposition~\ref{prop:ShearerImplied} to the case \eqref{eq:ALC}.
		
		The statements about the quantum state at termination and the error probability follow from Lemma~\eqref{lemma:NonComm}.
	\end{proof}
	
	\begin{corollary}\label{cor:existential} (Existential Theorem)
		Suppose the projectors $\left(\Pi_f\right)_{f\in F}$ satisfy any of our four conditions \eqref{eq:SLC}-\eqref{eq:SHC}. Then the local Hamiltonian $H=\sum_{f\in F}\Pi_f$ is frustration-free, or in other words $\bigcap_{f\in F}\ker\left(\Pi_f\right)$ has dimension at least $1$.
	\end{corollary}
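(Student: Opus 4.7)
The plan is to derive the existential statement as a direct consequence of Theorem~\ref{thm:generalR} applied with the ideal (exact) quantum channel of Definition~\ref{def:exactChannel}. By Proposition~\ref{prop:exactProgressive} this channel is progressive with respect to the exact progress measure (Definition~\ref{def:exactMeasure}) with error parameter $\theta = 0$, and the target subspace at the end is $V^F = \bigcap_{f\in F}\ker(\Pi_f)$, exactly the frustration-free subspace whose non-triviality we want to establish.

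First I would instantiate Algorithm~\ref{alg:flawSelect} with $\mathcal{Q}$ equal to the exact quantum channel. Under any of the four conditions \eqref{eq:SLC}--\eqref{eq:SHC}, Theorem~\ref{thm:generalR} then guarantees that the expected number of resamplings is finite (bounded by the corresponding explicit expression in $n$, $|F|$, and the relevant parameters). In particular, the algorithm terminates almost surely. Moreover, since $\theta = 0$, the theorem's bound on the error probability gives $2\theta \cdot R = 0$, so the algorithm never outputs ``ERROR''. Hence with probability $1$ the algorithm terminates with ``SUCCESS''.

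Next I would invoke the remaining part of Theorem~\ref{thm:generalR}, which states that upon successful termination the resulting quantum state lies in $V^F$. Since the algorithm starts from a valid (normalised) density operator and evolves through trace-non-increasing completely positive maps, the existence of a ``SUCCESS'' run with non-zero probability produces a non-zero density operator supported on $V^F$. This immediately forces $V^F \neq \{0\}$, i.e., $\dim\!\left(\bigcap_{f\in F}\ker(\Pi_f)\right) \geq 1$, which is the claimed frustration-freeness of $H = \sum_{f\in F}\Pi_f$.

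I do not anticipate a genuine obstacle here, since all the heavy lifting is already done: Proposition~\ref{prop:exactProgressive} handles the progressivity of the exact channel, and Theorem~\ref{thm:generalR} packages both the convergence of the expected number of resamplings and the correctness of the output state. The only point requiring a brief comment is that the reduction from \eqref{eq:SLC} is handled via Proposition~\ref{prop:ShearerImplied}, and that in each case the hypothesised condition is the very condition needed for Corollary~\ref{cor:expected} (used inside the proof of Theorem~\ref{thm:generalR}) to yield a finite bound; this closes the argument uniformly for all four conditions.
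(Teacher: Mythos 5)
Your proposal is correct and follows essentially the same route as the paper: both invoke Proposition~\ref{prop:exactProgressive} to show the exact quantum channel is a zero-error progressive channel, then apply Theorem~\ref{thm:generalR} to conclude that Algorithm~\ref{alg:flawSelect} terminates with ``SUCCESS'' with probability $1$ (finite expected resamplings, and $2\theta\cdot R=0$ since $\theta=0$) and that the resulting state lies in $V^F=\bigcap_{f\in F}\ker(\Pi_f)$, forcing $\dim(V^F)\geq 1$. You merely spell out a few steps that the paper leaves implicit.
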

	\begin{proof}
		Proposition~\ref{prop:exactProgressive} shows that the exact quantum channel constructed in Definition~\ref{def:exactChannel} is a zero-error progressive quantum channel with respect to the exact progress measure. Moreover, Theorem~\ref{thm:generalR} shows that if run Algorithm~\ref{alg:flawSelect} using such a progressive quantum channel, then the resulting algorithm terminates with ``SUCCESS" with probability $1$ and at termination the quantum state lies in the kernel of $H$. Therefore, $\dim(\ker(H))=\dim\left(\bigcap_{f\in F}\ker\left(\Pi_f\right)\right)\geq 1$.
	\end{proof}
	
	Corollary~\ref{cor:existential} is almost equivalent to ~\cite[Corollary 1.6]{LovAmb} and~\cite[Theorem 1]{SattathLatice}; the only difference is, that the previous results provide a non-trivial lower-bound on $\dim(\ker(H))$, whereas our results only show that $\dim(\ker(H))\geq 1$.
	
	\subsection{Runtime bounds on different versions of the quantum algorithm}\label{subsec:diffVersions}
	
	\begin{corollary}
		If we run Algorithm~\ref{alg:flawSelect} with $\mathcal{Q}_f^C$ replaced by projective measurements $\Pi_f$, then the bounds on the expected number of resamplings from Theorem~\ref{thm:generalR} hold.
	\end{corollary}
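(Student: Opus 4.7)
The plan is to verify that the projective measurement $\{\mathrm{Id}-\Pi_f, \Pi_f\}$ -- with outcome $\mathrm{Id}-\Pi_f$ labelled $G$, outcome $\Pi_f$ labelled $B$, and no $E$ outcome -- is a progressive quantum channel with respect to the \emph{trivial} progress measure $\{V^C = \mathcal{H}_{[n]} : C\subseteq F\}$ from Definition~\ref{def:trivialMeasure}, with error parameter $\theta = 0$. Once this is established, Theorem~\ref{thm:generalR} applies directly and yields the stated bounds on $\mathbb{E}[\#\mathrm{Resamplings}]$.

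Under the trivial progress measure, $\Pi_{V^C} = \mathrm{Id}$ and $\Pi'_{V^{C\setminus\Gamma^+(f)}} = \mathrm{Id}_{[n]\setminus b(f)}$ for every $C\subseteq F$, so each of the four properties in Definition~\ref{def:progressive_channel} collapses to something essentially trivial. Setting $\mathcal{Q}^C_{f,E}(\rho) = 0$ handles property \textbf{(i)}; properties \textbf{(ii)} and \textbf{(iii)} reduce respectively to $(\mathrm{Id}-\Pi_f)\mathrm{Id}(\mathrm{Id}-\Pi_f) = \mathrm{Id}-\Pi_f \preceq \mathrm{Id}$ and $\Pi_f\mathrm{Id}\Pi_f = \Pi_f = \Pi_f^{loc}\otimes\mathrm{Id}_{[n]\setminus b(f)}$; and \textbf{(iv)} is satisfied with $\theta=0$ because the $E$-branch is identically zero.

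With this verification in hand, the proof is just an invocation of Theorem~\ref{thm:generalR}: the bounds on the expected number of resamplings under each of \eqref{eq:SLC}, \eqref{eq:ALC}, \eqref{eq:CEC}, \eqref{eq:SHC} go through verbatim, and the ``ERROR"-probability bound $2\theta R$ vanishes because $\theta = 0$ (indeed, no $E$ outcome can ever occur with a genuine projective measurement). There is no real obstacle here; the only conceptual point worth flagging is that one loses the guarantee that the state at termination lies in the joint kernel $\bigcap_f \ker(\Pi_f)$, since under the trivial progress measure $V^F = \mathcal{H}_{[n]}$ gives no information about the final state. If one additionally assumes all projectors commute, then by Proposition~\ref{prop:indeedGeneralisation} the projective measurement coincides with the exact quantum channel of Definition~\ref{def:exactChannel}, which by Proposition~\ref{prop:exactProgressive} is progressive with respect to the \emph{exact} progress measure, so in that case the same invocation of Theorem~\ref{thm:generalR} also certifies convergence to a frustration-free state.
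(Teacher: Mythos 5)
Your proof is correct and follows essentially the same approach as the paper's one-line proof, which simply observes that projective measurement of $\Pi_f$ implements a zero-error progressive quantum channel with respect to the trivial progress measure; you have merely spelled out the verification of Definition~\ref{def:progressive_channel} explicitly. Your closing remark about the trivial measure giving no guarantee on the final state, and the commuting case recovering it via Propositions~\ref{prop:indeedGeneralisation} and~\ref{prop:exactProgressive}, correctly anticipates the paper's Corollary~\ref{cor:commutingRuntime}.
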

	\begin{proof}
		Observe that the projective measurements $\Pi_f$ implement a zero-error (i.e., $\theta=0$) progressive quantum channel with respect to the trivial progress measure of Definition~\ref{def:trivialMeasure}.
	\end{proof}	
	
	\begin{corollary} \label{cor:commutingRuntime}
		If all the projectors $\{\Pi_f\}_{f\in F}$ commute pairwise and we run Algorithm~\ref{alg:flawSelect} with $\mathcal{Q}_f^C$ replaced by projective measurements $\Pi_f$, then the bounds on the expected number of resamplings from Theorem~\ref{thm:generalR} hold. Moreover, at termination the output state lies in the kernel of $H$.
	\end{corollary}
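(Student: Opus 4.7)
The plan is to upgrade the previous corollary's argument by observing that, in the commuting case, the naive projective measurement $(\Id-\Pi_f,\Pi_f)$ actually implements a progressive quantum channel with respect to the much finer \emph{exact} progress measure of Definition~\ref{def:exactMeasure}, not just the trivial one. Once this is established, both conclusions drop out of Theorem~\ref{thm:generalR}: the runtime bounds hold verbatim, and the termination state automatically lies in $V^F=\bigcap_{f\in F}\ker(\Pi_f)=\ker(H)$.

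To verify the progressive-channel properties of Definition~\ref{def:progressive_channel} with error parameter $\theta=0$, I would write out each condition using the commutation identity $\Pi_{V^C}=\prod_{f'\in C}(\Id-\Pi_{f'})$. Property \ref{it:Aoutcomes} is immediate (only $G$ and $B$ labels arise). For \ref{it:Agood}, commutativity gives $\Pi_{V^{C\cup\{f\}}}=(\Id-\Pi_f)\Pi_{V^C}$, so $\mathcal{Q}^C_{f,G}(\Pi_{V^C})=(\Id-\Pi_f)\Pi_{V^C}(\Id-\Pi_f)=\Pi_{V^{C\cup\{f\}}}$, matching the required bound with equality. For \ref{it:Abad}, the key observation is that $\Pi_{V^{C\setminus\Gamma^+(f)}}$ involves only projectors acting on qubits disjoint from $b(f)$, hence commutes with $\Pi_f$; combining this with $\Pi_{V^C}\preceq\Pi_{V^{C\setminus\Gamma^+(f)}}$ (more intersected kernels give a smaller subspace) yields
\begin{equation*}
\mathcal{Q}^C_{f,B}(\Pi_{V^C})=\Pi_f\Pi_{V^C}\Pi_f\preceq\Pi_f\Pi_{V^{C\setminus\Gamma^+(f)}}\Pi_f=\Pi_f\Pi_{V^{C\setminus\Gamma^+(f)}}=\Pi^{loc}_f\otimes\Pi'_{V^{C\setminus\Gamma^+(f)}}.
\end{equation*}
Property \ref{it:Aerror} is vacuous with $\theta=0$. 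Alternatively, one could invoke Proposition~\ref{prop:indeedGeneralisation} to note that, on states supported in $V^C$, the projective measurement coincides with the exact quantum channel of Definition~\ref{def:exactChannel}, and then cite Proposition~\ref{prop:exactProgressive}; however, since Definition~\ref{def:progressive_channel} quantifies over the uniformly mixed state $\Pi_{V^C}$ rather than pure states, a short direct verification as above is cleanest.

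With the progressive-channel property in hand, I would apply Theorem~\ref{thm:generalR} directly. The runtime bounds follow unchanged. For the statement about the output, Theorem~\ref{thm:generalR} guarantees that on ``SUCCESS'' the output lies in $V^F$; unfolding the definition of the exact progress measure gives $V^F=\bigcap_{f\in F}\ker(\Pi_f)$, which is precisely $\ker(H)$ since $H=\sum_f\Pi_f$ is a sum of positive semidefinite operators. Moreover, because $\theta=0$, the error probability $2\theta R$ from Theorem~\ref{thm:generalR} vanishes, so the algorithm terminates with ``SUCCESS'' almost surely, and the runtime bound is on the expected number of resamplings without any conditioning overhead.

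I do not expect any genuine obstacle here: the entire content of the corollary has already been packaged into the general framework, and the only thing to check is that the commuting case slots into the exact progress measure rather than merely the trivial one. The one place to be a little careful is the inequality in property \ref{it:Abad}, where the disjoint-support commutativity between $\Pi_f$ and $\Pi_{V^{C\setminus\Gamma^+(f)}}$ must be invoked explicitly; this is exactly where the non-commuting case would break and force the more elaborate construction of Definition~\ref{def:exactChannel}.
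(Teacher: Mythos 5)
Your proof is correct and essentially matches the paper's, which simply cites Propositions~\ref{prop:indeedGeneralisation}, \ref{prop:exactProgressive} and Theorem~\ref{thm:generalR}; your direct verification of the four properties of Definition~\ref{def:progressive_channel} is a valid unpacking of those propositions, and the linearity concern you raise is harmless since any density operator supported on $V^C$ is a convex combination of pure states in $V^C$, so the projective measurement and the exact channel agree on $\Pi_{V^C}$ by linearity. One minor slip in your closing remark: the disjoint-support commutativity between $\Pi_f$ and $\Pi_{V^{C\setminus\Gamma^+(f)}}$ holds regardless of whether the $\Pi_{f'}$ commute, because $C\setminus\Gamma^+(f)$ by construction excludes every flaw touching $b(f)$; property~\ref{it:Abad} therefore survives for plain projective measurements even in the non-commuting case, and what actually fails there is property~\ref{it:Agood} (measuring $\Id-\Pi_f$ can eject the state from $V^C$), which is precisely the obstruction that the exact quantum channel of Definition~\ref{def:exactChannel} is designed to repair.
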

	\begin{proof}
		The statement follows from Proposition~\ref{prop:indeedGeneralisation}, Proposition~\ref{prop:exactProgressive} and Theorem~\ref{thm:generalR}.
	\end{proof}	
		
	\begin{corollary}\label{cor:approxRun}
		Let $R$ be the upper bound on the expected number of resamplings in Theorem~\ref{thm:generalR}, and let $d=\max\left\{|\Gamma^+(f)|:f\in F\right\}$.
		Suppose $\mathcal{Q}$ is a progressive quantum channel (Def.~\ref{def:progressive_channel}) with respect to the exact progress measure (Def.~\ref{def:exactMeasure}) with error parameter $\theta=\frac{1}{12R}$, and suppose $\tilde{\mathcal{Q}}$ is a $\frac{1}{6\left(|F|+6R\cdot d\right)}$-approximation (Def.~\ref{def:approximate_channel}) of $\mathcal{Q}$.
		Suppose we run Algorithm~\ref{alg:flawSelect} while using $\tilde{\mathcal{Q}}$ in line~\ref{line:applyQ}, and terminate it with ``TIMEOUT" if it attempts to do more than $6R$ resamplings. This quantum algorithm terminates with ``SUCCESS" with probability at least $1/2$, while using $\tilde{\mathcal{Q}}$ at most $(|F|+6R\cdot d)$ times in total. Conditional on termination with ``SUCCESS", its output quantum state $\tilde{\rho}$ is $1/2$-close in trace distance to a quantum state $\rho$ supported on the ground state space of $H$.
	\end{corollary}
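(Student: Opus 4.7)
The plan is to split the total failure probability into three error budgets, each of size $1/6$: truncation error from aborting after $6R$ resamplings, intrinsic ERROR probability of the ideal channel $\mathcal{Q}$, and approximation error from replacing $\mathcal{Q}$ by $\tilde{\mathcal{Q}}$. First I would bound the number of channel invocations. Each call to $\tilde{\mathcal{Q}}^C_f$ returns one of three labels: $G$ increases $|C|$ by one, $B$ triggers a resampling and removes at most $|\Gamma(f)|\leq d-1$ flaws from $C$, and $E$ aborts the algorithm. Since $|C|$ never exceeds $|F|$, after $r$ resamplings the number of $G$ outcomes is at most $|F|+r(d-1)$, so within the $6R$-resampling budget the total number of channel calls is at most $|F|+6R\cdot d$, proving the runtime bound.

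Next I would analyse the idealised truncated algorithm, which uses $\mathcal{Q}$ rather than $\tilde{\mathcal{Q}}$. Markov's inequality applied to the expected-resamplings bound in Theorem~\ref{thm:generalR} gives $P(\#\mathrm{resamplings}>6R)\leq R/(6R)=1/6$, and the same theorem bounds the ERROR probability by $2\theta R=1/6$. Hence this ideal algorithm terminates with SUCCESS with probability more than $2/3$, and conditioned on SUCCESS its output $\rho_S$ is supported on $V^F=\bigcap_{f\in F}\ker(\Pi_f)$, the ground space of $H$. To pass from $\mathcal{Q}$ to $\tilde{\mathcal{Q}}$, I would view the entire algorithm as a composition of at most $|F|+6R\cdot d$ channel invocations interleaved with shared classical and resampling operations on a joint quantum--classical register, and apply the triangle inequality one substitution at a time. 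Using subadditivity of the trace norm under channel composition, the overall outputs $\sigma_{\mathcal{A}}$ and $\sigma_{\tilde{\mathcal{A}}}$ satisfy $\trnorm{\sigma_{\mathcal{A}}-\sigma_{\tilde{\mathcal{A}}}}\leq (|F|+6R\cdot d)\cdot\tfrac{1}{6(|F|+6R\cdot d)}=\tfrac{1}{6}$, which immediately yields $\tilde{p}_S\geq p_S-1/6>1/2$.

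Restricting the same $1/6$ bound to the SUCCESS branch gives $\trnorm{\tilde{p}_S\tilde{\rho}_S-p_S\rho_S}\leq 1/6$, and taking traces, $|\tilde{p}_S-p_S|\leq 1/6$. Choosing $\rho:=\rho_S$, which is supported on the ground space of $H$, a short triangle-inequality manipulation combining these two bounds with the lower bound $\tilde{p}_S>1/2$ yields the claimed trace-distance bound $\trnorm{\tilde{\rho}_S-\rho}\leq 1/2$. The main obstacle I anticipate is making the channel-composition step rigorous for this adaptive algorithm, whose control flow depends on intermediate classical outcomes: one needs to check that the ``single-substitution'' triangle inequality goes through uniformly across the tree of classical branches, for instance by absorbing the classical control into a joint quantum--classical channel and arguing inductively on the number of channel calls already performed. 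Once that is pinned down, everything reduces to careful bookkeeping of the three $1/6$ error budgets.
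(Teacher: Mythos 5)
Your proposal follows essentially the same route as the paper's proof: bound the number of channel invocations by the $|C|$-counting argument, analyse the idealised truncated algorithm using Markov's inequality plus the error bound of Theorem~\ref{thm:generalR} to get $p_S\geq 2/3$ with a ground-space-supported output, then pass to $\tilde{\mathcal{Q}}$ by a triangle-inequality/hybrid argument over the at most $|F|+6Rd$ channel calls, and finally compare the normalised SUCCESS-conditioned states.

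There is one genuine (if small) gap in the last step. You propose to derive $\trnorm{\tilde{\rho}_S-\rho}\leq 1/2$ from the two bounds $\trnorm{\tilde{p}_S\tilde{\rho}_S-p_S\rho_S}\leq 1/6$ and $|\tilde{p}_S-p_S|\leq 1/6$ together with $\tilde{p}_S>1/2$. If you split the difference of normalised states by inserting an intermediate term with denominator $\tilde{p}_S$, you get
\begin{equation*}
\trnorm{\tilde{\rho}_S-\rho_S}\;\leq\;\frac{1}{\tilde{p}_S}\trnorm{\tilde{p}_S\tilde{\rho}_S-p_S\rho_S}+\frac{|p_S-\tilde{p}_S|}{\tilde{p}_S}\;\leq\;\frac{2}{\tilde{p}_S}\cdot\frac{1}{6},
\end{equation*}
and with $\tilde{p}_S>1/2$ this only gives $\leq 2/3$, not the claimed $1/2$. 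The paper avoids this by inserting the intermediate term with denominator $p_S$ instead, so that the factor in front is $2/p_S\leq 3$, giving $\leq 3\cdot\tfrac{1}{6}=\tfrac{1}{2}$. So you should normalise by $p_S\geq 2/3$ rather than by $\tilde{p}_S>1/2$ in the final manipulation; everything else you wrote is correct and matches the paper's argument.
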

	\begin{proof}
		First suppose we use the quantum channel $\mathcal{Q}$ in line~\ref{line:applyQ}.
		Theorem~\ref{thm:generalR} states that Algorithm~\ref{alg:flawSelect} terminates with ``ERROR" with probability at most $1/6$. Using Markov's inequality we can see, that the probability of termination with ``TIMEOUT" has probability at most $1/6$ also. 
		Let $\rho$ denote the output state when the algorithm uses $\mathcal{Q}$ and let $\rho_S$ denote the unnormalised output state corresponding to termination with ``SUCCESS", finally let $p_S=\tr{\rho_S}$ denote its probability. As we have show $p_S\geq 2/3$ and Theorem~\ref{thm:generalR} shows that $\rho_S$ is supported on the kernel of $H$.
		
		Let $T$ denote the number of applications of the quantum channel $\tilde{\mathcal{Q}}$.
		We claim that $T\leq |F|+6R\cdot d$. This can be seen by observing that each time $\tilde{\mathcal{Q}}$ returns with ``GOOD", $|C|$ is increased by one, and when it returns with ``BAD", i.e., a resampling occurs, $|C|$ decreases by at most $d-1$. Since we allow at most $6R$ resamplings  $|C|\geq T-d\cdot 6R$,  but $|F|\geq|C|$ proving the claim.
		
		Let $\tilde{\rho}$ denote the output state when the algorithm uses $\tilde{\mathcal{Q}}$ and let $\tilde{\rho}_S$ denote the unnormalised output state corresponding to termination with ``SUCCESS", finally let $\tilde{p}_S=\tr{\tilde{\rho}_S}$ denote its probability.	
		
		As we use $\tilde{\mathcal{Q}}$ at most $\left(|F|+6R\cdot d\right)$ times, and $\tilde{\mathcal{Q}}$ $\frac{1}{6\left(|F|+6R\cdot d\right)}$-approximates $\mathcal{Q}$, 
		by repeated use of the triangle inequality we can see that $\trnorm{\rho-\tilde{\rho}}\leq 1/6$, and so $|p_S-\tilde{p}_S|\leq\trnorm{\rho-\tilde{\rho}}\leq 1/6$. Also $\frac{\tilde{\rho}_S}{\tilde{p}_S}$, the output state conditioned on the ``SUCCESS" outcome, is $1/2$-close to $\frac{\rho_S}{p_S}$ in trace distance, as shown by the calculation below:
		\begin{align*}
		\trnorm{\frac{\rho_S}{p_S}-\frac{\tilde{\rho}_S}{\tilde{p}_S}} 
		=&\trnorm{\frac{\rho_S}{p_S}-\frac{\tilde{\rho}_S}{p_S}
			+\left(\frac{1}{p_S}-\frac{1}{\tilde{p}_S}\right)\tilde{\rho}_S}\\
		\leq& \trnorm{\frac{\rho_S}{p_S}-\frac{\tilde{\rho}_S}{p_S}}
		+\trnorm{\left(\frac{1}{p_S}-\frac{1}{\tilde{p}_S}\right)\tilde{\rho}_S}	\\	
		=& \frac{1}{p_S}\trnorm{\rho_S-\tilde{\rho}_S}
		+\frac{|\tilde{p}_S-p_S|}{\tilde{p}_S\cdot p_S}\tilde{p}_S	\\	
		\leq& \frac{2}{p_S}\trnorm{\rho-\tilde{\rho}}\\	
		\leq& \frac{1}{2}										
		\end{align*}
	\end{proof}		
	
	\subsection{Comparison with other LLL algorithms}\label{subsec:runtimeCompare}
	
	In the classical setting, the expected number of resamplings done by the Moser-Tardos algorithm, under the condition
	\begin{align*}
	&\text{\textbullet } \text{ \eqref{eq:SLC}, is } % & &
	\mathcal{O}\left(\frac{|F|}{d}\right)  \text{ as proven by Moser~\cite{MoserOrig}}.\footnote{}\\	
	&\text{\textbullet } \text{ \eqref{eq:ALC}, is } % & &
	\mathcal{O}\left(\sum_{f\in F} \frac{x_f}{1-x_f}\right)  \text{ as proven by Moser and Tardos~\cite{MoserTardos}.}\\
	&\text{\textbullet } \text{ \eqref{eq:SHC}, is } % & &
	\mathcal{O}\left(\sum_{f\in F} \frac{q_{\{f\}}}{q_{\varnothing}}\right) \text{ as proven by Kolipaka and Szegedy~\cite{KolipakaSzegedy}.}
	\end{align*} 
	\footnotetext{A weaker version was proven in this paper, however the result is implied by the follow-up work~\cite{MoserTardos}.}
	
	In all cases our bound for the quantum case in Theorem~\ref{thm:generalR} is worse by a linear factor in the number of qubits.
	This extra $n$ factor is a side effect of the ``forward-looking" analysis technique, and in the next version we will show how it can eliminated in the \eqref{eq:SLC} case, using the compression argument. We conjecture that this extra factor can be removed in the other cases as well.

	%It is not clear if the independent set resampling strategy implemented by Algorithm~\ref{alg:flawSelect} is just a tool for the analysis, or makes the algorithm run essentially more efficiently. However, there are results suggesting that it might be important.
	%As noted in \cite{KolipakaSzegedy}, the particular strategy of selecting flaws to resample can have an effect on the running time, even in the classical case. Also it is shown in the appendix of \cite{HarveyVondrak15Ar} that more general resampling oracles (which are somewhat similar to our quantum ones), can have bad effects on the behaviour of the Moser-Tardos algorithm.
	
	%Previous algorithms for the constructive quantum version \cite{SattathSymm,SchwarzInfo} were not shown to be able to find the ground state efficiently for non-commuting local projectors. However, their analysis partly worked for the non-commuting case, at least they could prove termination of the procedure, without saying anything about the overlap of the resulting state with the ground state. 
	
	\section{Efficient implementation of the algorithm}\label{sec:efficient}
	
	\begin{definition} (Hamiltonians and their gap for subsystems)
		For $S\subseteq F$ let $H^S=\sum_{f'\in S}\Pi_{f'}$ denote the \textbf{Hamiltonian of the subsystem $S$}, 
		and let $\gamma^S$ be the smallest non-zero singular value of $H^S$ (or if $H^S=0$, then $\infty$). 
		Also let $\gamma=\min_{S \subseteq F}(\gamma^S)$ be the \textbf{\nom{uniform gap}{The smallest spectral gap of all sub-Hamiltonians}} of the system, denoted $\gamma$.
        \label{def:uniform_gap}
	\end{definition}

	We use the above definition for \nom{$\gamma$}{The uniform gap of the Hamiltonian} throughout this section. The QSAT instances that satisfy any of our four conditions \eqref{eq:SLC}-\eqref{eq:SHC} are frustration-free as shown by Corollary~\ref{cor:existential}. Since we only consider instances that satisfy some of these conditions, the two definitions for $\gamma$ (Definition~\ref{def:uniform_gap} and Eq.~\eqref{eq:def_gamma}) are equivalent for all the Hamiltonians in this work.

	\subsection{Non-commuting weak measurement procedure} \label{sec:weakAnalysis}
	The key idea for constructing a progressive quantum channel for non-commuting projectors is using the quantum Zeno effect, and performing several weak measurements of $\Pi_f$ with intensity~$\theta$, using the weak measurement operators
	\begin{equation}
	M_f^b=\sqrt{\theta}\Pi_f,\,\,M_f^g=(\Id-\Pi_f)+\sqrt{1-\theta}\Pi_f=\Id-(1-\sqrt{1-\theta})\Pi_f. \label{eq:weakM}
	\end{equation}
	Note that $M_f^b=(M_f^b)^\dagger$ and also $M_f^g=(M_f^g)^\dagger$.
	So $\{M_f^b, M_f^g\}$ are square roots of a POVM since $M_f^b\cdot (M_f^b)^\dagger+M_f^g\cdot (M_f^g)^\dagger=\Id$. For an intuitive explanation of these operators see the Introduction.
	
	The algorithm below implements a progressive quantum channel using some non-local measurements, and therefore it is not efficient in its present form.
	Later we show how to implement an approximate version of these non-local operations by using only local measurements in Algorithm~\ref{alg:proj}. The algorithm below can be interpreted as an approximate version of the exact quantum channel (Def.~\ref{def:exactChannel}), followed by a decoherence channel -- see Appendix~\ref{apx:Decoherence}.
	Also it can be understood from a geometric point of view using Jordan's Theorem -- see Appendix~\ref{apx:Jordan},
	\begin{algorithm}[H]
		\caption{$\mathbb{M}^\theta$ Non-commutative measurement for checked flaws $C\subseteq F$ and $f\in F$ }\label{alg:measurement}
		\begin{algorithmic}[1]
			\STATE {\bf input} quantum state $\rho$ and classical information $C\subseteq F$, $f\in F$ and $t\in \mathbb{N}$
			\STATE ($\star\,\,$ Let us write $V=\bigcap_{f'\in C}\ker(\Pi_{f'})$ and $V_f=V\cap\ker(\Pi_f)$  $\,\,\star$)
			\STATE {\bf repeat} $t$ times {\bf do}
			\STATE ~~~~ {\bf measure} $\{M_f^g,M_f^b\}$; {\bf if} result ``$b$" {\bf then return} ``B"
			\STATE ~~~~ {\bf measure} $\Pi_V$; {\bf if} result ``not in $V$" {\bf then return} ``E1"		
			\STATE {\bf end repeat}
			\STATE {\bf measure} $\Pi_{V_f}$; {\bf if} result ``not in $V_f$" {\bf then return} ``E2"		
			\STATE {\bf return} ``G"
		\end{algorithmic}
	\end{algorithm}
	
    The following lemma proves that Algorithm~\ref{alg:measurement} implements a progressive quantum channel. We prove slightly stronger properties than required by Definition ~\ref{def:progressive_channel}, as it fits the proof better.
    
	Let us fix $C\subseteq F$, $f\in F$ and $\theta \in (0,1]$, and use notation $V=\bigcap_{f'\in C}\ker(\Pi_{f'})$, $V_f=V\cap\ker(\Pi_f)$ and $V_{\overline{f}}=\bigcap_{f'\in C\setminus \Gamma^+(f)}\ker(\Pi_{f'})$.
	\begin{lemma}\label{lemma:MThProgr}
		For the input state $\rho\succeq 0$ let us denote the output of Algorithm~\ref{alg:measurement} by $\mathbb{M}^\theta(\rho)$.
		If $t\geq \frac{\ln(3/\theta)}{\theta\cdot\min\left(\gamma^{C\cup\{f\}},1\right)}$, then
		Algorithm~\ref{alg:measurement} implements a progressive quantum channel with respect to the exact progress measure and with error parameter $\theta$, as the following properties hold:\\
		(We only distinguish E1 and E2 for the sake of analysis, but we treat both of them just as E.)
		
		\begin{enumerate}[label=(\textbf{\roman*})]
			\item %The quantum channel labels its output with the classical labels $(G,B,E)$ corresponding to $(\text{``Good",``Bad",``Error"})$ outcomes, so that:\\
			$\mathbb{M}^\theta(\rho)=\mathbb{M}_G^\theta(\rho)\otimes G+
			\mathbb{M}_B^\theta(\rho)\otimes B+\mathbb{M}_E^\theta(\rho)\otimes E$. \label{it:outcomes}
			\item %The ``good" part of the state is recognised correctly, but it is not disturbed: \\
			$\mathbb{M}_G^\theta(\rho)=\Pi_{V_f}\rho\Pi_{V_f}$ \label{it:good}
			\item %The ``bad" part of $V$ is ``rotated" to a subspace suitable for resampling:  \\
			$\mathbb{M}_B^\theta(\Pi_{V})=\mathbb{M}_B^\theta(\Pi_{V}-\Pi_{V_f})
			\preceq \Pi_{\im(\Pi_f\Pi_V)}\preceq \Pi_f\Pi_{V_{\overline{f}}}$ \label{it:bad}
			%($\text{rank}(\Pi_{V}-\Pi_{V_f})=\text{rank}(\Pi_{\im(\Pi_f\Pi_V)})$)
			\item %For $\rho$ supported on $V$ (i.e., $\rho=\Pi_{V}\rho\Pi_{V}$), this ``rotation" has small error: \\
			$\,\tr{\mathbb{M}_E^\theta(\rho)}\leq 2\theta\cdot \tr{\mathbb{M}_B^\theta(\rho)}$ \label{it:error}
		\end{enumerate}
	\end{lemma}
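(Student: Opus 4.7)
The plan is to expand the Kraus operators of Algorithm~\ref{alg:measurement} branch by branch from its deterministic control flow: the ``G'' branch has the single Kraus operator $K_G=\Pi_{V_f}(\Pi_V M_f^g)^t$; the ``B'' branch has $K_i=\sqrt{\theta}\,\Pi_f(\Pi_V M_f^g)^{i-1}$ for $i\in\{1,\dots,t\}$; and the ``E'' branch collects an operator $(\Id-\Pi_V)M_f^g(\Pi_V M_f^g)^{i-1}$ per step (the E1 outcomes) together with a final $(\Id-\Pi_{V_f})(\Pi_V M_f^g)^t$ (E2). This gives~\ref{it:outcomes} immediately. For~\ref{it:good} and the first equality in~\ref{it:bad}, the key identity is $\Pi_f\Pi_{V_f}=0$: combined with $\Pi_V\Pi_{V_f}=\Pi_{V_f}$ it gives $\Pi_{V_f}(\Pi_V M_f^g)=\Pi_{V_f}$, which iterates to $K_G=\Pi_{V_f}$, and its adjoint form forces $K_i\Pi_{V_f}=0$ for every $i$, so that $\mathbb{M}_B^\theta(\Pi_{V_f})=0$.

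For the central inequality $\mathbb{M}_B^\theta(\Pi_V)\preceq\Pi_{\im(\Pi_f\Pi_V)}$ I would appeal to Jordan's theorem (Appendix~\ref{apx:Jordan}) and decompose $\mathcal{H}$ into $1$D and $2$D subspaces simultaneously invariant under $\Pi_V$ and $\Pi_f$. In a $2$D block of principal angle $\alpha$, picking unit vectors $|v\rangle$ and $|f\rangle$ spanning $V$ and $\im(\Pi_f)$ inside the block, the operator $T:=\Pi_V M_f^g\Pi_V$ acts on $|v\rangle$ as the scalar $1-c\cos^2\!\alpha$, where $c:=1-\sqrt{1-\theta}$. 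The truncated geometric series $\sum_{i=1}^t(\Pi_V M_f^g)^{i-1}\Pi_V(M_f^g\Pi_V)^{i-1}$ therefore collapses to $\sum_{j=0}^{t-1}(1-c\cos^2\!\alpha)^{2j}|v\rangle\!\langle v|$; conjugation by $\Pi_f$ multiplies by $\cos^2\!\alpha$ and replaces $|v\rangle\!\langle v|$ by $|f\rangle\!\langle f|$, and after using $\theta=c(2-c)$ the block contribution simplifies to $\frac{(2-c)(1-(1-c\cos^2\!\alpha)^{2t})}{2-c\cos^2\!\alpha}|f\rangle\!\langle f|$. Since $2-c\le 2-c\cos^2\!\alpha$ the scalar is at most $1$, and $|f\rangle$ spans $\im(\Pi_f\Pi_V)$ inside the block, so summing over (orthogonal) blocks gives the desired semidefinite bound; the $1$D blocks are handled by inspection. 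The final inequality $\Pi_{\im(\Pi_f\Pi_V)}\preceq\Pi_f\Pi_{V_{\overline f}}$ is exactly~\eqref{eq:imUpBnd}. I expect this block-wise scalar estimate to be the main obstacle: making the constant come out to exactly $1$ (rather than some loose multiplicative factor) is precisely what pins down $c=1-\sqrt{1-\theta}$ as the correct weak-measurement intensity.

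For~\ref{it:error} I would bound the E1 and E2 contributions separately, each by $\theta\,\tr{\mathbb{M}_B^\theta(\rho)}$. Since $\rho$ is supported on $V$, the intermediate state $\tilde\rho_i:=(\Pi_V M_f^g)^{i-1}\rho(M_f^g\Pi_V)^{i-1}$ is too, so $(\Id-\Pi_V)M_f^g\tilde\rho_i=-c(\Id-\Pi_V)\Pi_f\tilde\rho_i$ and hence the E1 trace at step $i$ is at most $c^2\tr{\Pi_f\tilde\rho_i\Pi_f}=(c^2/\theta)\,\tr{K_i\rho K_i^\dagger}$; summing over $i$ and using $c^2/\theta=\theta/(2-c)^2\le\theta$ yields the E1 bound. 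For E2 I would reuse the Jordan decomposition: in each non-trivial block the E2 weight is $(1-c\cos^2\!\alpha)^{2t}$ times the $|v\rangle$-weight of the input, the B weight is the rank-one scalar already computed, and a direct manipulation shows the ratio is at most $\theta$ as soon as $(1-c\cos^2\!\alpha)^{2t}\le\theta/3$. This last inequality follows from $\cos^2\!\alpha\ge\gamma^{C\cup\{f\}}$ on $V\ominus V_f$ (the uniform-gap hypothesis applied to $H^{C\cup\{f\}}$ restricted to $V$) together with $c\ge\theta/2$ and the hypothesised $t\ge\ln(3/\theta)/(\theta\min(\gamma^{C\cup\{f\}},1))$, which jointly force $2c\gamma^{C\cup\{f\}}\,t\ge\ln(3/\theta)$. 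Adding the two contributions gives $\tr{\mathbb{M}_E^\theta(\rho)}\le 2\theta\,\tr{\mathbb{M}_B^\theta(\rho)}$, completing the proof.
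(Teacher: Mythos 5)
Your proposal is correct, and most of it is the paper's argument in a different dress: the paper writes $\Pi_f\Pi_V=W\Sigma U^\dagger$ and manipulates the singular value decomposition directly, while you run the same calculation block-by-block via Jordan's two-projector theorem. These are the same mathematics (the paper even remarks in Appendix~\ref{apx:Jordan} that Jordan's theorem is what lies behind its SVD manipulations); your block scalar $\frac{(2-c)\left(1-(1-c\cos^2\!\alpha)^{2t}\right)}{2-c\cos^2\!\alpha}\le 1$ is exactly Claim~\ref{claim:omomom} in the form $\theta\sigma^2\sum_{j<t}(1-c\sigma^2)^{2j}\le 1$, and your $c^2/\theta\le\theta$ for E1 matches the paper's $(1-\sqrt{1-\theta})^2\le\theta^2$ estimate.

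Where you genuinely diverge is in property~\ref{it:error}. The paper never compares the E2 and B traces block by block: it proves the operator inequality $\mathbb{M}_{E2}^\theta(\rho)\preceq\frac{\theta}{3}(\Pi_V-\Pi_{V_f})\rho(\Pi_V-\Pi_{V_f})$, then invokes trace preservation of $\mathbb{M}^\theta$ to obtain $\tr{(\Pi_V-\Pi_{V_f})\rho(\Pi_V-\Pi_{V_f})}=\tr{\mathbb{M}_B^\theta(\rho)}+\tr{\mathbb{M}_{E1}^\theta(\rho)}+\tr{\mathbb{M}_{E2}^\theta(\rho)}$, and finishes with the small arithmetic Claim~\ref{claim:thetaRatio}. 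Your route — bounding the E2/B weight ratio in each Jordan block by $\theta$ — is more direct and avoids the auxiliary claim, but it relies on a fact your sketch leaves implicit: the operators $\Pi_V M_f^g\Pi_V$, $\Pi_V\Pi_f\Pi_V$ and $\Pi_V-\Pi_{V_f}$ are all simultaneously diagonal in the Jordan basis of $V$, so for a general $\rho$ supported on $V$ (not just $\rho=\Pi_V$) both $\tr{\mathbb{M}_B^\theta(\rho)}$ and $\tr{\mathbb{M}_{E2}^\theta(\rho)}$ depend only on the diagonal entries $\langle v_k|\rho|v_k\rangle$, and the block-wise comparison legitimately sums to a global one. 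This is true, but it is the load-bearing step; state it. Two small slips in wording: $\cos^2\!\alpha\ge\gamma^{C\cup\{f\}}$ should read $\cos^2\!\alpha\ge\min\left(\gamma^{C\cup\{f\}},1\right)$ (a $\cos^2$ cannot exceed $1$), and the chain should be $2c\cos^2\!\alpha\, t\ge 2c\min\left(\gamma^{C\cup\{f\}},1\right)t\ge\ln(3/\theta)$ rather than $2c\gamma^{C\cup\{f\}}t\ge\ln(3/\theta)$; your stated hypothesis already carries the $\min(\cdot,1)$, so this is a slip rather than a gap.
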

	\begin{proof}
		First observe that the output state of Algorithm~\ref{alg:measurement} corresponding to the 4 possible outcome labels can be described as
		\begin{align}
		\mathbb{M}_G^\theta(\rho)=&\Pi_{V_f}\left(\Pi_V M_f^g\right)^t \rho \left( M_f^g\Pi_V\right)^t\Pi_{V_f} \label{eq:MG}\\
		\mathbb{M}_B^\theta(\rho)=& \sum_{i=0}^{t-1} M_f^b \left(\Pi_V M_f^g\right)^i \rho \left( M_f^g\Pi_V\right)^i M_f^b\label{eq:MB}\\
		\mathbb{M}_{E1}^\theta(\rho)=&\sum_{i=0}^{t-1} (\Id -\Pi_V) M_f^g \left(\Pi_V M_f^g\right)^i \rho \left( M_f^g\Pi_V\right)^i M_f^g (\Id -\Pi_V) \label{eq:ME1}\\
		\mathbb{M}_{E2}^\theta(\rho)=&(\Id-\Pi_{V_f})\left(\Pi_V M_f^g\right)^t \rho \left( M_f^g\Pi_V\right)^t(\Id-\Pi_{V_f})\label{eq:ME2}
		\end{align}
		
		\noindent Let us list some properties of the projection operators, which we will often use in the derivations:
		\begin{align}
		\Pi_f\Pi_{V_f}=& 0 & & (\text{since } V_f\subseteq \ker(\Pi_f)) \label{eq:fVfZero}\\
		M_f^g\Pi_{V_f}=& \Pi_{V_f} & & (\text{by } \eqref{eq:weakM},\eqref{eq:fVfZero}) \label{eq:MgVf}\\
		\Pi_V\Pi_{V_f}=& \Pi_{V_f} & & (\text{since } V_f\subseteq V) \label{eq:VfVComm}\\
		\Pi_f\Pi_{V_{\overline{f}}}=&\Pi_{V_{\overline{f}}}\Pi_f
		& & (\text{since they act on different qubits}) \label{eq:fVofComm}\\
		\Pi_f \Pi_V =& W\Sigma U^\dagger & & (\text{singular value decomposition})  \label{eq:fVfSVD}
		\end{align}
			
		\noindent \textbf{Property \ref{it:outcomes}:} For the sake of analysis we distinguish the two types of error outcomes of Algorithm~\ref{alg:measurement}, but otherwise we merge them:
		\begin{equation}
		\mathbb{M}_E^\theta(\rho)=\mathbb{M}_{E1}^\theta(\rho)+\mathbb{M}_{E2}^\theta(\rho). \label{eq:errors}
		\end{equation}	
		
		\noindent \textbf{Property \ref{it:good}:} 
		By using \eqref{eq:MgVf} and \eqref{eq:VfVComm} we get
		\begin{equation}
		\mathbb{M}_G^\theta(\rho)= \Pi_{V_f}\left(\Pi_V M_f^g\right)^t \rho \left( M_f^g\Pi_V\right)^t\Pi_{V_f}=\Pi_{V_f}\rho\Pi_{V_f} \label{eq:ii}
		\end{equation}
		
		\noindent \textbf{Property \ref{it:bad}:} 
		\begin{align*}
		\mathbb{M}_B^\theta(\Pi_V)
		=&\sum_{i=0}^{t-1} \theta \Pi_f \left(\Pi_V M_f^g\right)^i \Pi_V \left( M_f^g\Pi_V\right)^i \Pi_f
		& &  \kern-6mm(\text{by } \eqref{eq:weakM},\eqref{eq:MG})\\
		=&\sum_{i=0}^{t-1} \theta \Pi_f \left(\Pi_V \left(\Id-(1-\sqrt{1-\theta})\Pi_f\right)\right)^i 
		\Pi_V \left(\Pi_V \left(\Id-(1-\sqrt{1-\theta})\Pi_f\right)\right)^i \Pi_f & & (\text{by }\eqref{eq:weakM})\\
		=&\sum_{i=0}^{t-1} \theta \Pi_f \left(\Pi_V-(1-\sqrt{1-\theta})\Pi_V\Pi_f\right)^{i} \Pi_V 
		\Pi_V \left(\Pi_V-(1-\sqrt{1-\theta})\Pi_f\Pi_V\right)^{i} \Pi_f & & (\Pi_V=\Pi_V^2)\\
		=&\sum_{i=0}^{t-1} \theta \Pi_f \Pi_V\left(\Id-(1-\sqrt{1-\theta})\Pi_V\Pi_f\Pi_V\right)^{2i}\Pi_V\Pi_f\\
		=&\sum_{i=0}^{t-1} \theta W\Sigma U^\dagger \left(\Id-(1-\sqrt{1-\theta})U\Sigma W^\dagger W\Sigma U^\dagger \right)^{2i} U\Sigma W^\dagger & & (\text{by } \eqref{eq:fVfSVD})\\
		=&\sum_{i=0}^{t-1} \theta W\Sigma U^\dagger \left(U\left(\Id-(1-\sqrt{1-\theta})\Sigma^2\right) U^\dagger \right)^{2i} U\Sigma W^\dagger\\
		=&\theta W\Sigma\,\sgn{\Sigma} U^\dagger U\sum_{i=0}^{t-1} \left(\Id-(1-\sqrt{1-\theta})\Sigma^2\right)^{2i} U^\dagger  U\sgn{\Sigma}\Sigma W^\dagger
		& & \kern-10mm(\Sigma=\sgn{\Sigma}\Sigma)\\
		=&\theta W\Sigma\sum_{i=0}^{t-1} \left(\sgn{\Sigma}-(1-\sqrt{1-\theta})\Sigma^2\right)^{2i} \Sigma W^\dagger
		& & \kern-10mm(\Sigma=\sgn{\Sigma}\Sigma)\\
		\preceq& W\sgn{\Sigma} W^\dagger & &  \kern-7mm(\text{by Claim~\ref{claim:omomom}})\\
		=&\Pi_{\im(\Pi_f\Pi_V)} & & (\text{by } \eqref{eq:imAdj})\\ 
		\preceq&\Pi_f\Pi_{V_{\overline{f}}} & & (\text{by } \eqref{eq:imUpBnd})
		\end{align*}
		
		Finally it is easy to see from \eqref{eq:MB} that $\mathbb{M}_B^\theta(\Pi_{V_f})=0$, but we can also argue by trace preservation using \ref{it:good}: $$\tr{\mathbb{M}_B^\theta(\Pi_{V_f})}\leq \tr{\Pi_{V_f}}-\tr{\mathbb{M}_G^\theta(\Pi_{V_f})}=0.$$
		
		%In the penultimate step we used the following fact:
		\begin{claim} \label{claim:omomom}
			For all $t\in\mathbb{N}^+$ and $\theta,\sigma \in (0,1]$ the following inequality holds: 
			\begin{equation}
			\sum_{i=0}^{t-1} \left(1-(1-\sqrt{1-\theta})\sigma^2\right)^{2i} \leq \frac{1}{\sigma^2\cdot \theta}. \label{eq:geomSum}
			\end{equation}
		\end{claim}
		\begin{proof}
			\begin{align*}
			\sum_{i=0}^{t-1} \left(1-(1-\sqrt{1-\theta})\sigma^2\right)^{2i}
			\leq \sum_{i=0}^{\infty} \left(1-(1-\sqrt{1-\theta})\sigma^2\right)^{2i} 
			=\frac{1}{1-\left(1-(1-\sqrt{1-\theta})\sigma^2\right)^2}  \leq \frac{1}{\sigma^2\cdot \theta} 
			\end{align*}
			The last inequality holds because:	
			\begin{align*}
			\frac{1}{1-\left(1-(1-\sqrt{1-\theta})\sigma^2\right)^2}  \leq& \frac{1}{\theta\cdot \sigma^2} \\
			\Updownarrow & \\
			\theta\cdot \sigma^2 \leq & 1-\left(1-(1-\sqrt{1-\theta})\sigma^2\right)^2 \\
			\Updownarrow & \\
			\theta\cdot \sigma^2 \leq & 2\sigma^2(1-\sqrt{1-\theta}) -  \sigma^4(1-\sqrt{1-\theta})^2 \\
			\Updownarrow & \\
			\theta\cdot \sigma^2 \leq & 2\sigma^2(1-\sqrt{1-\theta}) -  \sigma^4(1-2\sqrt{1-\theta} + 1-\theta) \\
			\Updownarrow & \\
			\theta\cdot \sigma^2(1-\sigma^2) -2 \sigma^2(1-\sigma^2) \leq &- 2\sigma^2(1-\sigma^2)\sqrt{1-\theta}\\
			\Uparrow & \\
			2-\theta\geq & 2\sqrt{1-\theta}\\
			\Updownarrow & \\
			(2-\theta)^2\geq & 4(1-\theta) 
			%\addtocounter{equation}{1}\tag{\theequation} \label{eq:reduced}
			\end{align*}
		\end{proof}
		
		\noindent \textbf{Property \ref{it:error}:} 
		The following inequality is a manifestation of the quantum Zeno effect, and proves an error bound which is proportional to the intensity $\theta$ of the measurement. 
		\begin{align*}
		\mathbb{M}_{E1}^\theta(\Pi_V\rho\Pi_V)=&\sum_{i=0}^{t-1} (\Id -\Pi_V) M_f^g \left(\Pi_V M_f^g\right)^i \Pi_V\rho\Pi_V \left( M_f^g\Pi_V\right)^i M_f^g (\Id -\Pi_V)\\
		&\text{ Now we use } (\Id -\Pi_V)\cdot  \Id\cdot \Pi_V = 0 \text{, and subtract this zero operator: }  \\
		=& \sum_{i=0}^{t-1} (\Id -\Pi_V) (M_f^g-\Id) \left(\Pi_V M_f^g\right)^i \Pi_V\rho\Pi_V \left( M_f^g\Pi_V\right)^i (M_f^g-\Id) (\Id -\Pi_V)\\% & &  (\Id -\Pi_V) \Id \Pi_V=0\\
		=& (1-\sqrt{1-\theta})^2\sum_{i=0}^{t-1} (\Id -\Pi_V) \Pi_f \left(\Pi_V M_f^g\right)^i \Pi_V\rho\Pi_V \left( M_f^g\Pi_V\right)^i \Pi_f (\Id -\Pi_V) & &  \kern-2mm(\text{by } \eqref{eq:weakM})\\
		\preceq& \theta^2\sum_{i=0}^{t-1} \Pi_f \left(\Pi_V M_f^g\right)^i \Pi_V\rho\Pi_V \left( M_f^g\Pi_V\right)^i \Pi_f  
		& & \kern-15mm(1-\sqrt{1-\theta}\leq \theta)\\
		=& \theta\cdot \mathbb{M}_B^{\theta}(\Pi_V\rho\Pi_V) & & \kern-10.5mm(\text{by } \eqref{eq:weakM},\eqref{eq:MB})
		\end{align*}
		
		\noindent Due to the assumption $\rho=\Pi_V\rho\Pi_V$, this result implies
		\begin{equation}
		\tr{\mathbb{M}_{E1}^\theta(\rho)}\leq\theta\cdot \tr{\mathbb{M}_B^{\theta}(\rho)}. \label{eq:trE1} 
		\end{equation}

		\noindent Now we start bounding the other type of error arising from doing few iterations (i.e., small $t$):
		\begin{align*}
		\mathbb{M}_{E2}^\theta(\Pi_V\rho\Pi_V)
		=&(\Id-\Pi_{V_f})\left(\Pi_V M_f^g\right)^t \Pi_V\rho\Pi_V \left( M_f^g\Pi_V\right)^t(\Id-\Pi_{V_f})& & (\text{by } \eqref{eq:ME2})\\
		=&(\Id-\Pi_{V_f})\Pi_V\left(\Pi_V M_f^g\right)^t \Pi_V\rho\Pi_V \left( M_f^g\Pi_V\right)^t \Pi_V(\Id-\Pi_{V_f})
		& & (\Pi_V=\Pi_V^2)\\
		=&(\Pi_V-\Pi_{V_f})\left(\Pi_V M_f^g\Pi_V\right)^{t}\Pi_V\rho\Pi_V\left(\Pi_V M_f^g\Pi_V\right)^{t} (\Pi_V-\Pi_{V_f})
		& & (\text{by } \eqref{eq:VfVComm})
		\end{align*}
		
		To continue we need to bound $(\Pi_V-\Pi_{V_f})\left(\Pi_V M_f^g\Pi_V\right)^{t}$. 
		For this let us define $\sigma$ as the smallest non-zero diagonal element of $\Sigma$ from \eqref{eq:fVfSVD}, or $1$ if $\Sigma=0$.
		
		\begin{align*}
		(\Pi_V-\Pi_{V_f})\left(\Pi_V M_f^g\Pi_V\right)^{t}
		=& (\Pi_V-\Pi_{V_f})\left(\Pi_V \left(\Id-\left(1-\sqrt{1-\theta}\right)\Pi_f\right)\Pi_V\right)^{t} & & (\text{by } \eqref{eq:weakM})\\
		=& (\Pi_V-\Pi_{V_f})\Pi_V \left(\Id-\left(1-\sqrt{1-\theta}\right)\Pi_V\Pi_f\Pi_V\right)^{t}\\
		=& (\Pi_V-\Pi_{V_f})\left(\Id-\left(1-\sqrt{1-\theta}\right)U\Sigma W^\dagger W\Sigma U^\dagger\right)^{t} & & \kern-4mm(\text{by } \eqref{eq:VfVComm},\eqref{eq:fVfSVD})\\
		=& U\text{sgn}(\Sigma) U^\dagger U\left(\Id-\left(1-\sqrt{1-\theta}\right)\Sigma^2 \right)^{t}U^\dagger & & (\text{by } \eqref{eq:badNonOrth})\\
		=& U\left(\text{sgn}(\Sigma)-\left(1-\sqrt{1-\theta}\right)\Sigma^2 \right)^{t}U^\dagger\\ 
		\preceq& U\left(\left(1-\sigma^2\theta/2\right)\text{sgn}(\Sigma) \right)^{t}U^\dagger
		& & \kern-15mm \left(1-\sqrt{1-\theta}\geq \theta/2\right)\\
		=&\left(1-\sigma^2\theta/2\right)^{t} (\Pi_V-\Pi_{V_f}) & & (\text{by } \eqref{eq:badNonOrth})\\
		\preceq&e^{-t\cdot\sigma^2\theta/2} (\Pi_V-\Pi_{V_f}) %\left(\ln(1+x)\leq x\right)
		\end{align*}
		
		\noindent so
		\begin{equation}
		\mathbb{M}_{E2}^\theta(\rho)\preceq e^{-t\cdot\sigma^2\theta} (\Pi_V-\Pi_{V_f})\Pi_V\rho\Pi_V (\Pi_V-\Pi_{V_f})=e^{-t\cdot\sigma^2\theta} (\Pi_V-\Pi_{V_f})\rho (\Pi_V-\Pi_{V_f}).  \label{eq:E2tBound}
		\end{equation}
		
		\noindent Now we bound $\sigma^2$ from below. Let $H_f=H^{C\cup\{f\}}$ 
		and $\gamma_{f}=\gamma^{C\cup\{f\}}$. Observe $V_f=\ker(H_f)$, so
		\begin{equation}
		\min(\gamma_{f},1)\cdot\Id \preceq H_f+\Pi_{V_f}.  \label{eq:Hf}
		\end{equation}
		We prove $\min(\gamma_{f},1)\leq \sigma^2$, by
		\begin{align*}
		U\cdot\min(\gamma_{f},1)\cdot\text{sgn}(\Sigma) U^\dagger =&
		\min(\gamma_{f},1)\cdot(\Pi_V-\Pi_{V_f}) & & (\text{by } \eqref{eq:badNonOrth})\\
		=&(\Pi_V-\Pi_{V_f})\cdot\min(\gamma_{f},1)\Id\cdot(\Pi_V-\Pi_{V_f}) \\
		\preceq& (\Pi_V-\Pi_{V_f})(H_f+\Pi_{V_f})(\Pi_V-\Pi_{V_f}) & & (\text{by } \eqref{eq:Hf})\\
		=& (\Pi_V-\Pi_{V_f})H_f(\Pi_V-\Pi_{V_f}) & & (\text{by } \eqref{eq:VfVComm})\\
		=&\Pi_V H_f \Pi_V & & (V_f=\ker(H_f))\\
		=&\Pi_V( H^{C} +\Pi_f)\Pi_V & & (\text{by definition})\\
		=&\Pi_V \Pi_f \Pi_V & & (V=\ker(H^{C}))\\
		=&U \Sigma^2 U^\dagger. & & (\text{by } \eqref{eq:fVfSVD})
		\end{align*}
		Since $t\geq \frac{\ln(3/\theta)}{\theta\cdot\min(\gamma_{f},1)}$ and $\min(\gamma_{f},1)\leq \sigma^2$, 
		we have $t\geq \frac{\ln(3/\theta)}{\theta\cdot\sigma^2	}$, and thus by \eqref{eq:E2tBound}
		\begin{equation}
		\mathbb{M}_{E2}^\theta(\rho)\preceq \frac{\theta}{3} (\Pi_V-\Pi_{V_f})\rho (\Pi_V-\Pi_{V_f}).  \label{eq:E2thBound}
		\end{equation}
		
		Note that since $\mathbb{M}^\theta$ is a quantum channel, it is trace-preserving: $\Tr[\rho]=\Tr[\mathbb{M}^\theta(\rho)]$.
		\begin{align*}
        -\Tr[\mathbb{M}^\theta(\rho)]=&-\Tr[\rho]\\
		\Tr[(\Id-\Pi_{V_f})\rho(\Id-\Pi_{V_f})]=&\Tr[\rho]-\Tr[\Pi_{V_f}\rho\Pi_{V_f}]\\
		\Tr[\mathbb{M}^\theta(\rho)]-\Tr[\mathbb{M}_G^\theta(\rho)]
		=&\Tr[\mathbb{M}_B^\theta(\rho)]+\Tr[\mathbb{M}_E^\theta(\rho)] & & \kern-10mm(\text{by } \ref{it:outcomes}) \\
		\Tr[\mathbb{M}_G^\theta(\rho)]=&\Tr[\Pi_{V_f}\rho\Pi_{V_f}] & & \kern-10mm(\text{by } \ref{it:good}) \\
		\Downarrow& \text{ sum up the above 4 equalities}\\
		\Tr[(\Id-\Pi_{V_f})\rho(\Id-\Pi_{V_f})]=&\Tr[\mathbb{M}_B^\theta(\rho)]+\Tr[\mathbb{M}_E^\theta(\rho)]\\
		\Downarrow& \text{ assume } \rho=\Pi_V\rho\Pi_V \text{ and use \eqref{eq:VfVComm}}\\
		\Tr[(\Pi_V-\Pi_{V_f})\rho(\Pi_V-\Pi_{V_f})]=&\Tr[\mathbb{M}_B^\theta(\rho)]+\Tr[\mathbb{M}_E^\theta(\rho)]\\
		\Updownarrow& \text{ using } \eqref{eq:errors}\\
		\underset{a}{\underbrace{\tr{(\Pi_V-\Pi_{V_f})\rho(\Pi_V-\Pi_{V_f})}}}=&
		\underset{b}{\underbrace{\tr{\mathbb{M}_B^\theta(\rho)}}}
		+\underset{c}{\underbrace{\tr{\mathbb{M}_{E1}^\theta(\rho)}}}
		+\underset{d}{\underbrace{\tr{\mathbb{M}_{E2}^\theta(\rho)}}}
		\end{align*}
		
		\noindent Equation \eqref{eq:trE1} and \eqref{eq:E2thBound} show that with the above choice of $a,b,c,d$, the conditions of Claim~\ref{claim:thetaRatio} hold and so
		$$\tr{\mathbb{M}_{E}^\theta(\rho)}
		=\tr{\mathbb{M}_{E1}^\theta(\rho)}+\tr{\mathbb{M}_{E2}^\theta(\rho)}
		\leq 2\theta \cdot\Tr[\mathbb{M}_B^\theta(\rho)].$$

		\begin{claim} \label{claim:thetaRatio}
			$\forall a,b,c,d\in\mathbb{R}_0^+$, if $d\leq \frac{\theta}{3}\cdot a$, $c\leq \theta\cdot b$ and $a=b+c+d$, then $(c+d)\leq 2\theta\cdot b$. 
		\end{claim}
		\begin{proof} Note $\theta\in (0,1]$, so if $a=0$, then $d\leq \theta\cdot b$ trivially holds. Else if $a>0$
			\begin{align*}
			\left(1-\theta/3\right)\cdot a \leq a-d =& b+c \leq (1+\theta)\cdot b \\
			\Downarrow & \\
			\frac{\left(1-\theta/3\right)}{(1+\theta)}\leq& b/a\\
			\Downarrow & \\
            d = \frac{d}{a}\cdot\frac{a}{b}\cdot b
            \leq \frac{\theta}{3}\cdot\frac{(1+\theta)}{\left(1-\theta/3\right)}\cdot b
			\leq& \frac{\theta}{3}\cdot\frac{2}{2/3}\cdot b=\theta b
			\end{align*}
		\end{proof}
		This completes the proof that properties \ref{it:outcomes}-\ref{it:error} hold.
	\end{proof} 
	\newpage
	
	\subsection{Approximate kernel projection procedure}\label{sec:kerProj}
	
	Let $S\subseteq F$, and $V=V^S=\ker(H^S)=\bigcap_{f\in S}\ker(\Pi_f)$.
	For the implementation of Algorithm~\ref{alg:measurement} we need the non-local measurement operator $\Pi_V$, but it turns out that for the purposes of the algorithm it is enough to implement a ``destructive" version of this measurement operator which we denote by $\tilde{\Pi}^S$.
	
	\begin{definition} \label{def:DP} (Destructive non-local measurement channel)
		For $S\subseteq F$ let $\tilde{\Pi}^S$ denote the quantum channel which performs the projective measurement on $V=V^S$, followed by a completely depolarising channel conditioned on the 
		$V^\perp$ outcome, and which labels its outputs with classical labels $(P,D)$ corresponding to 
		(``Projected",``Depolarised -- was not in $V$"). 
		Formally
		$\tilde{\Pi}^S:\mathbb{C}^{N\times N}\rightarrow\mathbb{C}^{N\times N}\otimes \mathbb{R}^2$,
		such that for input $0\preceq\rho \in \mathbb{C}^{N\times N}$ the output of the channel is 
		$$\tilde{\Pi}^S(\rho)=\tilde{\Pi}^S_P(\rho)\otimes P +\tilde{\Pi}^S_D(\rho)\otimes D,$$
		where
		$$\tilde{\Pi}^S_P(\rho)=\Pi_V\rho\Pi_V$$
		and 
		$$\tilde{\Pi}^S_D(\rho)=\tr{(\Id-\Pi_V)\rho(\Id-\Pi_V)}\cdot\Id/N.$$	 
	\end{definition}
	
	We implement an approximate version of the above channel via the following algorithm:
	
	\begin{algorithm}[H]
		\caption{$\tilde{\Pi}^{S,\tau}$ - Non-commutative kernel projection}\label{alg:proj}
		\begin{algorithmic}[1]
			\STATE {\bf input} quantum state $\rho$
			\STATE {\bf repeat} $\tau$ times {\bf do}
			\STATE ~~~~ choose $f\in S$ uniformly at random
			\STATE ~~~~ {\bf measure} $\Pi_f$
			\STATE ~~~~ {\bf if} result ``flaw $f$ is present" {\bf then} 
			\STATE ~~~~~~~~ completely depolarise qubits
			($\star\,\,$ i.e., resample all qubits  $\,\,\star$) \label{line:depolarisation}	
			\STATE ~~~~~~~~ {\bf return} ``$D$"				
			\STATE {\bf end repeat}	
			\STATE {\bf return} ``$P$"
		\end{algorithmic}
	\end{algorithm}
	\noindent Note that when Algorithm~\ref{alg:measurement} uses this algorithm as a subroutine, it throws away the output if it is labelled by $D$, since it indicates measurement outcome $V^\perp$. Therefore later we can safely ignore the depolarisation step in line~\ref{line:depolarisation}, as it is added just for the sake of analysis.

	In the proof of the following lemma we are going to use a special case of Hölder's inequality:\kern-1mm
	\begin{prop} \label{prop:HolderIneq}
		$\trnorm{AB}\leq\trnorm{A}\cdot\spnorm{B}$, where $\spnorm{B}$ denotes the spectral norm of $B$.
	\end{prop}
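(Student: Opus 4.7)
The plan is to express the trace norm as a sum of singular values, $\trnorm{AB}=\sum_i \sigma_i(AB)$, and then bound each term using the standard singular-value inequality $\sigma_i(AB)\leq \spnorm{B}\cdot \sigma_i(A)$. Summing over $i$ yields
\[\trnorm{AB}=\sum_i \sigma_i(AB)\leq \spnorm{B}\sum_i \sigma_i(A)=\spnorm{B}\cdot \trnorm{A},\]
which is the desired bound.

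The only substantive step is the singular-value inequality itself. I would prove it via the Courant--Fischer type min-max characterisation $\sigma_i(M)=\min_{\dim S=n-i+1}\max_{x\in S,\,\|x\|=1}\|Mx\|$, applied to $M=(AB)^\dagger=B^\dagger A^\dagger$: for any unit vector $x$ one has $\|B^\dagger A^\dagger x\|\leq \spnorm{B^\dagger}\cdot\|A^\dagger x\|=\spnorm{B}\cdot\|A^\dagger x\|$, so taking the min over the same family of subspaces and the max over $S$ preserves the inequality, giving $\sigma_i(B^\dagger A^\dagger)\leq \spnorm{B}\cdot \sigma_i(A^\dagger)$; finally $\sigma_i(M)=\sigma_i(M^\dagger)$ for any $M$ closes the step.

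An equally short alternative route would avoid min-max altogether and use the duality formula $\trnorm{M}=\sup_{\spnorm{U}\leq 1}|\Tr(UM)|$: by cyclicity $|\Tr(UAB)|=|\Tr((BU)A)|$, and since $\spnorm{BU}\leq \spnorm{B}\cdot\spnorm{U}\leq \spnorm{B}$, the operator $V:=BU/\spnorm{B}$ (assuming $\spnorm{B}>0$; the degenerate case is trivial) satisfies $\spnorm{V}\leq 1$, hence $|\Tr(UAB)|\leq \spnorm{B}\cdot |\Tr(VA)|\leq \spnorm{B}\cdot \trnorm{A}$, and taking the supremum over $U$ finishes the argument. No real obstacle is expected here --- this is just a standard instance of non-commutative Hölder, and either route is essentially bookkeeping once the appropriate variational formula is in place.
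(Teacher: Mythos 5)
Your argument is correct, and both of your routes (the singular-value inequality $\sigma_i(AB)\leq\spnorm{B}\,\sigma_i(A)$ via the min-max characterisation, and the duality formula $\trnorm{M}=\sup_{\spnorm{U}\leq 1}\left|\Tr(UM)\right|$ combined with cyclicity of the trace) are standard and sound. For comparison, the paper does not prove the proposition at all: it simply cites it as a special case of H\"older's inequality for Schatten $p$-norms, pointing to Corollary IV.2.6 in Bhatia's \emph{Matrix Analysis}. So you have supplied a self-contained proof where the paper relies on a textbook reference; of your two routes, the duality one is the closer in spirit to how Bhatia actually derives the Schatten--H\"older inequality, while the singular-value route is more elementary in that it avoids the variational trace-norm characterisation. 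Either would be acceptable if the paper had chosen to spell out the argument.
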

	\begin{proof}
		This is a special case of Hölder's inequality for Schatten $p$-norms~\cite[Cor. IV.2.6]{BhatiaMatrixAnal97}.
	\end{proof}
	
	\begin{lemma}\label{lemma:projPrec}
		For $S\subseteq F$, $\tau\in\mathbb{N}$ and $0\preceq\rho\in \mathbb{C}^{N\times N}$ let
		$\tilde{\Pi}^{S,\tau}(\rho)$ denote the output state of Algorithm~\ref{alg:proj}. Then its trace distance from the output of the ideal quantum channel $\tilde{\Pi}^S(\rho)$ can be bounded as follows:
		$$
		\left\lVert\tilde{\Pi}^{S,\tau}(\rho)-\tilde{\Pi}^S(\rho)\right\rVert_1
		\leq 4\cdot \exp\left(- \frac{\gamma^S}{|S|} \tau\right) \tr{\rho}.
		$$
		
	\end{lemma}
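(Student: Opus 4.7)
My plan is to track Algorithm~\ref{alg:proj} by a single ``per-step'' channel and then decompose the input into blocks determined by $\Pi_V$. Writing $P_0 := \Pi_V$ and $P_1 := \Id-\Pi_V$, set $\rho_{ij}:=P_i\rho P_j$, and let
\begin{equation*}
T(\sigma) := \frac{1}{|S|}\sum_{f\in S}(\Id-\Pi_f)\sigma(\Id-\Pi_f),
\end{equation*}
the (sub-)channel corresponding to ``no flaw detected'' in a single iteration. Since the algorithm returns~``P'' only after $\tau$ consecutive non-detections and otherwise resamples to $\Id/N$, the unnormalised outputs of $\tilde{\Pi}^{S,\tau}$ are $T^\tau(\rho)\otimes P+(\tr{\rho}-\tr{T^\tau(\rho)})(\Id/N)\otimes D$, which must be compared with $\rho_{00}\otimes P+\tr{\rho_{11}}(\Id/N)\otimes D$.

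The key observation is that $\Pi_f\Pi_V=0$ for every $f\in S$ (since $V\subseteq\ker\Pi_f$), hence $(\Id-\Pi_f)\Pi_V=\Pi_V$. This makes $T$ block-respecting: $T(\rho_{00})=\rho_{00}$, and for the $(0,1)$ block I compute $(\Id-\Pi_f)\rho_{01}(\Id-\Pi_f)=\rho_{01}(\Id-\Pi_f)$, so after averaging $T(\rho_{01})=\rho_{01}(\Id-H^S/|S|)$. Since $\Pi_V$ commutes with $H^S$, one can iterate inside the $(0,1)$ block to get $T^\tau(\rho_{01})=\rho_{01}(\Id-H^S/|S|)^\tau$, and analogously for $\rho_{10}$. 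For the $(1,1)$ block, write $Q_f:=P_1\Pi_f P_1\in[0,P_1]$; then $T(\rho_{11})=\frac{1}{|S|}\sum_f(\Id-Q_f)\rho_{11}(\Id-Q_f)$ and I use $(\Id-Q_f)^2\preceq\Id-Q_f$ to deduce $\tr{T(\rho_{11})}\leq \tr{\rho_{11}}-\tr{H^S\rho_{11}}/|S|\leq(1-\gamma^S/|S|)\tr{\rho_{11}}$, since $H^S\succeq\gamma^S P_1$ on the support of $\rho_{11}$.

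Iterating these bounds with $(1-\gamma^S/|S|)^\tau\leq\exp(-\gamma^S\tau/|S|)$ gives the required decay. Specifically, Proposition~\ref{prop:HolderIneq} applied to $T^\tau(\rho_{01})=\rho_{01}(P_1(\Id-H^S/|S|)^\tau P_1)$ together with the operator-norm bound on the right factor (using $H^S\succeq\gamma^S$ on $P_1\mathcal{H}$) yields $\trnorm{T^\tau(\rho_{01})}\leq e^{-\gamma^S\tau/|S|}\trnorm{\rho_{01}}$, and likewise for $\rho_{10}$. Complete positivity of $T$ together with $\rho_{11}\succeq 0$ shows $\trnorm{T^\tau(\rho_{11})}=\tr{T^\tau(\rho_{11})}\leq e^{-\gamma^S\tau/|S|}\tr{\rho_{11}}$. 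On the ``P'' label the difference is $T^\tau(\rho_{01})+T^\tau(\rho_{10})+T^\tau(\rho_{11})$, while on the ``D'' label the difference is $-\tr{T^\tau(\rho_{11})}\cdot\Id/N$ (all off-diagonal blocks have zero trace, so $\tr{T^\tau(\rho)}=\tr{\rho_{00}}+\tr{T^\tau(\rho_{11})}$). Summing the four trace norms and using $\trnorm{\rho_{ij}}\leq\tr{\rho}$ (via $\trnorm{P_i\rho P_j}\leq\|P_i\|\|P_j\|\trnorm{\rho}$ for $\rho\succeq 0$) produces the final factor of $4$.

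The main delicate point is establishing the block-diagonal action of $T$ and, in particular, verifying that $T^\tau(\rho_{01})$ retains the form $\rho_{01}\cdot (\text{something on }V^\perp)$, so that Hölder's inequality can exploit the spectral gap on $V^\perp$. The other subtlety is bookkeeping the ``D''-labelled output: each depolarisation step contributes with the weight of the detection event, and these weights telescope to $\tr{\rho}-\tr{T^\tau(\rho)}$, which combined with the trivial identity $\tr{\rho}-\tr{\rho_{11}}=\tr{\rho_{00}}$ collapses the ``D''-difference to a single term proportional to $\tr{T^\tau(\rho_{11})}$, which is then controlled by the same exponential.
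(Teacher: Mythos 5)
Your proof is correct and follows essentially the same route as the paper's: the same per-step channel (the paper's $\tilde{\Pi}^{S,1}_P$), the same $2\times 2$ block decomposition by $\Pi_V$ and $\Id-\Pi_V$ (the paper sandwiches by these projectors rather than indexing $\rho_{ij}$, but the content is identical), the same iterated identities for the diagonal and off-diagonal blocks, the same trace-decay bound via the spectral gap, and the same use of H\"older's inequality to control the coherence blocks, yielding the same $3+1=4$ accounting. The only cosmetic redundancy is your $Q_f := P_1\Pi_f P_1$, which in fact equals $\Pi_f$ since $\Pi_f\Pi_V = 0$.
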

	\begin{proof}
		Let us first examine the case $\tau=1$, with outcome $P$:
		\begin{equation}
		\tilde{\Pi}^{S,1}_P(\rho)=\frac{1}{|S|}\sum_{f\in S}(\Id-\Pi_f)\rho(\Id-\Pi_f).\label{eq:PFDef}
		\end{equation}
		
		\noindent Observe, that by definition for all $f\in S$ we have $\ker(\Pi_f)\subseteq V$, thus $\Pi_V\Pi_f=0$ and so 
		\begin{equation}
		\Pi_V\tilde{\Pi}^{S,1}_P(\rho)\Pi_V=\frac{1}{|S|}\sum_{f\in S}(\Id-\Pi_f)\Pi_V\rho\Pi_V(\Id-\Pi_f)=\Pi_V\rho\Pi_V.\label{eq:VReam1}
		\end{equation}
		
		\noindent Also observe that due to the iterative structure of Algorithm~\ref{alg:proj} we have 
		\begin{equation}
		\tilde{\Pi}^{S,\tau+1}_P(\rho)=\tilde{\Pi}^{S,1}_P(\tilde{\Pi}^{S,\tau}_P(\rho)).\label{eq:recursion}
		\end{equation}
		
		\noindent Combining \eqref{eq:VReam1} and \eqref{eq:recursion} using induction yields
		\begin{equation}
		\Pi_V\tilde{\Pi}^{S,\tau}_P(\rho)\Pi_V=\Pi_V\rho\Pi_V.\label{eq:VReam}
		\end{equation}
		
		Now let us turn to the case $\tau=1$, with outcome $D$:
		
		\begin{align*}
		\tr{\tilde{\Pi}^{S,1}_D(\rho)}
		=&\tr{\frac{1}{|S|}\sum_{f\in S}\Pi_f\rho \Pi_f}\\
		=&\frac{1}{|S|}\sum_{f\in S}\Tr\left[\Pi^2_f\rho \right] & & (\Tr[AB]=\Tr[BA])\\
		=&\frac{1}{|S|}\Tr\left[H^{S}\rho \right] & & (\text{by definition})\\
		=&\frac{1}{|S|}\Tr\left[(\Id-\Pi_V)H^{S}(\Id-\Pi_V)\rho \right] & & (V=\ker{H^S})\\
		=&\frac{1}{|S|}\Tr\left[H^{S}(\Id-\Pi_V)\rho(\Id-\Pi_V)\right] & & (\Tr[AB]=\Tr[BA])\\
		=&\frac{1}{|S|}\Tr\left[\left(H^{S}+\gamma^{S}\Pi_V\right)(\Id-\Pi_V)\rho(\Id-\Pi_V)\right]
		& & (\Pi_V(\Id-\Pi_V)=0)\\
		\geq&\frac{1}{|S|}\Tr\left[\left(\gamma^{S}\Id\right)(\Id-\Pi_V)\rho(\Id-\Pi_V)\right]
		& & (\gamma^S\Id\preceq H^{S}+\gamma^{S}\Pi_V)\\
		= &\frac{\gamma^{S}}{|S|}\Tr\left[(\Id-\Pi_V)\rho(\Id-\Pi_V)\right].
		%\label{eq:dissLBound}
		\end{align*}
		Combining this inequality with a trace-preservation argument yields 
		\begin{align}
		\tr{\tilde{\Pi}^{S,1}_P((\Id-\Pi_V)\rho(\Id-\Pi_V))}
		=&\Tr\left[(\Id-\Pi_V)\rho(\Id-\Pi_V)\right]
		-\tr{\tilde{\Pi}^{S,1}_D((\Id-\Pi_V)\rho(\Id-\Pi_V))}\nonumber\\
		\leq& \left(1-\frac{\gamma^{S}}{|S|}\right)\cdot \Tr\left[(\Id-\Pi_V)\rho(\Id-\Pi_V)\right]. \label{eq:dissUBound}
		\end{align}
		
		From the form of equation \eqref{eq:PFDef} we can see that 
		\begin{align}
		\Pi_V\tilde{\Pi}^{S,1}_P(\rho)=&\tilde{\Pi}^{S,1}_P(\Pi_V\rho)\nonumber\\
		\Downarrow& \left(\text{by induction, using }\eqref{eq:recursion}\right)\nonumber\\
		\Pi_V\tilde{\Pi}^{S,\tau}_P(\rho)=&\tilde{\Pi}^{S,\tau}_P(\Pi_V\rho)\label{eq:lOut}\\
		\Downarrow& (\text{by taking the adjoint of both sides})\nonumber\\
		\tilde{\Pi}^{S,\tau}_P(\rho)\Pi_V=&\tilde{\Pi}^{S,\tau}_P(\rho\Pi_V)\\
		\Downarrow& (\text{by linearity})\nonumber\\
		\tilde{\Pi}^{S,\tau}_P(\rho)(\Id-\Pi_V)=&\tilde{\Pi}^{S,\tau}_P(\rho(\Id-\Pi_V))
		\label{eq:lIn}\\
		\Downarrow& (\text{by taking the adjoint and by linearity})\nonumber\\
		(\Id-\Pi_V)\tilde{\Pi}^{S,\tau}_P(\rho)(\Id-\Pi_V)=&\tilde{\Pi}^{S,\tau}_P((\Id-\Pi_V)\rho(\Id-\Pi_V)) \label{eq:rInlIn}
		%\addtocounter{equation}{1}\tag{\theequation} \label{eq:rInlIn}
		\end{align}
		Using equation \eqref{eq:recursion} recursively and applying \eqref{eq:rInlIn} together with \eqref{eq:dissUBound}, results in 
		\begin{equation}
		\Tr\left[(\Id-\Pi_V)\tilde{\Pi}^{S,\tau}_P(\rho)(\Id-\Pi_V)\right]\leq \left(1-\frac{\gamma^{S}}{|S|}\right)^\tau\cdot \Tr\left[(\Id-\Pi_V)\rho(\Id-\Pi_V)\right].\label{eq:disstUBound}
		\end{equation}

		\begin{claim}
			\begin{equation}
			\Pi_V\tilde{\Pi}^{S,\tau}_P(\rho)(\Id-\Pi_V)=\Pi_V\rho(\Id-\Pi_V)\left(\Id-\Pi_V-H^{S}/|S|\right)^\tau\label{eq:cohBound}
			\end{equation}
		\end{claim}
		\begin{proof}
			For $\tau=0$ this definitely holds, we proceed by induction. Suppose the statement holds for $\rho_\tau=\tilde{\Pi}^{S,\tau}_P(\rho)$, i.e, $\Pi_V\rho_\tau(\Id-\Pi_V)=\Pi_V\rho(\Id-\Pi_V)\left(\Id-\Pi_V - H^{S}/|S|\right)^\tau$, then we show it for $\tau+1$:
			\begin{align*}
			\Pi_V\tilde{\Pi}^{S,\tau+1}_P(\rho)(\Id-\Pi_V)
			=&\Pi_V\tilde{\Pi}^{S,1	}_P(\rho_\tau)(\Id-\Pi_V)  & &  (\text{by }\eqref{eq:recursion})\\
			=&\Pi_V\frac{1}{|S|}\sum_{f\in S}(\Id-\Pi_f)\rho_\tau (\Id-\Pi_f)(\Id-\Pi_V)  
			& &  (\text{by }\eqref{eq:PFDef})\\
			=&\frac{1}{|S|}\sum_{f\in S}\Pi_V\rho_\tau(\Id-\Pi_V) (\Id-\Pi_f)
			& &  (V\subseteq \ker(\Pi_f))\\
			=&\Pi_V\rho_\tau(\Id-\Pi_V)\frac{1}{|S|}\sum_{f\in S}(\Id-\Pi_f)\\
			=&\Pi_V\rho_\tau(\Id-\Pi_V)\left(\Id-H^{S}/|S|\right) 
			& &  (\text{by definition of }H^S)\\
			=&\Pi_V\rho_\tau(\Id-\Pi_V)\left(\Id-\Pi_V-H^{S}/|S|\right) 
			& &  ((\Id-\Pi_V)\Pi_V=0)\\
			=&\Pi_V\rho(\Id-\Pi_V)\left(\Id-\Pi_V-H^{S}/|S|\right)^{\tau+1}
			& &  (\text{by the ind. hyp.})
			\end{align*}
		\end{proof}
		
		Now we are ready to calculate the trace distance between $\tilde{\Pi}^{S}(\rho)$ and its approximation $\tilde{\Pi}^{S,\tau}(\rho)$.
		Since $(P,D)$ are classical labels
		
		\begin{equation}
		\trnorm{\tilde{\Pi}^{S,\tau}(\rho)-\tilde{\Pi}^{S}(\rho)}
		=\trnorm{\tilde{\Pi}^{S,\tau}_P(\rho)-\tilde{\Pi}^{S}_P(\rho)}
		+\trnorm{\tilde{\Pi}^{S,\tau}_D(\rho)-\tilde{\Pi}^{S}_D(\rho)}.\label{eq:trClSplit}
		\end{equation}
		
		\noindent We first handle the outcome $P$: (recall that we assumed $0\preceq \rho$ )
		\begin{align*}
		&\trnorm{\tilde{\Pi}^{S,\tau}_P(\rho)-\tilde{\Pi}^{S}_P(\rho)}
		=\trnorm{\tilde{\Pi}^{S,\tau}_P(\rho)-\Pi_V\rho\Pi_V} & & (\text{by Def.~\eqref{def:DP}}) \\
		&=\trnorm{\left(\left(\Id-\Pi_V\right)+\Pi_V\right)
			\left(\tilde{\Pi}^{S,\tau}_P(\rho)-\Pi_V\rho\Pi_V\right)
			\left(\left(\Id-\Pi_V\right)+\Pi_V\right)} & & (\Id=\left(\Id-\Pi_V\right)+\Pi_V) \\
		&\leq\trnorm{\Pi_V\left(\tilde{\Pi}^{S,\tau}_P(\rho)-\rho\right)\Pi_V}
		+2\cdot\trnorm{\Pi_V\tilde{\Pi}^{S,\tau}_P(\rho)\left(\Id-\Pi_V\right)} 
		& & (\trnorm{A}=\lVert A^\dagger\rVert_1) \\
		&\phantom{--}+\trnorm{\left(\Id-\Pi_V\right)\tilde{\Pi}^{S,\tau}_P(\rho)\left(\Id-\Pi_V\right)}
		& & (\text{triangle inequality}) \\
		&=2\cdot\trnorm{\Pi_V\rho\left(\Id-\Pi_V\right)\left(\Id-\Pi_V-H^{S}/|S|\right)^\tau}
		& & (\text{by }\eqref{eq:VReam}\text{ and }\eqref{eq:cohBound}) \\
		&\phantom{--}+\Tr\left[\left(\Id-\Pi_V\right)\tilde{\Pi}^{S,\tau}_P(\rho)\left(\Id-\Pi_V\right)\right] & & \left(0\preceq \tilde{\Pi}^{S,\tau}_P(\rho)\right)\\
		&\leq 2\cdot\trnorm{\Pi_V\rho\left(\Id-\Pi_V\right)}
		\cdot\spnorm{\left(\Id-\Pi_V-H^{S}/|S|\right)^\tau}
		& & (\text{Proposition~\ref{prop:HolderIneq}}) \\
		\phantom{-}&+\left(1-\frac{\gamma^{S}}{|S|}\right)^\tau\cdot \Tr\left[(\Id-\Pi_V)\rho(\Id-\Pi_V)\right] & &  (\text{by }\eqref{eq:disstUBound}) \\
		&=\left(1-\frac{\gamma^{S}}{|S|}\right)^\tau
		\left(2\cdot\trnorm{\Pi_V\rho\left(\Id-\Pi_V\right)}
		+\trnorm{(\Id-\Pi_V)\rho(\Id-\Pi_V)}\right)
		& & \left(\frac{\gamma^S}{|S|}\cdot\Id\preceq \frac{H^S}{|S|}+\Pi_V\right)\\
		&\leq\left(1-\frac{\gamma^{S}}{|S|}\right)^\tau\cdot 3\cdot\trnorm{\rho}
		& & (\text{Proposition~\ref{prop:HolderIneq}}) \\
		&\leq 3\cdot e^{-\frac{\gamma^S}{|S|}\tau  } \tr{\rho}. & & (1-x\leq e^{-x})
		\end{align*}
		
		\noindent Now we handle the outcome $D$: 
		
		\begin{align*}
		\tr{\tilde{\Pi}^{S,\tau}_D(\rho)}
		=&\tr{\rho}-\tr{\tilde{\Pi}^{S,\tau}_P(\rho)} & & (\text{trace preservation}) \\
		=&\tr{\Pi_V\rho\Pi_V} +\tr{(\Id-\Pi_V)\rho(\Id-\Pi_V)} & & (\text{trace preservation}) \\
		&\phantom{-}-
		\tr{\Pi_V\tilde{\Pi}^{S,\tau}_P(\rho)\Pi_V} -\tr{(\Id-\Pi_V)\tilde{\Pi}^{S,\tau}_P(\rho)(\Id-\Pi_V)} & & (\text{trace preservation}) \\
		=&\tr{(\Id-\Pi_V)\rho(\Id-\Pi_V)} -\tr{(\Id-\Pi_V)\tilde{\Pi}^{S,\tau}_P(\rho)(\Id-\Pi_V)} & & (\text{by \eqref{eq:VReam}})
		\end{align*}
		Using \eqref{eq:disstUBound} we can conclude that 
		\begin{equation} \label{eq:polDiff}
		\left|\tr{\tilde{\Pi}^{S,\tau}_D(\rho)}- \tr{(\Id-\Pi_V)\rho(\Id-\Pi_V)}\right|
		\leq \left(1-\frac{\gamma^{S}}{|S|}\right)^\tau \tr{(\Id-\Pi_V)\rho(\Id-\Pi_V)}.
		\end{equation}
		Finally, we use that both channels depolarise their output $D$: 
		\begin{align*}
		\trnorm{\tilde{\Pi}^{S,\tau}_D(\rho)-\tilde{\Pi}^{S}_D(\rho)}
		=&\trnorm{\tr{\tilde{\Pi}^{S,\tau}_D(\rho)}\cdot\Id/N
			-\tr{\tilde{\Pi}^{S}_D(\rho)}\cdot\Id/N} & & (\text{depolarised output}) \\
		=&\left|\tr{\tilde{\Pi}^{S,\tau}_D(\rho)}
		-\tr{\tilde{\Pi}^{S}_D(\rho)}\right|\cdot \trnorm{\Id/N}\\
		=&\left|\tr{\tilde{\Pi}^{S,\tau}_D(\rho)} 
		-\tr{(\Id-\Pi_V)\rho(\Id-\Pi_V)}\right| & & (\text{by Def.~\ref{def:DP}}) \\
		\leq&\left(1-\frac{\gamma^{S}}{|S|}\right)^\tau \tr{(\Id-\Pi_V)\rho(\Id-\Pi_V)}
		& & (\text{by \eqref{eq:polDiff}}) \\
		\leq& e^{-\frac{\gamma^S}{|S|} \tau} \tr{\rho}.
		\end{align*}
		
		Combining the bounds that we got for output cases $P$ and $D$ and using \eqref{eq:trClSplit} we can conclude that for $\rho\succeq 0$:
		$$
		\trnorm{\tilde{\Pi}^{S,\tau}(\rho)-\tilde{\Pi}^{S}(\rho)} \leq 4\cdot e^{- \frac{\gamma^S}{|S|} \tau} \tr{\rho}
		$$
		%We extend the above result for all hermitian operators. Let $M=M^\dagger$, and let $M^+,M^-$ be the postive and the negative parts of $M$ correspondingly.
		%Then 
		%\begin{align*}
		%\trnorm{\Pi^{S}_t(M,V)-\Pi^{S}(M,V)}
		%\leq&\trnorm{\Pi^{S}_t(M^+,V)-\Pi^{S}(M^+,V)}
		%+\trnorm{\Pi^{S}_t(M^+,V)-\Pi^{S}(M^+,V)}\\
		%\leq&\delta \left(\trnorm{M^+}+\trnorm{M^-}\right)= \delta \trnorm{M}.
		%\end{align*}
	\end{proof}
	
	\subsection{The final algorithm}
	
	\begin{lemma} \label{lemma:QImplementation} (A progressive quantum channel implementation)
		Suppose $\theta,\beta\in(0,1]$. Let $t\geq\frac{\ln(3/\theta)}{\theta\cdot\gamma}$ and 
		$\tau\geq\frac{|F|}{\gamma}\left(\ln\left(1/\beta\right)+\ln\left(t+1\right)+\ln(4)\right)$.
		Consider Algorithm~\ref{alg:measurement} with parameters $\theta$ and $t$ 
		while replacing the non-local measurement operators with Algorithm~\ref{alg:proj} setting runtime to $\tau$.
		Then this algorithm implements a quantum channel that $\beta$-approximates a progressive quantum channel with respect to the exact progress measure with error parameter $\theta$. Moreover the algorithm uses at most $t$  local weak measurements and $(t+1)\tau$ local (strong) measurements.
	\end{lemma}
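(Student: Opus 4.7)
The ideal version of Algorithm~\ref{alg:measurement}, in which the non-local projective measurements of $\Pi_V$ and $\Pi_{V_f}$ are performed exactly, satisfies the four properties \ref{it:outcomes}-\ref{it:error} of Definition~\ref{def:progressive_channel} by Lemma~\ref{lemma:MThProgr}, since the hypothesis $t\geq \ln(3/\theta)/(\theta\cdot\gamma)$ implies the weaker requirement $t \geq \ln(3/\theta)/(\theta\cdot\min(\gamma^{C\cup\{f\}},1))$ (recall $\gamma\leq\gamma^{C\cup\{f\}}$ and $\gamma\leq 1$ whenever the relevant sub-Hamiltonian is non-zero). So the plan is to show that the approximate channel, obtained by substituting Algorithm~\ref{alg:proj} with parameter $\tau$ for each non-local measurement, stays within trace distance $\beta$ of this ideal channel, uniformly over inputs $\rho$ with $\trnorm{\rho}\leq 1$.

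First I would count the non-local measurements: one measurement of $\Pi_V$ per loop iteration (for $t$ iterations) plus one final measurement of $\Pi_{V_f}$, for a total of $t+1$ calls to Algorithm~\ref{alg:proj}. Since each call uses $\tau$ local projective measurements and the loop itself uses $t$ local weak measurements, this yields the stated operation counts. Because Algorithm~\ref{alg:measurement} aborts with an error label as soon as the non-local measurement reports ``not in $V$" (resp. ``not in $V_f$"), the ``$D$" outcome of Algorithm~\ref{alg:proj} corresponds exactly to the classical label $E1$ (resp. $E2$) of the ideal channel; in particular the depolarisation performed in line~\ref{line:depolarisation} is irrelevant to the output we actually use, so the classical labels of the approximate and ideal channel match naturally.

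Next I would apply Lemma~\ref{lemma:projPrec}: for each invocation on an intermediate state $\rho'$, the trace-distance error is at most $4\cdot\exp(-\gamma^S \tau/|S|)\cdot\trnorm{\rho'}\leq 4\cdot\exp(-\gamma\tau/|F|)\cdot\trnorm{\rho'}$, using $\gamma^S\geq\gamma$ and $|S|\leq|F|$. With $\tau\geq(|F|/\gamma)(\ln(1/\beta)+\ln(t+1)+\ln 4)$, this bound simplifies to $\frac{\beta}{t+1}\trnorm{\rho'}$. Since every step of the combined procedure is implemented by CPTP maps (the local measurements together with the intermediate classical logic), the trace norm of the running state never exceeds that of the input, so $\trnorm{\rho'}\leq\trnorm{\rho}\leq 1$ throughout.

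Finally I would put the pieces together via a standard hybrid argument: define a sequence of channels $\mathcal{Q}_0,\mathcal{Q}_1,\ldots,\mathcal{Q}_{t+1}$ where $\mathcal{Q}_0$ is the ideal channel and $\mathcal{Q}_{t+1}$ is the fully approximate channel, with $\mathcal{Q}_{k+1}$ obtained from $\mathcal{Q}_k$ by replacing the $(k+1)$-st non-local measurement with its Algorithm~\ref{alg:proj} approximation. By the contractivity of CPTP maps under trace distance and the per-step bound above, $\trnorm{\mathcal{Q}_{k+1}(\rho)-\mathcal{Q}_k(\rho)}\leq \beta/(t+1)$, so the triangle inequality yields $\trnorm{\mathcal{Q}_{t+1}(\rho)-\mathcal{Q}_0(\rho)}\leq\beta$, which is exactly the $\beta$-approximation required by Definition~\ref{def:approximate_channel}. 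The only subtle point, and the main thing to verify carefully, is that the classical register carrying the running label ($G$/$B$/$E1$/$E2$) is consistently identified between the ideal and approximate chains so that the hybrid argument applies in the joint quantum-classical trace norm of Definition~\ref{def:qCStates}; this is handled by the $D\leftrightarrow E$ correspondence noted above.
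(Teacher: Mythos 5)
Your proposal is correct and follows essentially the same route as the paper's proof: invoke Lemma~\ref{lemma:MThProgr} for the ideal channel, apply Lemma~\ref{lemma:projPrec} to each of the $t+1$ non-local measurement calls to get a per-call error of $\beta/(t+1)$, and conclude via a telescoping triangle inequality (your "hybrid argument"). You are a bit more explicit than the paper in a few places — in particular in checking that $\gamma\le\min(\gamma^{C\cup\{f\}},1)$ so that the hypothesis $t\ge\ln(3/\theta)/(\theta\gamma)$ suffices, and in spelling out the $D\leftrightarrow E$ label correspondence and the contractivity step — but these are just elaborations of steps the paper treats as immediate, not a different method.
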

	\begin{proof}
		Algorithm~\ref{alg:measurement} with the above parameters implements a progressive quantum channel with respect to the exact progress measure with error parameter $\theta$ if we use the destructive non-local measurement channels of Definition~\ref{def:DP}, as shown by Lemma~\ref{lemma:MThProgr}. 
		Also Algorithm~\ref{alg:proj} implements a $\frac{\beta}{(t+1)}$-approximation of the destructive non-local measurement channels, as shown by Lemma~\ref{lemma:projPrec}. Since Algorithm~\ref{alg:measurement} uses these channel at most $t+1$ times, the triangle inequality shows that the approximate algorithm is $\beta$-close to the ideal one using the exact version of the destructive non-local measurement channel.
		The upper bound on the number of measurements performed can be easily deduced from the loop structure of 
		Algorithm~\ref{alg:measurement} and Algorithm~\ref{alg:proj}.
	\end{proof}
	
	\begin{theorem}\label{thm:approxRun}
		Let $R$ be the upper bound on the resamplings in Theorem~\ref{thm:generalR}, and let $d=\max\left\{|\Gamma^+(f)|:f\in F\right\}$. Let us run the algorithm of Corollary~\ref{cor:approxRun}
		with the quantum channel provided by Lemma~\ref{lemma:QImplementation} 
		using parameters $\theta=\frac{1}{12R}$ and $\beta=\frac{1}{6\left(|F|+6R\cdot d\right)}$.
		Then this quantum algorithm terminates with ``SUCCESS" with probability at least $1/2$, 
		while performing at most $\mathcal{O}\left(\frac{R\cdot|F|\cdot(|F|+R\cdot d)\cdot\log(R)}{\gamma^2}\cdot
		\left(\log(|F|)+\log(R)+\log\left(1/\gamma\right)\right)\right)$ (weak and strong) measurements. Conditional on termination with ``SUCCESS", its output quantum state is $1/2$-close in trace distance to a quantum state supported on the ground state space of $H$.
	\end{theorem}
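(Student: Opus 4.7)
The plan is to simply assemble the pieces. Corollary~\ref{cor:approxRun} already delivers both correctness statements (probability of ``SUCCESS'' at least $1/2$, and $1/2$-closeness of the conditional output to a state on the ground space) provided we can feed it a quantum channel that (a) is a progressive channel with respect to the exact progress measure with error parameter $\theta=\frac{1}{12R}$, and (b) is $\beta$-approximated in the sense of Definition~\ref{def:approximate_channel} with $\beta=\frac{1}{6(|F|+6R\cdot d)}$. Lemma~\ref{lemma:QImplementation} gives exactly such an implementation by combining Algorithm~\ref{alg:measurement} (with the specified $\theta$ and some $t$) with Algorithm~\ref{alg:proj} (with some $\tau$) in place of each non-local measurement. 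Thus the first two sentences of the theorem follow immediately from chaining Lemma~\ref{lemma:QImplementation} into Corollary~\ref{cor:approxRun}, with no new conceptual input.

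All that remains is the quantitative bound on the total number of measurements. By Corollary~\ref{cor:approxRun}, the channel $\tilde{\mathcal{Q}}$ is invoked at most $|F|+6R\cdot d$ times before the algorithm either terminates or times out. Each invocation costs, by Lemma~\ref{lemma:QImplementation}, at most $t$ weak plus $(t+1)\tau$ strong local measurements, where we may take
\[
t = \left\lceil \frac{\ln(3/\theta)}{\theta\cdot\gamma}\right\rceil, \qquad
\tau = \left\lceil \frac{|F|}{\gamma}\bigl(\ln(1/\beta)+\ln(t+1)+\ln 4\bigr)\right\rceil.
\]
Plugging in $\theta = 1/(12R)$ gives $t = \mathcal{O}\bigl(\frac{R\log R}{\gamma}\bigr)$; plugging in $\beta=\frac{1}{6(|F|+6R\cdot d)}$ gives $\ln(1/\beta)=\mathcal{O}(\log|F|+\log R+\log d)$, and $\log(t+1)=\mathcal{O}(\log R+\log(1/\gamma))$, so $\tau=\mathcal{O}\bigl(\frac{|F|}{\gamma}(\log|F|+\log R+\log(1/\gamma))\bigr)$, where we used $d\leq |F|$ to absorb $\log d$.

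Multiplying the per-call cost $(t+1)\tau = \mathcal{O}\bigl(\frac{R\cdot|F|\cdot\log R}{\gamma^2}(\log|F|+\log R+\log(1/\gamma))\bigr)$ by the number $|F|+6R\cdot d$ of channel invocations yields
\[
\mathcal{O}\!\left(\frac{R\cdot|F|\cdot(|F|+R\cdot d)\cdot\log R}{\gamma^2}\bigl(\log|F|+\log R+\log(1/\gamma)\bigr)\right),
\]
which matches the stated bound. I expect no real obstacles: the main work has already been done in Corollary~\ref{cor:approxRun} and Lemma~\ref{lemma:QImplementation}, so the only thing to be careful about is the bookkeeping of logarithms when substituting the specific values of $\theta$ and $\beta$, and verifying that the crude bound $d\leq |F|$ is indeed harmless for the stated asymptotics.
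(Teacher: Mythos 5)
Your proposal is correct and follows essentially the same route as the paper: both derive the correctness statements by chaining Lemma~\ref{lemma:QImplementation} into Corollary~\ref{cor:approxRun}, and then bound the measurement count by multiplying the number of channel invocations $(|F|+6Rd)$ by the per-invocation cost $\mathcal{O}(t\cdot\tau)$, substituting the given $\theta,\beta$ and using $d\leq|F|$ to simplify the logarithms.
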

	\begin{proof}
		The statement directly follows from Corollary~\ref{cor:approxRun} and Lemma~\ref{lemma:QImplementation}.
		To justify the bound on the number of measurements performed we note that $t=\frac{12R\cdot\ln(36 R)}{\gamma}$,
		$\tau=\frac{|F|}{\gamma}\left(\ln\left(|F|+6R\cdot d\right)+\ln\left(t+1\right)+\ln(4\cdot 6)\right)$
		and the algorithm of Lemma~\ref{lemma:QImplementation} is used at most $\left(|F|+6R\cdot d\right)$ times.
	
		The overall number of measurements performed by the algorithm is upper bounded by 
		$\left(|F|+6d\cdot R\right)\cdot (t+1)\cdot(\tau+1)=
		\mathcal{O}\left(\frac{R\cdot|F|\cdot(|F|+R\cdot d)\cdot\log(R)}{\gamma^2}\cdot
		\left(\log(|F|)+\log(R)+\log\left(1/\gamma\right)\right)\right)$.
	\end{proof}
	\begin{corollary} \label{cor:boosted}
		Let $R$ be the upper bound on the resamplings in Theorem~\ref{thm:generalR}, 
		and let $d=\max\left\{|\Gamma^+(f)|:f\in F\right\}$.
		Then for all $\delta,\epsilon\in(0,1]$, there is a quantum algorithm that uses only local (weak) measurements of the projectors $\Pi_f$ and terminates with success with probability at least $1-\epsilon$, 
		performing at most $\tilde{\mathcal{O}}\left(\frac{R\cdot|F|\cdot(|F|+R\cdot d)}{\gamma^2}\cdot\log\left(\frac{1}{\epsilon}\right)
		+\frac{|F|}{\gamma^F}\cdot\log\left(\frac{1}{\delta}\right)\cdot\log\left(\frac{1}{\epsilon}\right)\right)$ measurements in total, where $\gamma$ is the uniform gap and $\gamma^F$ is the gap of $H$ as in Definition~\ref{def:uniform_gap}. Conditional on termination with ``SUCCESS", the output quantum state is $\delta$-close in trace distance to a quantum state supported on the ground space of $H$.
	\end{corollary}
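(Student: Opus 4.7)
The plan is to amplify Theorem~\ref{thm:approxRun} in two ways: first append a final projection onto $\ker(H)$ to reduce the $1/2$ trace-distance error down to $\delta$, and then amplify the $1/2$ success probability to $1-\epsilon$ by independent repetition.

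First I would run the algorithm of Theorem~\ref{thm:approxRun} once, costing $\tilde{\mathcal{O}}(R|F|(|F|+Rd)/\gamma^2)$ local measurements. Conditioned on ``SUCCESS'' (which happens with probability at least $1/2$) the output state $\tilde{\sigma}$ is at trace distance at most $1/2$ from some state $\sigma$ supported on $V:=\ker(H)$. Next I would call Algorithm~\ref{alg:proj} with $S=F$ and $\tau=\lceil\frac{|F|}{\gamma^F}\ln(24/\delta)\rceil$; by Lemma~\ref{lemma:projPrec} this implements a $(\delta/6)$-approximation $\tilde{\Pi}^{F,\tau}$ of the destructive projection $\tilde{\Pi}^F$, and it uses only local measurements of the $\Pi_f$. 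I declare success of the combined procedure only if the first phase ended with ``SUCCESS'' \emph{and} the second phase returned the classical label $P$.

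For the analysis write $\mu_P=\tilde{\Pi}^{F,\tau}_P(\tilde{\sigma})$, $\nu_P=\tilde{\Pi}^F_P(\tilde{\sigma})=\Pi_V\tilde{\sigma}\Pi_V$, with traces $p_P$ and $q_P$. Since $\sigma$ is supported on $V$ and $\lVert\tilde{\sigma}-\sigma\rVert_1\le 1/2$, we have $q_P\ge 1/2$, and then $p_P\ge q_P-\delta/6\ge 1/3$ from the $(\delta/6)$-approximation. Thus the per-trial success probability is at least $1/2\cdot 1/3=1/6$. Reusing the normalisation argument already carried out in the proof of Corollary~\ref{cor:approxRun},
\[
\left\lVert\frac{\mu_P}{p_P}-\frac{\nu_P}{q_P}\right\rVert_1\le \frac{2\lVert\mu_P-\nu_P\rVert_1}{q_P}\le\frac{2(\delta/6)}{1/2}=\frac{2\delta}{3}\le\delta,
\]
and $\nu_P/q_P$ is supported on $V$, yielding the claimed trace-distance bound.

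Finally I would repeat the whole procedure $k=\lceil\log_{6/5}(1/\epsilon)\rceil=\mathcal{O}(\log(1/\epsilon))$ times independently and output the first successful trial, driving the overall failure probability below $\epsilon$. Summing the per-trial cost of the two phases over $k$ repetitions yields exactly the advertised bound $\tilde{\mathcal{O}}\bigl(\tfrac{R|F|(|F|+Rd)}{\gamma^2}\log(1/\epsilon)+\tfrac{|F|}{\gamma^F}\log(1/\delta)\log(1/\epsilon)\bigr)$. The only delicate point is propagating the trace-distance bound through the post-selection on outcome $P$ and through the $\tilde{\Pi}^F\to\tilde{\Pi}^{F,\tau}$ approximation, but this is a direct analogue of the bookkeeping already done in Corollary~\ref{cor:approxRun} and poses no new technical obstacle.
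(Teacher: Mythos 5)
Your proposal is correct and follows essentially the same route as the paper's proof: run Theorem~\ref{thm:approxRun} once, post-process the ``SUCCESS'' branch by applying Algorithm~\ref{alg:proj} with $S=F$ and $\tau=\Theta\bigl(\tfrac{|F|}{\gamma^F}\log(1/\delta)\bigr)$ to push the trace distance down to $\delta$ via Lemma~\ref{lemma:projPrec} and the same normalisation/triangle-inequality bookkeeping as in Corollary~\ref{cor:approxRun}, then amplify the constant per-trial success probability to $1-\epsilon$ by $\mathcal{O}(\log(1/\epsilon))$ independent repetitions. The only differences from the paper are minor constant choices (e.g. $\ln(24/\delta)$ vs.\ the paper's $\ln(8/\delta)$, and $1/6$ vs.\ $1/4$ as the per-trial success lower bound), which leave the $\tilde{\mathcal{O}}$ statement unchanged.
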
	
	\begin{proof}
		We boost the previous theorem using standard techniques. After performing the algorithm described in Theorem~\ref{thm:approxRun} we get result ``SUCCESS" with probability $1/2$, conditioned on this outcome we run Algorithm~\ref{alg:proj} with $\tau=\frac{|F|}{\gamma^F}\cdot\ln\left(\frac{8}{\delta}\right)$. Since the output state was $1/2$ close to a ground state, a projective measurement of the kernel of $H$ finds the state 
		in the kernel with probability at least $1/2$, so Algorithm~\ref{alg:proj} finds a flaw with probability at most $1/2$. Moreover conditioned on not finding a flaw the state becomes $\delta$-close to a ground state due to Lemma~\ref{lemma:projPrec} and a straightforward triangle inequality argument.
		If we repeat the whole procedure $4\cdot\ln\left(\frac{1}{\epsilon}\right)$ times, 
		then the probability of all runs failing is less than $\epsilon$.
	\end{proof}
	\begin{corollary} \label{cor:SLCRuntime}
		\noindent For the case of \eqref{eq:SLC} the runtime bound of Corollary~\ref{cor:boosted} is $$\tilde{\mathcal{O}}\left(\frac{|F|^3\cdot n^2}{d\cdot\gamma^2}
		\cdot\log\left(\frac{1}{\epsilon}\right)+\frac{|F|}{\gamma^F}
		\cdot\log\left(\frac{1}{\delta}\right)\log\left(\frac{1}{\epsilon}\right)\right).$$
	\end{corollary}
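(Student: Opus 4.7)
The plan is to derive Corollary~\ref{cor:SLCRuntime} as a direct specialisation of Corollary~\ref{cor:boosted} once we substitute in the value of $R$ that Theorem~\ref{thm:generalR} gives under the symmetric Lovász condition~\eqref{eq:SLC}. The key input is the first bullet of Theorem~\ref{thm:generalR}, which states that under \eqref{eq:SLC}, the expected number of resamplings satisfies
\begin{equation*}
R = \mathcal{O}\!\left(n\cdot\frac{|F|}{d}\right).
\end{equation*}
So my only task is arithmetic: plug this $R$ into the general expression
$\tilde{\mathcal{O}}\bigl(\tfrac{R\cdot|F|\cdot(|F|+R\cdot d)}{\gamma^2}\log(1/\epsilon) + \tfrac{|F|}{\gamma^F}\log(1/\delta)\log(1/\epsilon)\bigr)$
from Corollary~\ref{cor:boosted}, and then simplify.

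First I would handle the factor $|F| + R\cdot d$. Substituting $R\cdot d = \mathcal{O}(n|F|)$, and noting that the number of qubits satisfies $n\geq 1$ so that $|F|\leq n|F|$, this factor simplifies to $|F| + R\cdot d = \mathcal{O}(n|F|)$. Next, the numerator in the first term of the runtime becomes
\begin{equation*}
R\cdot|F|\cdot(|F|+R\cdot d) = \mathcal{O}\!\left(\frac{n|F|}{d}\cdot |F|\cdot n|F|\right) = \mathcal{O}\!\left(\frac{n^2|F|^3}{d}\right),
\end{equation*}
which yields precisely $\tilde{\mathcal{O}}\bigl(\tfrac{|F|^3 n^2}{d\,\gamma^2}\log(1/\epsilon)\bigr)$ for the first summand. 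The second summand is unchanged, as it does not depend on $R$. Combining the two reproduces the stated bound.

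There is no real obstacle: the argument is a direct substitution. The only minor care is that the $\tilde{\mathcal{O}}$ notation absorbs the polylogarithmic factors in the explicit runtime of Theorem~\ref{thm:approxRun} (the $\log R$, $\log|F|$, $\log(1/\gamma)$ terms), so when we substitute $R = \mathcal{O}(n|F|/d)$ these extra factors remain polylogarithmic in $n$, $|F|$, $1/\gamma$ and are therefore safely hidden inside $\tilde{\mathcal{O}}$. Once this is noted, the statement of Corollary~\ref{cor:SLCRuntime} follows.
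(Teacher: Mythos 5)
Your proof is correct and matches the paper's own one-line argument: substitute $R=\mathcal{O}(n|F|/d)$ from Theorem~\ref{thm:generalR} into the runtime of Corollary~\ref{cor:boosted} and simplify. The extra care you take in noting that the polylogarithmic factors remain benign under the substitution is fine but was already implicit in the $\tilde{\mathcal{O}}$ of Corollary~\ref{cor:boosted}.
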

	\begin{proof}
		Use that in Theorem~\ref{thm:generalR} in the case of \eqref{eq:SLC} we get $R=\mathcal{O}\left(\frac{n\cdot |F|}{d}\right)$.
	\end{proof}	

\section{Discussion}\label{sec:discussion}
	In the non-commuting case, it is computationally hard to distinguish between a state which has $0$-energy, and a state which has an exponentially small energy. Therefore, it seems unlikely that an efficient algorithm will be able to construct exactly a $0$-energy state. The energy levels in the classical and commuting case are integers, and so the energy gap was never an issue in previous works. 
	
	The main question that this work leaves open is whether there exists a polynomial time randomized approximation scheme (FPRAS) for this problem: given an energy level $\epsilon$ find, with a constant probability, a state which is supported on energy levels below $\epsilon$, with the runtime scaling polynomially in the input parameters and $\frac{1}{\epsilon}$. Appendix~\ref{apx:fpras} shows how a simple extension of our framework can be adapted to the FPRAS requirements, and a surprising phenomenon that poses a barrier for showing an FPRAS: Suppose a state $\ket{\psi}$ has energy $E$, for $H_1+H_2$ where both terms are positive semi-definite. Now, suppose the energy of the state is measured only with respect to $H_1$. It could be that the energy will increase above $E$, with high probability (even though we subtracted $H_2$, and therefore the expected energy cannot increase). 

\ignore{
\section{Quantum supremacy}
	Our quantum algorithm uses the quantum resources quite efficiently to attack a genuinely quantum problem. Also the structure of the algorithm is sort of self correcting, providing a kind of built in error-correction. These properties make it a good candidate for quantum supremacy demonstration.
	
	Actually the algorithm solves the following problem. Given a set of projectors satisfying Shearer's condition, it concludes one of the two:
	\begin{itemize}
		\item Either it prepares (and therefore possibly samples from) a quantum state with energy below epsilon.
		\item Or it concludes, that the uniform gap of the Hamiltonian is smaller than epsilon. 
	\end{itemize}
	It is possible that the algorithm can do both, but it can always solve at least one of the problems.
	Both things seems hard on a classical computer, however more investigation is needed.
}

\subsection*{Author Contributions} A.G. is the principal author of this paper. 
	A.G. adapted the ideas of \cite{HarveyVondrak15} to the commuting quantum setting.
	O.S. showed that a modified version of the resulting algorithm terminates in the non-commuting case.
	A.G. generalised the measurement procedure to fit the non-commuting setting and proposed the use of weak measurements and the quantum Zeno effect for efficient implementation, and analysed the resulting algorithms.

\subsection*{Acknowledgments}
	A.G. thanks Ronald de Wolf for support and many valuable discussions, 
	Márió Szegedy for recommending relevant literature on the classical constructive LLL, 
	Martin Schwarz and Niel de Beaudrap for discussions.
	A.G. was supported by ERC Grant 615307-QPROGRESS.
	O.S. was supported by ERC Grant 030-8301. 
	
\ifpdf
	\typeout{^^J *** PDF mode *** } 
	\printbibliography	
\else
	\typeout{^^J *** DVI mode ***} 
	\bibliography{LLL}{}
	\bibliographystyle{acmUrlePrint}
\fi 	
	
\appendix
\ignore{	
    \section{Nomenclature}\label{sec:nomenclature}
	\begin{nopreview}
	\printnomenclature
	\end{nopreview}
}	

	\section{Bounds on weighted stable set sequences}\label{apx:SSSBounds}
	\begin{lemma} \label{lemma:pathEsts}
		If the vector of probabilities $(p_f)_{f\in F}\in(0,1)^{|F|}$ satisfies
		\begin{align*}
		%\begin{cases}
		\text{\textbullet } \text{ \eqref{eq:ALC}},\,\,
		&\text{then } \sum_{{\cal{I}}\in{\cal{IS}}}p_{\cal I}\leq \prod_{f\in F}\frac{1}{1-x_f}.\\
		\text{\textbullet } \text{ \eqref{eq:CEC}},\,\,
		&\text{then } \sum_{{\cal{I}}\in{\cal{IS}}}p_{\cal I}\leq \prod_{f\in F}(1+y_f).\\
		\text{\textbullet } \text{ \eqref{eq:SHC}},\,\,
		&\text{then } \sum_{{\cal{I}}\in{\cal{IS}}}p_{\cal I}\leq \frac{1}{q_\varnothing(p_f)}.
		%\end{cases}	  
		\end{align*}
	\end{lemma}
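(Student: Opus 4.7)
Since ${\cal IS}$ and the weights $p_{\cal I}$ depend only on the classical data $(p_f)_{f\in F}$ and the dependency graph $G$, all three bounds reduce to purely combinatorial statements about weighted sums over stable-set sequences, and the arguments of Harvey and Vondr\'ak~\cite{HarveyVondrak15} apply verbatim. The plan is uniform across the three cases: introduce a generating function over stable set sequences, derive a recursion for it, and bound it by an explicit closed-form ansatz, with the three conditions giving rise to three different candidate ansatz functions.

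Concretely, for each $S\subseteq F$, with the convention $\Gamma^+(\varnothing):=F$, I would define
\[\Psi(S):=\sum_{\substack{{\cal I}=(I_1,\ldots,I_s)\in{\cal IS}\\ s\geq 1\,\Rightarrow\, I_1\subseteq \Gamma^+(S)}} p_{\cal I},\]
so that the quantity to be bounded is $\Psi(\varnothing)$. Splitting off the first set $I_1$ from any non-empty sequence and observing that the remainder is itself a stable set sequence whose first element lies in $\Gamma^+(I_1)$ yields the recursion
\[\Psi(S)=1+\sum_{\substack{\varnothing\neq I\in\Ind(F)\\I\subseteq \Gamma^+(S)}} p_I\cdot \Psi(I).\]
The strategy for each hypothesis is to exhibit a non-negative $\Phi(S)$ satisfying the reverse inequality $\Phi(S)\geq 1+\sum_{I}p_I\,\Phi(I)$ and conclude $\Psi\leq \Phi$ by induction on $|F\setminus S|$ together with the monotonicity of the recursion in $\Psi$.

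For (GLC) the natural choice is $\Phi(S)=\prod_{f\in\Gamma^+(S)}\frac{1}{1-x_f}$: substituting the hypothesis $p_f\leq x_f\prod_{f'\in\Gamma(f)}(1-x_{f'})$ collapses the sum over independent subsets of $\Gamma^+(S)$ into a telescoping manipulation of products, and reading off $\Phi(\varnothing)$ gives $\prod_f 1/(1-x_f)$. For (CEC) the analogous ansatz $\Phi(S)=\prod_{f\in \Gamma^+(S)}(1+y_f)$ works, with the verification handed to us essentially for free by the defining inequality of (CEC), which is stated exactly so that $\sum_{J\in\Ind(F),J\subseteq\Gamma^+(f)}\prod_{f'\in J}y_{f'}\leq y_f/p_f$. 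The main obstacle is the (SHC) case: here one takes an ansatz $\Phi$ built from the independence polynomials $q_I(p)$ arranged so that $\Phi(\varnothing)=1/q_\varnothing(p)$, and the required inequality reduces to a polynomial identity (or one-sided inequality) among the $q_I$ combined with the non-negativity of every $q_I(p)$ guaranteed by (SHC). This identity is the standard inclusion-exclusion manipulation on Shearer's polynomials underlying the Kolipaka--Szegedy partition-function formula for the abstract polymer model, which we adapt following~\cite{HarveyVondrak15}; once it is in hand, the induction closes cleanly.
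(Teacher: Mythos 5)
The paper's own ``proof'' of Lemma~\ref{lemma:pathEsts} is a one-line citation to Harvey and Vondr\'ak~\cite{HarveyVondrak15,HarveyVondrak15Ar}, so you are attempting to reconstruct arguments the paper deliberately outsources. Your high-level plan -- introduce a weighted generating function $\Psi(S)$ over stable set sequences, split off the first block $I_1$ to obtain the recursion $\Psi(S)=1+\sum_{\varnothing\neq I\in\Ind(F),\,I\subseteq\Gamma^+(S)}p_I\,\Psi(I)$, and then dominate $\Psi$ by an explicit ansatz $\Phi$ satisfying $\Phi(S)\geq 1+\sum_I p_I\,\Phi(I)$ -- is indeed the structure of the Harvey--Vondr\'ak proofs, and the recursion you write down is correct given the convention $\Gamma^+(\varnothing):=F$.

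However, there is a genuine gap in how you close the argument. You claim to ``conclude $\Psi\leq\Phi$ by induction on $|F\setminus S|$,'' but the recursion does not decrease $|F\setminus S|$: it relates $\Psi(S)$ to $\Psi(I)$ for independent sets $I\subseteq\Gamma^+(S)$, and such an $I$ can be much smaller than $S$ (e.g., a singleton), so $|F\setminus I|$ can be strictly larger than $|F\setminus S|$. The induction as stated is not well-founded. What actually works is the separate issue you gesture at with ``monotonicity of the recursion'': define the truncated sums $\Psi_T(S)$ over stable set sequences of bounded total length (or bounded number of blocks), observe $\Psi_0(S)=1\leq\Phi(S)$, and show by induction on the truncation parameter $T$ that $\Psi_{T+1}(S)=1+\sum_I p_I\,\Psi_T(I)\leq 1+\sum_I p_I\,\Phi(I)\leq\Phi(S)$; then pass to the limit $T\to\infty$. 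You should replace the induction on $|F\setminus S|$ by this truncation argument -- this is precisely how Harvey and Vondr\'ak make the fixed-point bound rigorous. Beyond this, your verifications that the three candidate $\Phi$'s are super-solutions are sketched at a level that leaves the real work implicit (particularly the Shearer case, where establishing the required one-sided inequality among the $q_I$ and combining it with $q_I(p)\geq 0$ is the technical heart of~\cite[Corollary 5.28]{HarveyVondrak15Ar}); for a self-contained proof those details would need to be filled in, whereas the paper simply cites the relevant results.
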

	\begin{proof}
		The corresponding proofs can be found in \cite[Lemma 5.7]{HarveyVondrak15,HarveyVondrak15Ar}, \cite[Theorem 5.10]{HarveyVondrak15} and \cite[Corollary 5.28]{HarveyVondrak15Ar}, respectively.
	\end{proof}
	
	The following theorem gives an exponential tail bound on the weighted sum of independent set sequences, if we assume the probabilities satisfy the corresponding criterion with $\epsilon$-slack.
	
	\begin{theorem}\label{thm:slack}
		Let $\epsilon \in \mathbb{R}_+$, and $p'_f=p_f\cdot(1+\epsilon)$. If $(p'_f)_{f\in F}\in(0,1)^{|F|}$ satisfies the condition
		\begin{align*}
		&\text{\textbullet } \text{ \eqref{eq:ALC}, and }
		T=-\sum_{f\in F}\ln\left(1-x_f\right)/\ln(1+\epsilon), \text{ or}\\
		&\text{\textbullet } \text{ \eqref{eq:CEC}, and }
		T=\phantom{-}\sum_{f\in F}\ln\left(1+y_f\right)/\ln(1+\epsilon), \text{ or}\\
		&\text{\textbullet } \text{ \eqref{eq:SHC}, and }
		T=\phantom{\sum_{f\in F}}-\ln\left(q_\varnothing(p'_f)\right)/\ln(1+\epsilon), \\
		\end{align*}
		then
		$$
		\sum_{{\cal I}\in{\cal IS}_{\geq(T+r)}} p_{\cal I} \leq (1+\epsilon)^{-r}.
		$$
	\end{theorem}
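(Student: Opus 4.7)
The plan is to reduce the tail bound to the total-sum bound of Lemma~\ref{lemma:pathEsts} applied to the slacked weights $(p'_f)$. The key observation is a simple rescaling identity: for any stable set sequence ${\cal I}=(I_1,\ldots,I_s)$ we have
\begin{equation*}
p'_{\cal I} \,=\, \prod_{i=1}^{s} p'_{I_i} \,=\, \prod_{i=1}^{s} (1+\epsilon)^{|I_i|}\,p_{I_i} \,=\, (1+\epsilon)^{\lVert{\cal I}\rVert}\,p_{\cal I},
\end{equation*}
so that $p_{\cal I} = (1+\epsilon)^{-\lVert{\cal I}\rVert}\,p'_{\cal I}$.

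Next I would apply Lemma~\ref{lemma:pathEsts} directly to the vector $(p'_f)$, which by hypothesis satisfies the relevant condition from Definition~\ref{def:Conditions}. This gives a finite bound $B$ on $\sum_{{\cal I}\in{\cal IS}} p'_{\cal I}$, where $B$ is $\prod_f 1/(1-x_f)$, $\prod_f(1+y_f)$, or $1/q_\varnothing(p'_f)$ in the three cases respectively. The crucial bookkeeping step is to verify that the three definitions of $T$ in the statement are precisely calibrated so that $(1+\epsilon)^{T}=B$ in each case: this is immediate by taking logarithms of the three expressions and comparing with the formulas for $T$.

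Putting this together, for any threshold $T+r$,
\begin{align*}
\sum_{{\cal I}\in{\cal IS}_{\geq(T+r)}} p_{\cal I}
&\,=\, \sum_{{\cal I}\in{\cal IS}_{\geq(T+r)}} (1+\epsilon)^{-\lVert{\cal I}\rVert}\, p'_{\cal I}\\
&\,\leq\, (1+\epsilon)^{-(T+r)} \sum_{{\cal I}\in{\cal IS}} p'_{\cal I}\\
&\,\leq\, (1+\epsilon)^{-(T+r)}\cdot B \,=\, (1+\epsilon)^{-r},
\end{align*}
which is the desired inequality.

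There is essentially no hard step here: once one notices the clean factorisation $p'_{\cal I} = (1+\epsilon)^{\lVert{\cal I}\rVert} p_{\cal I}$, the rest is just an algebraic consequence of Lemma~\ref{lemma:pathEsts} together with the definitions of $T$. The only thing worth being careful about is checking that $p'_f \in (0,1)$ (which is implicit in the hypothesis that $(p'_f)_{f\in F}$ satisfies the condition), so that Lemma~\ref{lemma:pathEsts} is legitimately applicable to the slacked weights; otherwise the summation bound in that lemma would not be available.
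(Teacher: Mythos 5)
Your proof is correct and takes essentially the same approach as the paper: both rely on the factorisation $p'_{\cal I}=(1+\epsilon)^{\lVert{\cal I}\rVert}p_{\cal I}$, pull out $(1+\epsilon)^{-(T+r)}$ using $\lVert{\cal I}\rVert\geq T+r$, enlarge the sum to all of ${\cal IS}$, and invoke Lemma~\ref{lemma:pathEsts} for $(p'_f)$. Your explicit verification that $(1+\epsilon)^T$ matches the bound from the lemma is a nice clarification that the paper leaves implicit.
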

	\begin{proof}
		\begin{align*}
		\sum_{{\cal I}\in{\cal IS}_{\geq(T+r)}}
		&\leq \sum_{{\cal I}\in{\cal IS}_{\geq(T+r)}}
		\prod_{I\in{\cal I}}\prod_{f\in I}((1+\epsilon)p_f) \cdot (1+\epsilon)^{-(T+r)}\\
		&\leq (1+\epsilon)^{-(T+r)} \cdot \sum_{{\cal I}\in{\cal IS}} 
		\prod_{I\in{\cal I}}\prod_{f\in I}p'_f \\
		&= (1+\epsilon)^{-(T+r)} \cdot  \sum_{{\cal I}\in{\cal IS}} p'_{\cal I} \\
		&\leq (1+\epsilon)^{-r}
		\end{align*}
		where in the last step we used Lemma~\ref{lemma:pathEsts}.
	\end{proof}

	The next theorem is an improved version of the above, and uses results of \cite{HarveyVondrak15Ar} showing that there is a considerable slack naturally appearing.
	
	For ease of notation let us introduce the shorthand $q_I:=q_I(p_f)$.
	
	\begin{theorem}\label{thm:TailWithoutSlack}
		If $(p_f)_{f\in F}\in(0,1)^{|F|}$ satisfies the condition
		\begin{align*}
		&\text{\textbullet } \text{ \eqref{eq:ALC}, and }
		T=4\cdot\!\left(\sum_{f\in F} \frac{x_f}{1-x_f}\right)\!\cdot\!
		\left(t+1+\min\!\left[\ln\left(\frac{1}{q_{\varnothing}}\right),\sum_{f\in F}\ln\left(\frac{1}{1-x_f}\right)\right]\right)
		,\text{ or}   \\
		&\text{\textbullet } \text{ \eqref{eq:CEC}, and }
		T=4\cdot\!\left(\sum_{f\in F} y_f\right)\!\cdot\!
		\left(t+1+\min\!\left[\ln\left(\frac{1}{q_{\varnothing}}\right),\sum_{f\in F}\ln\left(1+y_f\right)\right]\right)
		,\text{ or}   \\
		&\text{\textbullet } \text{ \eqref{eq:SHC}, and }
		T=4\cdot\!\left(\sum_{f\in F} \frac{q_{\{f\}}}{q_{\varnothing}}\right)\!\cdot\!
		\left(t+1+\min\!\left[\ln\left(\frac{1}{q_{\varnothing}}\right),\sum_{f\in F}\ln\left(1+\frac{q_{\{f\}}}{q_{\varnothing}}\right)\right]\right),  
		\end{align*} 
		then
		$$
		\sum_{{\cal I}\in{\cal IS}_{\geq T}} p_{\cal I} \leq e^{-t},
		$$			
	\end{theorem}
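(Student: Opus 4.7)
My approach is to reduce Theorem~\ref{thm:TailWithoutSlack} to Theorem~\ref{thm:slack} by extracting natural slack from each of the conditions ALC, CEC, and SHC. The central observation is that each of these sufficient conditions admits a quantifiable $\epsilon$-slack---one can rescale all probabilities to $p'_f=(1+\epsilon)p_f$ while preserving the condition---with $\epsilon$ inversely proportional to the corresponding expected-resampling rate. Concretely, I would set $\epsilon = c/\sum_f \frac{x_f}{1-x_f}$ in the ALC case, $\epsilon = c/\sum_f y_f$ in the CEC case, and $\epsilon = c/\sum_f \frac{q_{\{f\}}}{q_\varnothing}$ in the SHC case, with $c$ a small universal constant that will yield the factor of $4$ in the final bound.

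First I would verify that $p'_f=(1+\epsilon)p_f$ satisfies the corresponding condition. For ALC this is done by picking modified auxiliary variables $x'_f$ slightly above $x_f$ (so that $p'_f/x'_f$ remains controlled), writing out the defining inequality $p'_f/x'_f \leq \prod_{f'\in\Gamma(f)}(1-x'_{f'})$ and checking it via a first-order Taylor expansion in $\epsilon$. For CEC the argument is analogous with a $y'_f$ close to $y_f$. The SHC case is the most delicate, since it requires not merely $q_\varnothing(p')>0$ but $q_I(p') \geq 0$ for every $I \in \Ind(F)$; here I would expand $q_I$ using the identity $p_f \partial_{p_f} q_I = -q_{I \cup \{f\}}$ (valid whenever $f$ is compatible with $I$), conclude that $q_\varnothing(p') \geq q_\varnothing(p)/2$ for my choice of $\epsilon$, and establish the remaining nonnegativities from the same first-order estimate combined with the assumed nonnegativity of the higher-order $q_J$'s.

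Second, once the perturbed condition has been checked, I would invoke Theorem~\ref{thm:slack} with the chosen $\epsilon$ and select $r \geq t/\ln(1+\epsilon)$, which is at most $2t/\epsilon$ for $\epsilon\in(0,1]$; this gives the desired tail $(1+\epsilon)^{-r}\leq e^{-t}$. The total length $T+r$ becomes approximately $\epsilon^{-1}\bigl(t + \ln(\sum_{{\cal I}\in{\cal IS}} p'_{\cal I})\bigr)$, and the inner ``min'' in the theorem statement appears naturally because Lemma~\ref{lemma:pathEsts} supplies two competing upper bounds on $\sum p'_{\cal I}$---one via the auxiliary variables of the condition in use and one via $1/q_\varnothing(p')$---so we take whichever is smaller. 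Substituting the chosen $\epsilon$ back in recovers the stated prefactor $4\bigl(\sum_f \ldots \bigr)$.

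The main obstacle is the slack-extraction step for SHC: establishing $q_I(p')\geq 0$ for \emph{every} independent set $I$ (rather than just $I=\varnothing$) under the perturbation, which requires careful use of the recursive combinatorial structure of the independent set polynomial in the spirit of the arguments in Harvey--Vondr\'ak's arXiv version. The generous factor of $4$ in the theorem statement provides enough headroom to absorb the higher-order terms in the first-order expansions and to keep the analysis clean.
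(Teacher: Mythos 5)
Your high-level plan---choose an $\epsilon$-slack inversely proportional to the relevant sum, show $p'_f=(1+\epsilon)p_f$ still satisfies a sufficient condition, invoke Theorem~\ref{thm:slack}, and extract the ``min'' from the two bounds of Lemma~\ref{lemma:pathEsts}---matches the paper's skeleton. Your treatment of the \eqref{eq:SHC} case is also essentially right: the identity $p_f\partial_{p_f}q_I=-q_{I\cup\{f\}}$ and the conclusion $q_\varnothing(p')\ge q_\varnothing(p)/2$ for $\epsilon=q_\varnothing/\bigl(2\sum_f q_{\{f\}}\bigr)$ are exactly what the paper cites as [HV15, Lemma 5.33], and the comparison between $1/q_\varnothing(p')$ and $\prod_f\bigl(1+q_{\{f\}}/q_\varnothing\bigr)$ that generates the ``min'' is [HV15, Claim 5.23]. (Note also that $r$ in Theorem~\ref{thm:slack} must satisfy $r\ge t/\ln(1+\epsilon)$, not $r=t\ln(1+\epsilon)$; the bound $1/\ln(1+\epsilon)\le 2/\epsilon$ for $\epsilon\in(0,1]$ then gives the factor you want.)

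The genuine gap is in the \eqref{eq:ALC} and \eqref{eq:CEC} cases. You propose to verify that $(1+\epsilon)p_f$ satisfies the \emph{same} condition by nudging the witness $x_f$ (resp.\ $y_f$) and Taylor expanding. This does not work in general. If \eqref{eq:ALC} holds with equality, $p_f=x_f\prod_{f'\in\Gamma(f)}(1-x_{f'})$, and you set $x'_f=(1+\alpha)x_f$, the defining inequality becomes, to first order, $\epsilon\le\alpha\bigl(1-\sum_{f'\in\Gamma(f)}\tfrac{x_{f'}}{1-x_{f'}}\bigr)$; the sign of the parenthesis can be positive, negative, or zero depending on $f$ (at the symmetric boundary $x_f=1/d$, $|\Gamma(f)|=d-1$ it vanishes identically), so no uniform perturbation of the $x_f$'s works, and neither does decreasing them. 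The same obstruction appears for \eqref{eq:CEC}. The paper sidesteps this entirely: it proves only the \eqref{eq:SHC} case by the slack argument (where $q_\varnothing>0$ is a strict, hence perturbable, inequality), and then \emph{reduces} \eqref{eq:ALC} and \eqref{eq:CEC} to \eqref{eq:SHC} via the pointwise comparisons $q_{\{f\}}/q_\varnothing\le x_f/(1-x_f)$ [HV15, Cor.\ 5.38] and $q_{\{f\}}/q_\varnothing\le y_f$ [HV15, Cor.\ 5.43], substituting these into the already-proved SHC bound. This reduction is what lets the final bound be expressed in the original $x_f$, $y_f$ rather than perturbed ones, and it is the step your proof is missing.
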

	\begin{proof} In the following proof it will be convenient to prove the statements in reversed order:
		\\ \noindent\textbullet\,\,\eqref{eq:SHC}: 
		Let $\epsilon=\frac{q_\varnothing}{2\cdot \sum_{f\in F}q_{\{f\}}}$, and $(p'_f)_{f\in F}=(1+\epsilon)\cdot (p_f)_{f\in F}$. Then $(p'_f)_{f\in F}$ satisfies \eqref{eq:SHC} and $q_\varnothing(p'_f)\geq q_\varnothing/2$ as shown in \cite[Lemma 5.33]{HarveyVondrak15Ar}. 
		As shown in \cite[Claim 5.23]{HarveyVondrak15Ar} 
		$\frac{1}{q_\varnothing}\leq \prod_{f\in F}\left(1+\frac{q_{\{f\}}}{q_\varnothing}\right)$,
		and thus $\frac{1}{q_\varnothing(p'_f)}\leq2\cdot\min\left[\frac{1}{q_\varnothing},\prod_{f\in F}\left(1+\frac{q_{\{f\}}}{q_\varnothing}\right)\right]$.
		
		Let $r=\ln(1+\epsilon)\cdot t$, then applying 
		Theorem~\ref{thm:slack} completes the proof using the additional observation that $\epsilon\in(0,1]$ 
		and thus $\frac{1}{\ln(1+\epsilon)}\leq \frac{2}{\epsilon}$.
		\\ \noindent\textbullet\,\,\eqref{eq:CEC}: 	
		Due to Proposition~\ref{prop:ShearerImplied} \eqref{eq:SHC} holds,
		moreover $\forall f\in F: \frac{q_{\{f\}}(p_f)}{q_{\varnothing}(p_f)}\leq y_f$
		as shown in \linebreak\cite[Corollary 5.43]{HarveyVondrak15Ar}. Substituting this formula into the \eqref{eq:SHC} case gives the required result.$\kern-4mm$	
		\\ \noindent\textbullet\,\,\eqref{eq:ALC}: 	
		Due to Proposition~\ref{prop:ShearerImplied} \eqref{eq:SHC} holds,
		moreover $\forall f\in F: \frac{q_{\{f\}}(p_f)}{q_{\varnothing}(p_f)}\leq \frac{x_f}{1-x_f}$
		as shown in \linebreak\cite[Corollary 5.38]{HarveyVondrak15Ar}. Substituting this formula into the \eqref{eq:SHC} case gives the required result.$\kern-4mm$	
	\end{proof}
	\begin{corollary}\label{cor:expected} Let 
		\begin{equation*}
		R = \sum_{k=1}^{\infty}\min\left(1,\sum_{{\cal I}\in{\cal IS}_k} p_{\cal I}\right).
		\end{equation*}
		
		If the vector of probabilities $(p_f)_{f\in F}\in(0,1)^{|F|}$ satisfies the condition
		\begin{align*}
		&\text{\textbullet } \text{ \eqref{eq:ALC}, then } % & &
		R \leq
		1+4\cdot\left(\sum_{f\in F} \frac{x_f}{1-x_f}\right)\cdot
		\left(1+\min\left[\ln\left(\frac{1}{q_{\varnothing}}\right),\sum_{f\in F}\ln\left(\frac{1}{1-x_f}\right)\right]\right).  \\
		&\text{\textbullet } \text{ \eqref{eq:CEC}, then } % & &
		R \leq
		1+4\cdot\left(\sum_{f\in F} y_f\right)\cdot
		\left(1+\min\left[\ln\left(\frac{1}{q_{\varnothing}}\right),\sum_{f\in F}\ln\left(1+y_f\right)\right]\right).  \\
		&\text{\textbullet } \text{ \eqref{eq:SHC}, then } % & &
		R \leq
		1+4\cdot\left(\sum_{f\in F} \frac{q_{\{f\}}}{q_{\varnothing}}\right)\cdot
		\left(1+\min\left[\ln\left(\frac{1}{q_{\varnothing}}\right),\sum_{f\in F}\ln\left(1+\frac{q_{\{f\}}}{q_{\varnothing}}\right)\right]\right). 	  
		\end{align*}	
		\ignore{		
			\begin{align*}
			%\begin{cases}
			&\text{\textbullet } \text{ \eqref{eq:ALC}, is at most } % & &
			1+4\cdot\left(\sum_{f\in F} \frac{x_f}{1-x_f}\right)\cdot
			\left(2+\min\sum_{f\in F}\ln\left(\frac{1}{1-x_f}\right)\right).  \\
			&\text{\textbullet } \text{ \eqref{eq:CEC}, is at most } % & &
			1+4\cdot\left(\sum_{f\in F} y_f\right)\cdot
			\left(2+\sum_{f\in F}\ln\left(1+y_f\right)\right).  \\
			&\text{\textbullet } \text{ \eqref{eq:SHC}, is at most } % & &
			1+4\cdot\left(\sum_{f\in F} \frac{q_{\{f\}}(p_f)}{q_{\varnothing}(p_f)}\right)\cdot
			\left(2+\sum_{f\in F}\ln\left(1+\frac{q_{\{f\}}(p_f)}{q_{\varnothing}(p_f)}\right)\right).  
			%\end{cases}	  
			\end{align*} }
	\end{corollary}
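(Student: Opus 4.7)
The plan is to reduce the statement directly to the tail bound Theorem~\ref{thm:TailWithoutSlack}. For each of the three conditions write $T(t) = 4A(t+1+B)$ where $A$ and $B$ are the condition-specific quantities appearing in that theorem (for example under \eqref{eq:SHC}, $A = \sum_{f\in F} q_{\{f\}}/q_\varnothing$ and $B = \min[\ln(1/q_\varnothing),\sum_{f\in F}\ln(1+q_{\{f\}}/q_\varnothing)]$). Theorem~\ref{thm:TailWithoutSlack} then says that the cumulative tail $T_k := \sum_{{\cal I}\in{\cal IS}_{\geq k}}p_{\cal I}$ satisfies $T_{\lceil T(t)\rceil}\leq e^{-t}$. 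Specialising to $t=0$ gives a concrete cutoff $k_0 := \min\{k\in\mathbb{N}:T_k\leq 1\}$ with $k_0\leq \lceil T(0)\rceil$.

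With this in hand I would split the definition of $R$ at $k_0$. Write $S_k = \sum_{{\cal I}\in{\cal IS}_k}p_{\cal I}$, so that $T_k = \sum_{j\geq k}S_j$. For indices $k<k_0$ use the trivial bound $\min(1,S_k)\leq 1$, which contributes at most $k_0-1$. For indices $k\geq k_0$ use $\min(1,S_k)\leq S_k$, so the tail sum is $\sum_{k\geq k_0}S_k = T_{k_0}\leq 1$ by the choice of $k_0$. Combining,
\begin{equation*}
R \;\leq\; (k_0-1) + T_{k_0} \;\leq\; \bigl(\lceil T(0)\rceil-1\bigr) + 1 \;\leq\; T(0)+1 \;=\; 1+4A(1+B),
\end{equation*}
which is exactly the claimed bound once $A$ and $B$ are substituted for each of \eqref{eq:ALC}, \eqref{eq:CEC}, \eqref{eq:SHC}.

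There is really no obstacle beyond bookkeeping: all the work is already done in Theorem~\ref{thm:TailWithoutSlack}, so the only thing to be careful about is that $T(0)$ need not be an integer, which is handled by the ceiling argument above (the worst case $\lceil T(0)\rceil - 1 + 1 = \lceil T(0)\rceil\leq T(0)+1$). Edge cases such as $k_0=1$ (in which the ``early'' sum is empty and $R\leq T_1\leq 1\leq 1+4A(1+B)$) and the applicability under \eqref{eq:ALC} or \eqref{eq:CEC} (which imply \eqref{eq:SHC} by Proposition~\ref{prop:ShearerImplied}, so that the underlying tail estimate is valid in all three cases) require only a brief comment.
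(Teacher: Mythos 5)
Your proposal is correct and follows essentially the same route as the paper: specialise Theorem~\ref{thm:TailWithoutSlack} to $t=0$, split the sum $R$ at a cutoff around $\lceil T(0)\rceil$, bound the early terms by $1$ each and the tail by the weighted stable-set sum, then collect. The only cosmetic difference is that you introduce the potentially smaller cutoff $k_0=\min\{k:T_k\leq 1\}$ as an intermediary, whereas the paper splits directly at $\lceil T\rceil$; both give $R\leq T(0)+1$.
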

	\begin{proof}
		Choose $t=0$ and let $T$ be the number we get from Theorem~\ref{thm:TailWithoutSlack}, then 
		\begin{align*}
		\sum_{k=1}^{\infty}\min\left(1,\sum_{{\cal I}\in{\cal IS}_k} p_{\cal I}\right) 
		\leq T+\sum_{k=\lceil T\rceil}^{\infty}\min\left(1,\sum_{{\cal I}\in{\cal IS}_k} p_{\cal I}\right) 	
		\leq T+\sum_{{\cal I}\in{\cal IS}_{\geq \lceil T\rceil}} p_{\cal I}	
		\leq T+e^{-t} = T+1.
		\end{align*}
	\end{proof}
	 
	\section{Ambiguity in singular value decomposition}\label{apx:uniqueMap}
	In this appendix we show that for any square matrix $A=W\Sigma U^\dagger$ the linear map $W\sgn{\Sigma}U^\dagger$ is well defined, i.e., it is independent of the choice of unitaries in the singular value decomposition. 
	
	Let $A\in\mathbb{C}^{k\times k}$, $\sigma_1\geq \sigma_2\geq \ldots\geq \sigma_k$ its singular values with multiplicity, and $\Sigma=\text{Diag}(\sigma_{1\ldots n})$. 
	Suppose $A=W_1\Sigma U_1^\dagger$ and $A=W_2\Sigma U_2^\dagger$ are two singular value decompositions, with 
	$X^\dagger_i=X_i^{-1}$ for $X\in\{W,U\},i\in\{1,2\}$.
	Then $W_1\sgn{\Sigma}U_1^\dagger=W_2 \sgn{\Sigma}U_2^\dagger$, as the following shows:
	\begin{align*}
	U_1\Sigma^2 U_1^\dagger=A^\dagger &A=U_2\Sigma^2 U_2^\dagger\\
	\Downarrow& \text{}\\
	U_2^\dagger U_1\Sigma^2=&\,\Sigma^2 U_2^\dagger U_1\\
	\Updownarrow& \\
	\left[U_2^\dagger U_1,\Sigma^2\right]=&\,0\\	
	\Updownarrow& \, (0\preceq\Sigma)\\
	\left[U_2^\dagger U_1,\Sigma\right]=&\,0\left(=\left[U_2^\dagger U_1,\sgn{\Sigma}\right]\right)\\
	\Downarrow& \text{}\\
	W_1\Sigma U_1^\dagger=W_2\Sigma U_2^\dagger=W_2\Sigma U_2^\dagger U_1 U_1^\dagger
	=&\,W_2 U_2^\dagger U_1\Sigma U_1^\dagger\\
	\Downarrow& \text{}\\
	W_1\Sigma=&\,W_2 U_2^\dagger U_1\Sigma\\	
	\Updownarrow& \text{}\\
	W_1\sgn{\Sigma}=&\,W_2 U_2^\dagger U_1\sgn{\Sigma}\\
	\Downarrow& \text{}\\
	W_1\sgn{\Sigma}U_1^\dagger=&\,W_2 U_2^\dagger U_1\sgn{\Sigma}U_1^\dagger
	=W_2 \sgn{\Sigma}U_2^\dagger U_1 U_1^\dagger=W_2 \sgn{\Sigma}U_2^\dagger\\	
	\end{align*}
	\anote{Uniqueness of singular value decomposition: http://math.stackexchange.com/questions/644327/how-unique-on-non-unique-are-u-and-v-in-singular-value-decomposition-svd}	
	 
    \section{Decoherence}\label{apx:Decoherence}
	In this appendix we illustrate how decoherence arises from our weak measurement procedure (Algorithm~\ref{alg:measurement}).
    We study the ideal limiting case where the weakness parameter of the measurement $\theta$ is infinitesimally small,
    and we do infinitely many repetitions of the weak and strong measurements. Our argument is hand-wavy but can be made precise by taking the appropriate limits.
    
    Let $\Pi_f \Pi_V=W\Sigma U^\dagger$, $\sigma_i=\Sigma_{ii}$, and  $w_i=W_{.i}$, $u_i=U_{.i}$ be the $i$-th column of $W$ and $U$ respectively.
    %and $\cos(\alpha_j)=\braket{w_j}{u_j}=\sigma_j$. 
    Let $\ket{\psi^t}$ denote the unnormalised state after $t$ weak and strong measurements, 
    corresponding to the case when no positive $\Pi_f$ neither negative $\Pi_V$ measurement outcomes were observed, and let
    $\ket{\psi^t}=\sum_{j}a_j^t \ket{u_j}$, where $a_j^t$ is the amplitude of $\ket{u_j}$ in $\ket{\psi^t}$.
    The argument described in the introduction shows, that $\ket{\psi^{t+1}}\approx\Pi_V\ket{\psi^t}-\theta/2\Pi_V\Pi_f\ket{\psi^t}$.
    Assuming that $\Pi_V\ket{\psi^0}=\ket{\psi^0}$, we get
    $\sum_{j}a_j^{t+1} \ket{u_j}=\ket{\psi^{t+1}}\approx\ket{\psi^t}-\theta/2\Pi_V\Pi_f\Pi_V\ket{\psi^t}
    =\sum_{j}\left(1-\frac{\theta}{2}\sigma_j^2\right)a_j^t\ket{u_j}$. For small $\theta$ we can move to a continuous time approximation, and use the differential equation $\dot{a}_j\approx -\frac{\theta}{2}\sigma_j^2 a_j$, which yields the solution $a^t_j\approx e^{-\frac{\theta}{2}\sigma_j^2t}a^0_j$.
    
    Let $\rho^t$ denote the unnormalised density operator corresponding to cases when a positive $\Pi_f$ measurement outcome was observed in the $t$-th iteration, and therefore Algorithm~\ref{alg:measurement} terminated.
    Then $\rho^t=\sqrt{\theta}\Pi_f\ket{\psi^t}\bra{\psi^t}\Pi_f\sqrt{\theta}
    =\theta\Pi_f\Pi_V\ket{\psi^t}\bra{\psi^t}\Pi_V\Pi_f
    =\theta W\Sigma U^\dagger\ket{\psi^t}\bra{\psi^t}U\Sigma W^\dagger$.
    Then $\rho^{t+1}_{ij}:=\bra{w_i}\rho^{t+1}\ket{w_j}
    =\theta\left(\sigma_ia^{t}_i\right)\cdot\left(\sigma_ja^{t}_j\right)^*
    \approx\theta\sigma_i\sigma_je^{-\frac{\theta}{2}\left(\sigma_i^2+\sigma_j^2\right)t}
     a^{0}_i\cdot\left(a^{0}_j\right)^*$. 
    We can approximate $\rho_{ij}^{\text{out}}:=\sum_{t=1}^{\infty}\rho^t_{ij}
    \approx \int_{0}^\infty\theta\sigma_i\sigma_je^{-\frac{\theta}{2}\left(\sigma_i^2+\sigma_j^2\right)t}
    a^{0}_i\cdot\left(a^{0}_j\right)^*\, dt=\frac{2\sigma_i\sigma_j}{\sigma_i^2+\sigma_j^2}\rho^{\text{in}}_{ij}$, 
    where we defined $\rho^{\text{in}}_{ij}=\bra{u_i}\rho^{\text{in}}\ket{u_j}$ 
    with $\rho^{\text{in}}=\ket{\psi^0}\bra{\psi^0}$. 
    The change of basis $u_i\rightarrow w_j$ in $\rho^{\text{in}}\rightarrow\rho^{\text{out}}$ corresponds to the unitary map $WU^\dagger$, which we described as the exact quantum channel.
    
    This little calculation also explains, that for infinitesimally small $\theta$ the procedure is always successful, and projects out the complete overlap with $\Pi_f$ if repeated indefinitely. Also it is converging exponentially to its infinite version. 
	The strength of the decoherence depends on the difference between the singular values, and does not happen at all if the singular values equal, since then $\frac{2\sigma_i\sigma_j}{\sigma_i^2+\sigma_j^2}=1$. Note that this phenomenon is only present for non-commuting projectors, since in the commuting case $\sigma_i\in\{0,1\}$.
    \onote{For next version, consider adding an example about decoherence, which is commented below}
    %\section{Examples and Counterexamples}
    
    % \subsection{Decoherence}
    % 	Let $H=\text{Span}(e_1,e_2,e_3,e_4)$, $V=\text{Span}(e_1,e_2,e_4)$, 
    % 	$f_1=e_1$, $f_2=1/2\cdot e_2 + \sqrt{3}/2\cdot e_3$, $\text{im}(\Pi_f)=\text{Span}(f_1,f_2)$,
    % 	and $\psi = 1/\sqrt{2}(e_1+e_2)$.
    % 	So what we would get from the ideal measurement and unitary map for outcome "bad" is a pure state proportional to $\psi' = 1/\sqrt{2}(f_1+f_2)$,
    % 	as $\text{im}(\Pi_V\Pi_f)=\text{Span}(e_1,e_2)$ and the (Marriott-Watrous) pair correspondence is $(e_1,f_1)$ and $(e_2,f_2)$.
    
    % 	But what we get after one weak measurement is $\Pi_f\psi = \sqrt{\theta/2}\cdot f_1 + \sqrt{\theta/2}/2\cdot f_2$.
    % 	For the first $1/\theta/100$ measurements the "bad" outcome is very similar to this.
    % 	So if we repeat this procedure around $1/\theta/100$ times the "bad" outcome would look roughly a pure state, 
    % 	of the form $1/10\left(\sqrt{2}\cdot f_1 + \sqrt{2}/2\cdot f_2\right)$. Clearly this can never reach the pure state $1/\sqrt{2}(f_1+f_2)$, 	due to the extremality of pure states among density operators. 
    
    % 	Elaborating more on this, one can see, that after $\sim 1/\theta$ iterations things change and for the "bad" outcomes as one gets higher amplitude on $f_2$, as $f_1$ is already almost completely projected away. In the limit the overall density matrix will have $1/2$ for the diagonal elements corresponding to $f_1$ and $f_2$, but it is not going to be a pure state, in particular the coherence terms will be decreased.
    
    % 	Decoherence happens in the same fashion when one uses Marriott-Watrous instead of weak measurements.
    
	\section{Jordan's Theorem on two orthogonal projectors}\label{apx:Jordan}
	
	In order to get insight to the interplay between two orthogonal projectors we refer to the work of Camille Jordan \cite{cJordan1875} from 1875. The structure theorem we invoke also plays an important role in many other results form quantum computation,
	often in an implicit way, see e.g. \cite{MarriottWatrous,SzegedyQuantumMarkov}. A modern treatment of the following theorem together with a proof can be found in \cite[Theorem VII.1.8]{BhatiaMatrixAnal97}:
	
	\begin{theorem} \label{thm:twoProjectors}
		For any two orthogonal projectors $\Pi_f,\Pi_V$ acting on the Hilbert space $\mathcal{H}$ there exists an orthogonal decomposition of $\mathcal{H}$ into one-dimensional and two-dimensional subspaces that are invariant under both $\Pi_f$ and $\Pi_V$. Moreover,
		inside each two-dimensional subspace, $\Pi_f$ and $\Pi_V$ are rank-one projectors (in other words, inside each two-dimensional subspace there are two unit vectors $\ket{w_i}$ and $\ket{u_i}$ such that $\Pi_f$ projects on $\ket{w_i}$ and $\Pi_V$ projects on	$\ket{u_i}$), and $|\braket{w_i}{u_i}|\notin \{0,1\}$.
	\end{theorem}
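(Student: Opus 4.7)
The plan is to build the decomposition from the spectral resolution of the positive semidefinite Hermitian operator $T=\Pi_V\Pi_f\Pi_V$. Since $0\preceq T\preceq \Pi_V$ its spectrum lies in $[0,1]$ and is supported inside $\im(\Pi_V)$, so I would begin by choosing an orthonormal eigenbasis $\{\ket{u_i}\}$ of $\im(\Pi_V)$ for $T$ with eigenvalues $\lambda_i\in[0,1]$.

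For each $i$ with $\lambda_i\in(0,1)$ I would define $\ket{w_i}:=\Pi_f\ket{u_i}/\sqrt{\lambda_i}$. A direct computation using $\Pi_V\ket{u_i}=\ket{u_i}$ and $\Pi_V\Pi_f\ket{u_i}=\lambda_i\ket{u_i}$ shows that $\ket{w_i}$ is a unit vector with $\Pi_f\ket{w_i}=\ket{w_i}$, $\Pi_V\ket{w_i}=\sqrt{\lambda_i}\ket{u_i}$, and $\braket{u_i}{w_i}=\sqrt{\lambda_i}\in(0,1)$. Consequently $\mathrm{span}(\ket{u_i},\ket{w_i})$ is a two-dimensional subspace invariant under both projectors, on which $\Pi_V$ and $\Pi_f$ act as the rank-one orthogonal projectors $\ket{u_i}\bra{u_i}$ and $\ket{w_i}\bra{w_i}$ with overlap $|\braket{u_i}{w_i}|=\sqrt{\lambda_i}\notin\{0,1\}$, as required. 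The eigenvectors with $\lambda_i=0$ lie in $\im(\Pi_V)\cap\ker(\Pi_f)$, while those with $\lambda_i=1$ satisfy $\spnorm{\Pi_f\ket{u_i}}=\spnorm{\ket{u_i}}$ and hence are fixed by $\Pi_f$, so they lie in $\im(\Pi_V)\cap\im(\Pi_f)$; in both cases the span of the single eigenvector is a one-dimensional invariant subspace on which $\Pi_V,\Pi_f$ act as scalars.

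I would then verify mutual orthogonality of all of these blocks. Pairwise orthogonality of the $\ket{u_i}$'s is immediate, and a one-line calculation gives $\braket{u_i}{w_j}=\braket{u_i}{\Pi_V\Pi_f\Pi_V u_j}/\sqrt{\lambda_j}=\sqrt{\lambda_j}\,\delta_{ij}$ and similarly $\braket{w_i}{w_j}=\delta_{ij}$, which forces the two-dimensional blocks to be mutually orthogonal as well as orthogonal to the one-dimensional blocks sitting inside $\im(\Pi_V)$.

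It remains to handle the residual subspace $S^\perp$, the orthogonal complement of every block constructed so far. Being orthogonal to every $\ket{u_i}$, $S^\perp$ is contained in $\ker(\Pi_V)$; the identity $\braket{\Pi_f v}{u_i}=\sqrt{\lambda_i}\braket{v}{w_i}$ further shows that $S^\perp$ is $\Pi_f$-invariant, so diagonalising $\Pi_f|_{S^\perp}$ splits it into one-dimensional $\Pi_f$-eigenspaces which are automatically $\Pi_V$-invariant because $\Pi_V$ vanishes on $S^\perp$. Combining the three families of blocks gives the desired decomposition. The main bookkeeping hurdle is confirming they jointly span $\mathcal{H}$, which follows from the orthogonality relations above by a routine dimension count.
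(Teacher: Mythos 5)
The paper does not actually prove Theorem~\ref{thm:twoProjectors}; it cites a reference (Bhatia's \emph{Matrix Analysis}, Theorem~VII.1.8) for the proof, and elsewhere relies on singular-value-decomposition arguments (as in Proposition~\ref{prop:subspaceIdentities}) that Jordan's theorem would reformulate geometrically. Your proposal therefore does not compete with a proof in the paper; it supplies a self-contained one, and it is correct. Diagonalizing $T=\Pi_V\Pi_f\Pi_V$ on $\im(\Pi_V)$, setting $\ket{w_i}=\Pi_f\ket{u_i}/\sqrt{\lambda_i}$ for $\lambda_i\in(0,1)$, verifying $\Pi_f\ket{w_i}=\ket{w_i}$, $\Pi_V\ket{w_i}=\sqrt{\lambda_i}\ket{u_i}$, $\braket{u_i}{w_j}=\sqrt{\lambda_j}\,\delta_{ij}$, $\braket{w_i}{w_j}=\delta_{ij}$, and treating $\lambda_i\in\{0,1\}$ as one-dimensional blocks is exactly the standard spectral route to Jordan's lemma (equivalently, the CS decomposition); the SVD approach in the paper's own Proposition~\ref{prop:subspaceIdentities} is essentially this in matrix form, since the $\lambda_i$ are the squares of the singular values of $\Pi_f\Pi_V$.

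One small stylistic point: your closing remark that ``confirming they jointly span $\mathcal{H}$ follows \ldots by a routine dimension count'' is unnecessary and slightly misleading. You \emph{define} $S^\perp$ as the orthogonal complement of the blocks already constructed, and you then decompose all of $S^\perp$ by diagonalizing $\Pi_f|_{S^\perp}$; thus $S\oplus S^\perp=\mathcal{H}$ is immediate and no counting argument is needed. It would also be cleaner to state explicitly the verification that $\Pi_f$ maps $S^\perp$ into itself: for $v\in S^\perp$ and any $i$ one has $\braket{\Pi_f v}{u_i}=\braket{v}{\Pi_f u_i}$, which vanishes because $\Pi_f u_i\in\{0,u_i,\sqrt{\lambda_i}w_i\}$ is always in $S$, and likewise $\braket{\Pi_f v}{w_i}=\braket{v}{w_i}=0$; this is what you gesture at with the identity you quote, but spelled out it makes the invariance of $S^\perp$ transparent.
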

	
	The theorem above gives a lot of insight to our algorithms, and explains why and how the singular values govern the behaviour of Algorithm~\ref{alg:measurement} -- compare Appendix~\ref{apx:Decoherence}. For example it shows that the unitary operation $WU^\dagger$ in Definition~\ref{def:exactChannel} basically performs a rotation inside each two-dimensional subspace, while sends $\ket{u_i}$ to $\ket{w_i}$. We can also use this theorem to give a more insightful proof of Proposition~\ref{prop:subspaceIdentities}. 
	%Finally one can understand why the weak measurement procedure introduces decoherence between the different two-dimensional subspaces.	
	
	Note that in the commuting case, when $\Pi_V$ and $\Pi_f$ commute, they can be diagonalized simultaneously, implying that all the subspaces are 1 dimensional, and therefore $WU^\dagger=\Id$. 
	\onote{Explain in more details the relation to $U$, and $\Pi_V - \Pi_{V_f}$.}
	%$\Pi_{V}$ is the span of all the one dimensional subspaces spanned by the two dimen $\Pi_V$    
	%One possible choice for $WU\dagger$ is the following: 
	%These 1 and 2 dimensional subspaces are invariant under $WU^\dagger$, and it rotates $\ket{\phi_i}$ to $\ket{\phi_i}$.      

	\section{A note on the commutativity of resampling operations}\label{apx:commExampla}
	A notion of commutativity plays a crucial role in the analysis of generalised Moser-Tardos algorithms in the unifying work of Kolmogorov~\cite{KolmogorovComm}. The commutativity of two resampling operations basically refers to the case when resampling two independent flaws in different orders gives the same result. We will now give an example showing  that this can fail to hold even if all projectors commute. Thus in this appendix we assume that all projectors commute.
	
	In the quantum setting, an individual resampling operation on flaw $f$ should consist of replacing qubits $b(f)$ by maximally mixed ones. This step would itself commute if applied on non-adjacent flaws, but this operation may also be followed by doing measurements on adjacent flaws. This measurement step is necessary to actually keep track of which flaws are present. So one step would be $\displaystyle\operatornamewithlimits{M}_{\Gamma^+(f)}\circ R_f$, where $\displaystyle\operatornamewithlimits{M}_{S}$ denotes the operation of measuring flaw $\Pi_f$ for all $f\in S$, and $\circ$ denotes composition.
	Since all projectors commute we may assume knowing all present flaws of $\rho$. Therefore, $\rho=\displaystyle\operatornamewithlimits{M}_{F}(\rho)$. 
	Since resampling qubits adjacent to $f$ does not affect flaws that are in $F\setminus\Gamma^+(f)$, 
	we may perform a non-destructive measurement on them, i.e.,
	$\displaystyle\operatornamewithlimits{M}_{F}\circ R_f(\rho)=\displaystyle\operatornamewithlimits{M}_{\Gamma^+(f)}\circ R_f(\rho)$
	if $\rho=\displaystyle\operatornamewithlimits{M}_{F}(\rho)$. 
	
	So the question we ask is whether for every set of commuting projectors $\{\Pi_f : f\in F\}$ and for all $a, b \in F$
	\begin{align}\label{eqn:resComm}
	\operatornamewithlimits{M}_{F}\circ R_{b}\circ 
	\operatornamewithlimits{M}_{F}\circ R_{a}\circ 
	\operatornamewithlimits{M}_{F}(\rho) \overset{?}{=}
	\operatornamewithlimits{M}_{F}\circ R_{a}\circ 
	\operatornamewithlimits{M}_{F}\circ R_{b}\circ 
	\operatornamewithlimits{M}_{F}(\rho) ?
	\end{align}
	
	Let $F=\{a,b,c\}$ and $\Pi^{loc}_a=\ket{1}\bra{1}_1\otimes \Id_{\{2\}}$, $\Pi^{loc}_b=\Id_{\{3\}}\otimes\ket{1}\bra{1}_4$ and 
	$\Pi^{loc}_c=(\frac{3}{5}\ket{00}_{2,3}+\frac{4}{5}\ket{11}_{2,3})\cdot(\frac{3}{5}\bra{00}_{2,3}+\frac{4}{5}\bra{11}_{2,3})$ 
	be projectors acting on qubits $b(a)=\{1,2\}$, $b(b)=\{3,4\}$ and $b(c)=\{2,3\}$ correspondingly (the numbers in subscripts denote corresponding qubits $1-4$). Then with $\rho = \ket{1111}\bra{1111}$ equation~\eqref{eqn:resComm} does not hold.
	Note that we cheat a bit here as we could have defined $b(a)=\{1\}$, $b(b)=\{4\}$, in which case the operations would naturally commute. However a basis change on qubits $\{1,2\}$ and $\{3,4\}$ could result in an entangled projector justifying $b(a)=\{1,2\}$ and $b(b)=\{3,4\}$, however it would probably also require changing $b(c)$ to $\{1,2,3,4\}=b(a)\cup b(b)$. 
	%There is probably a larger and more complicated example that can simultaneously overcome the artifacts: $b(f)$ not minimal or $b(f_1)\cup b(f_2)\subseteq b(f_3)$.
	
	We mention as a curiosity that when we tried disproving equation~\eqref{eqn:resComm} using Kitaev's toric code~\cite{KitaevToric} interestingly our simulations showed commutation in the resampling operations.
	
	\section{An example justifying our quantum channel definitions} \label{apx:WhyRotate}
	Throughout this appendix we are going to use $\circ$ for composition. Also for a projector $\Pi$ we interpret $\Pi\circ\rho$ as $\Pi\rho\Pi$. 

	In this appendix, we will give a ``correct'' algorithm, in the sense that it respects the loop invariant, and therefore successfully finds a ground state upon termination. The point of this algorithm is to explain why it is crucial to resample the qubits only after $\Pi_f$ is violated, and not only when the state has been found to have some overlap with $\Pi_f$ (by measuring $\Pi^{C\cap \{f\}}$).
	\begin{algorithm}[H]
		\caption{An alternative algorithm}\label{alg:alter}
		\begin{algorithmic}[1]
			\STATE {\bf input} constraints $\{\Pi_f\}_{f\in F}$
			\STATE set all qubits to the maximally mixed state, and mark all constraints as unchecked.
			\STATE {\bf while} there is a $\Pi_f$ which is unchecked {\bf do}
			
			\STATE ~~~~ measure $\Pi^{C \cup f}$ 
			\STATE ~~~~  {\bf if} the measurement was violated {\bf then}
			\STATE ~~~~~~~~  resample all qubits of $\Pi_f$, i.e., replace them by uniformly random qubits
            \STATE ~~~~~~~~  mark all constraints in $\Gamma^+(\Pi_f)$ as unchecked
            \STATE ~~~~ {\bf else}
            \STATE ~~~~~~~~ mark $\Pi_f$ as checked

			\STATE ~~~~  {\bf end if}
			\STATE {\bf end while}
			\STATE {\bf terminate with} ``SUCCESS" 		
		\end{algorithmic}
	\end{algorithm}
    %This algorithm is not efficient since there is no efficient implementation of  the measurement $\Pi^{C\cup f}$. 
    %This algorithm respects the loop invariant, and therefore only terminates if it finds a ground state. 
	The induction hypothesis for the loop invariant is that in the entrance to the while loop $\Pi^C \circ \rho = \rho$. We show that the induction hypothesis remains true after failed measurement, which is the interesting case: 
    \begin{align*}\Pi^{C \setminus \Gamma^+(f)} \circ R_{b(f)} \circ \left(\Id - \Pi^{C\cup f}\right) \circ  \rho &=  R_{b(f)} \circ \Pi^{C \setminus \Gamma^+(f)} \circ \left(\Pi^C - \Pi^{C\cup f}\right)  \circ \rho \\
    &=R_{b(f)} \circ \left(\Pi^{C} - \Pi^{C\cup f}\right) \circ \rho \\
    &=R_{b(f)} \circ \left(\Id - \Pi^{C\cup f}\right) \circ \rho 
    \end{align*}    
    In the first equation we used the fact that the resampling the qubits $b(f)$, and measuring $\Pi^{C \setminus \Gamma(f)}$ commute, since they act on different qubits; and the induction hypothesis.
    
    In the following example, the parameters that govern the number of resamplings in our main algorithm -- see Theorem~\ref{thm:generalR} -- are kept fixed, yet the number of resamplings in the suggested algorithm above are unbounded.

		There are only 2 qubits, and two projectors. $ \Pi_1=\ket{\psi}\bra{\psi}$
		acts on two qubits where $\ket{\psi} = \sqrt \epsilon \ket{00} +
		\sqrt{1-\epsilon}\ket{11}$ for some small $\epsilon$. $\Pi_2=\ket{1}\bra{1}$ acts
		only on the second qubit. The reader should verify that the only satisfying state of both projectors is $\ket{10}$, and therefore $\Pi^{1,2}=\ket{10}\bra{10}$.
		
		Suppose we start with  $\ket{\psi_{init}}$ in the state $\ket{00}$ or $\ket{01}$
		(which happens with probability $1/2$).
		When we test $\Pi^{1}$ the answer will almost always be ``checked", and
		the state might change a little bit if we started with $\ket{00}$ and won't
		change at all if we started with $\ket{01}$. When we test $\Pi^{1,2}$ the
		outcome will almost always be ``no", and the second qubit will be resampled.
		But this will \emph{not} help, as we (almost) go back to one of the initial states: we
		are very close to either $\ket{00}$ or $\ket{01}$ (and very far from the only accepted
		state $\ket{10}$).

	\section{Failed attempt towards an FPRAS}\label{apx:fpras}

	It seems fairly straightforward to adapt the progressive channel formalism to the setting of a quantum fully polynomial random approximation scheme (FPRAS); by this we mean the construction of a quantum algorithm, which for any given $\epsilon$, should find a state $\rho$ with support on energy levels below $\epsilon$, but can terminate with ``error" with probability $\frac{1}{2}$ (which can be reduced exponentially by repetition). The running time should grow polynomially in the input size and in $1/\epsilon$.  The challenge is to implement at the $t$'th iteration, a progressive quantum channel as in Definition~\ref{def:progressive_channel}, where we replace $\Pi_{V^C}$ with $\Pi_{V^C}^ {t \epsilon}$ which is the projection onto all eigenstates with energy up to $ t \epsilon$ with respect to $\sum_{f\in C} \Pi_f$, and similarly replace $\Pi_{V^{C\cup \{f\}}}$ with $\Pi_{V^{C\cup \{f\}}}^{(t+1)\epsilon}$, and replace $\Pi_{V^{C\setminus\Gamma^+(f)}}$ with $\Pi_{V^{C\setminus\Gamma^+(f)}}^{(t+1)\epsilon}$. This guarantees that if the number of channels applied in the algorithm is $T$, and the algorithm terminates successfully, the energy of the final state is below $T \epsilon$. 
	
	The main issue can be understood via the following example, which arises with the use of (weak or strong) measurements: Suppose $\ket{\psi}$ lies in $\Pi_{V^C}^ {t \epsilon}$. Now, suppose we do a (strong) measurement of $\Pi_f$, then we are tempted to think that $\Pi_f\ket{\psi}$ lies in $\Pi_{V^{C\setminus\Gamma^+(f)}}^ {t \epsilon}$. Surprisingly this need not to be true, as it can happen that $\ket{\psi}$ is not an element of the subspace $\Pi_{V^{C\setminus\Gamma^+(f)}}^ {t \epsilon}$.
	
	Next, we will show how another attempt to implement such a channel fails. Consider the adaptation to the exact quantum channel as in Definition~\ref{def:exactChannel}, where we use $\Pi_{V^{C\cup \{f\}}}^{(t+1)\epsilon}$ and $\Pi_{V^C}^ {(t+1) \epsilon}$ instead of their corresponding original definitions.
	
	This construction would not have property~\ref{it:Abad}. Essentially, the reason is that unlike before, $\Pi^\epsilon_V \Pi^\epsilon_{V \cup \{f \}} \neq  \Pi^\epsilon_{V \cup \{f \}}$. Therefore, it could be that a state has low energy with respect to a Hamiltonian, and that when the energy of that state is measured of with respect to a sub-Hamiltonian, its energy would be higher than before.

\end{document}